\newtheorem{theorem}{Theorem}%[section]
\newtheorem{corollary}[theorem]{Corollary}%[section]
\newtheorem{example}[theorem]{Example}%[section]
\newtheorem{conjecture}[theorem]{Conjecture}%[section]
\newtheorem{lemma}[theorem]{Lemma}%[section]
\newtheorem{definition}[theorem]{Definition}%[section]
\newtheorem{remark}[theorem]{Remark}%[section]
\newfont{\bbb}{msbm10 scaled 500}
\newfont{\bb}{msbm10 scaled 1100}
\newcommand{\FF}{\mbox{\bb F}}
\newcommand{\GG}{\mbox{\bb G}}
\newcommand{\cv}{{\bf c}}
\newcommand{\ev}{{\bf e}}
\newcommand{\fv}{{\bf f}}
\newcommand{\gv}{{\bf g}}
\newcommand{\rv}{{\bf r}}
\newcommand{\uv}{{\bf u}}
\newcommand{\vv}{{\bf v}}
\newcommand{\xv}{{\bf x}}
\newcommand{\yv}{{\bf y}}
\newcommand{\Gm}{{\bf G}}
\newcommand{\Ac}{{\cal A}}
\newcommand{\Bc}{{\cal B}}
\newcommand{\Cc}{{\cal C}}
\newcommand{\Fc}{{\cal F}}
\newcommand{\Gc}{{\cal G}}
\newcommand{\Jc}{{\cal J}}
\newcommand{\Mc}{{\cal M}}
\newcommand{\Oc}{{\cal O}}
\newcommand{\Pc}{{\cal P}}
\newcommand{\Rc}{{\cal R}}
\newcommand{\Sc}{{\cal S}}
\newcommand{\Xc}{{\cal X}}
\newcommand{\argmax}{\arg\!\max}
\definecolor{OXO-emph}{RGB}{153,0,0}
\newcommand\htext[1]{\textbf{\textcolor{OXO-emph}{#1}}}
\newcommand\ceilb[1]{\left\lceil #1 \right\rceil}
\newcommand\floorb[1]{\left\lfloor #1 \right\rfloor}
\newcommand{\algrule}[1][.2pt]{\par\vskip.5\baselineskip\hrule height #1\par\vskip.5\baselineskip}
\title{Architecture-aware Coding for Distributed Storage: Repairable Block Failure Resilient Codes}
\author{Gokhan Calis and O.~Ozan~Koyluoglu \thanks{The authors are with Department of Electrical and Computer Engineering, The University of Arizona. Email: \{gcalis, ozan\}@email.arizona.edu. This paper was in part presented at 2014 IEEE International Symposium on Information Theory
(ISIT 2014), Honolulu, HI, June 2014.}\thanks{This work is supported in part by the National Science Foundation under Grants No CCF-1563622 and  CNS-1617335.}}
\begin{document}

\maketitle

%%%%%%%%%%%%%%%%%%%%%%%%%%%%%%%%%%%%%%%%%%%%%%%%%%%%%%%%%%%%%%%%%%%%%%%%%%%%%%
%%%%%%%%%%%%%%%%%%%%%%%%%%%%%%%%%%%%%%%%%%%%%%%%%%%%%%%%%%%%%%%%%%%%%%%%%%%%%%

\begin{abstract}
In large scale distributed storage systems (DSS) deployed in cloud computing, correlated failures resulting in simultaneous failure (or, unavailability) of blocks of nodes are common. In such scenarios, the stored data or a content of a failed node can only be reconstructed from the available live nodes belonging to the available blocks. To analyze the resilience of the system against such block failures, this work introduces the framework of Block Failure Resilient (BFR) codes, wherein the data (e.g., a file in DSS) can be decoded by reading out from a same number of codeword symbols (nodes) from a subset of available blocks of the underlying codeword. Further, repairable BFR codes are introduced, wherein any codeword symbol in a failed block can be repaired by contacting a subset of remaining blocks in the system. File size bounds for repairable BFR codes are derived, and the trade-off between per node storage and repair bandwidth is analyzed, and the corresponding minimum storage regenerating (BFR-MSR) and minimum bandwidth regenerating (BFR-MBR) points are derived. Explicit codes achieving the two operating points for a special case of parameters are constructed, wherein the underlying regenerating codewords are distributed to BFR codeword symbols according to combinatorial designs. Finally, BFR locally repairable codes (BFR-LRC) are introduced, an upper bound on the resilience is derived and optimal code construction are provided by a concatenation of Gabidulin and MDS codes. Repair efficiency of BFR-LRC is further studied via the use of BFR-MSR/MBR codes as local codes. Code constructions achieving optimal resilience for BFR-MSR/MBR-LRCs are provided for certain parameter regimes. Overall, this work introduces the framework of block failures along with optimal code constructions, and the study of architecture-aware coding for distributed storage systems. %Lastly, security aspect is discussed where we adopt the threat model in which an eavesdropper observes only the stored content of some subset of blocks. Accordingly, secure file size upper bounds are derived for BFR-MSR/MBR cases as well as code constructions are provided.

%\htext{Locality part may need edit.}     
\end{abstract}

%%%%%%%%%%%%%%%%%%%%%%%%%%%%%%%%%%%%%%%%%%%%%%%%%%%%%%%%%%%%%%%%%%%%%%%%%%%%%%
%%%%%%%%%%%%%%%%%%%%%%%%%%%%%%%%%%%%%%%%%%%%%%%%%%%%%%%%%%%%%%%%%%%%%%%%%%%%%%

\section{Introduction}

\subsection{Background}

Increasing demand for storing and analyzing \textit{big-data} as well as several applications of cloud computing systems require efficient cloud computing infrastructures. Under today's circumstances where the data is growing exponentially, it is crucial to have storage systems that guarantee no permanent loss of the data.  However, one inevitable nature of the storage systems is node failures. In order to provide resilience against such failures, redundancy is introduced in the storage. 

Classical redundancy schemes range from \emph{replication} to \emph{erasure coding}. Erasure coding allows for better performance in terms of reliability and redundancy compared to replication. To increase repair bandwidth efficiency of erasure coded systems, regenerating codes are proposed in the seminal work of Dimakis et al. \cite{Dimakis:Network10}. In such a model of distributed storage systems (DSS), the file of size $\Mc$ is encoded to $n$ nodes such that any $k\leq n$ nodes (each with $\alpha$ symbols) allow for reconstructing the file and any $d\geq k$ nodes (with $\beta\leq \alpha$ symbols from each) reconstruct a failed node with a repair bandwidth $\gamma=d\beta$. The trade-off between per node storage ($\alpha$) and repair bandwidth ($\gamma$) is characterized and two ends of the trade-off curve are named as minimum storage regenerating (MSR) and minimum bandwidth regenerating (MBR) points \cite{Dimakis:Network10}. Several explicit codes have been proposed to achieve these operating points recently \cite{Tamo:Zigzag13,Rashmi:Optimal11,Papailiopoulos:Repair13,Dimakis:Survey11,Wu:Deterministic07,Wu:Existence10,Cadambe:Permutation11}. 

Another metric for an efficient repair is repair degree $d$ and regenerating codes necessarily have $d\geq k$. Codes with locality and locally repairable codes with regeneration properties \cite{Gopalan:Locality12,Papailiopoulos:Locally12,Oggier:Self11,Rawat:Optimal14,Prakash:Optimal12,Huang:Pyramid07,Kamath:Codes12,Kamath:Explicit13,Wang:Repair14,Rawat:Locality14} allow for a small repair degree, wherein a failed node is reconstructed via local connections. Instances of such codes are recently considered in DSS~\cite{Sathiamoorthy:XORing13,Huang:Erasure12}. Small repair degree has its benefits in terms of the implementation of DSS since a failed node requires only \textit{local} connections. In particular, when more nodes are busy for recovery operations, this in turn creates additional access cost in the network.

%contacting more nodes effectively makes those nodes busy, which in turn could create additional cost in the network when there is a request by a different user to access one of these nodes. Thus, \textit{availability} in DSS is another important metric, which is discussed in \cite{Wang:Repair13,Rawat:Locality14} together with locally repairable codes. Providing parallel reads is especially important for DSS that stores frequently accessed data (\textit{hot data}).  

%Security aspect of distributed storage systems also attracts lots of focus in the literature. With the increase usage of services like cloud storage and peer-to-peer storage, the data being stored may be vulnerable to adversaries. An eavesdropper may access the data stored in some subset of the nodes as well as observe the repair of these nodes to gain additional information. Hence, designing codes for such secure DSS against eavesdropping attacks are studied in \cite{Pawar:Securing11,Shah:Information11,Goparaju:Data13,Rawat:Optimal14}. In these works, rather than cryptographic approach, information-theoretic approach is taken to design codes for DSS to achieve secrecy against eavesdroppers. While guaranteeing security for the worst-case information leakage to eavesdropper, reconstruction and repair requirements of DSS are maintained. 

In large-scale distributed storage systems (such as GFS \cite{Ghemawat:Google03}),  \textit{correlated failures} are unavoidable. As analyzed in~\cite{Ford:Availability10}, these simultaneous failures of multiple nodes affect the performance of computing systems severely. The analysis in~\cite{Ford:Availability10} further shows that these correlated failures arise due to the \textit{failure domains}, e.g., nodes connected to the same power source, the same update group, or the same cluster (e.g., rack) exhibit these  \emph{structured failure bursts}. The unavailability periods are transient, and largest failure bursts almost always have significant rack-correlation. For example, in Fig.~\ref{fig:tor} there are three racks in a DSS, each connected to the same switch. Assume that top-of-rack (TOR) switch of the first rack is failed; hence, all the disks in the rack are assumed to be failed or unavailable (same failure domain). Now consider that, while the first rack is unavailable, a user asks for some data $D$ stored in one of the disks in the first rack. If both the data and its corresponding redundancy were stored all together in the first rack, then the user would not be able to access the data until the TOR switch works properly. On the other hand, assume that redundancy is distributed to the other two racks; then, the user could connect to those two racks in order to reconstruct the data. Furthermore, if the  disk storing the data fails in the first rack, then the repair process could be performed similarly since the failed disk could connect to other racks to download some amount of data for repair process. This architecture is also relevant to disk storage, where the disk is divided into sectors, each can be unavailable or under failure. To overcome from failures having such patterns, a different approach is needed.

\begin{figure}
 \centering
 \includegraphics[width=0.4\columnwidth]{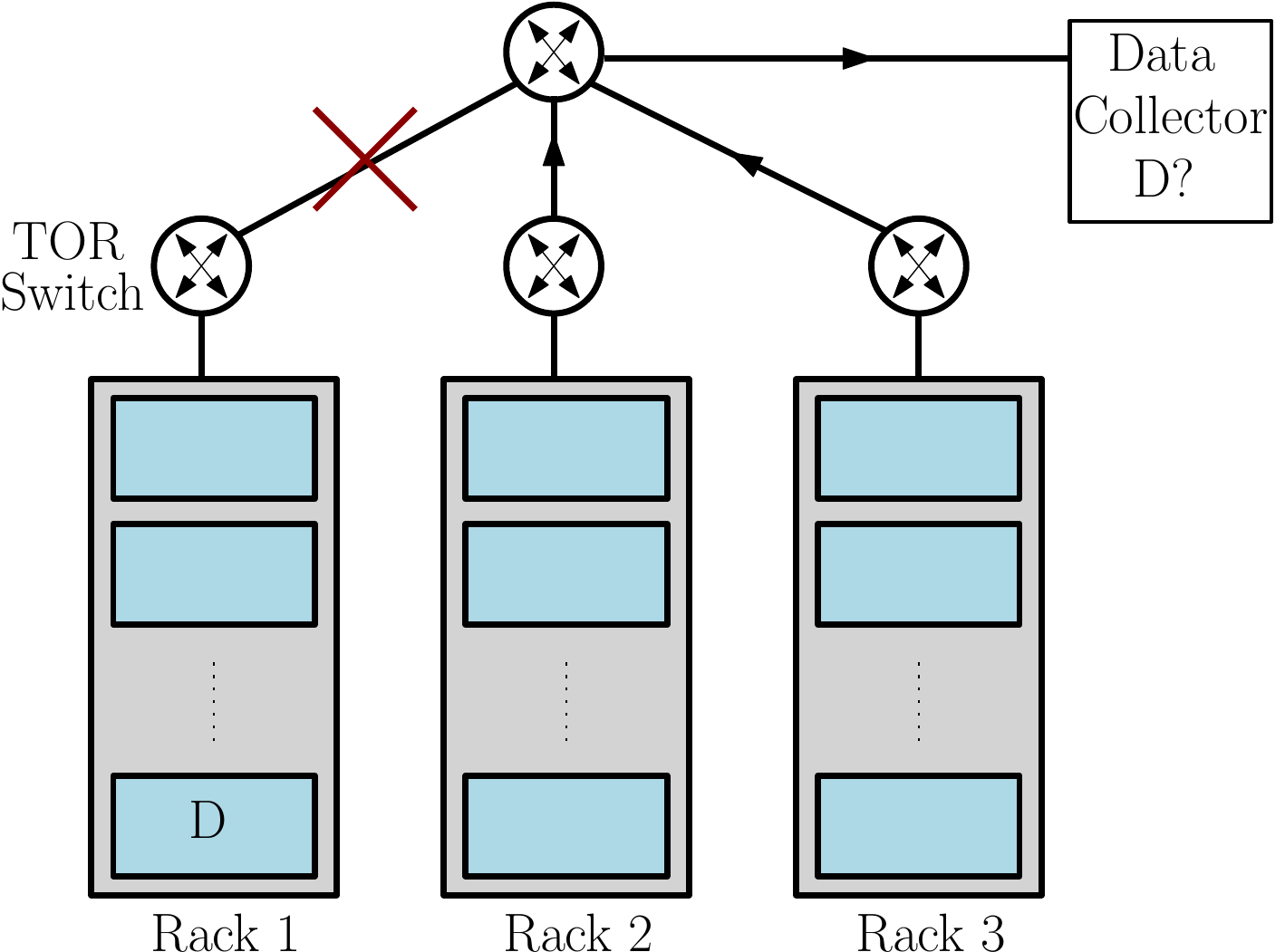}
 \caption{A data-center architecture where top-of-rack (TOR) switch of the first rack fails}
\label{fig:tor}
\vspace{-0.35in}
\end{figure}

In terms of data recovery operations, such an architecture can be considered as a relaxation of regenerating coded system. In particular, considering a load balancing property for the racks utilized in the recovery operations, we focus on the following model. Regenerating codes allow the failed node to connect to any $d$ nodes for repair purposes whereas we consider failed node connecting to a total of $d$ nodes in some restricted subset of nodes, i.e., any $\frac{d}{2}$ nodes in the second and third racks respectively for the example in Fig.~\ref{fig:tor}. Similarly, any $k$ property of regenerating codes for data-reconstruction is also relaxed, i.e., DC can connect to $\frac{k}{3}$ nodes in each rack in the example above. The outcome of such a relaxation is directly related to the performance of the DSS in terms of per node storage and repair bandwidth trade-off. For example, consider a DSS where a file of size $\Mc$ is stored over $n=10$ nodes such that any $k=4$ nodes are enough to reconstruct the data. In addition, any $d=4$ nodes are required to regenerate a failed node. For such a system that uses regenerating code, the trade-off curve can be obtained as in Fig.~\ref{fig:bfr-reg5}(c). Now consider that, $n=10$ nodes are distributed into two distinct groups such that each \textit{block} has $\frac{n}{2}=5$ nodes. Note that, a failed node in one of the blocks can now be repaired by connecting to any $d=4$ nodes in the other block. Also, DC can contact $\frac{k}{2}=2$ nodes per block to reconstruct the original message $\Mc$. For such a relaxation, we can obtain the corresponding trade-off curve between per node storage and repair bandwidth as in Fig.~\ref{fig:bfr-reg5}(c). Observe that the new MSR point, called \textit{BFR-MSR}, has significantly lower repair bandwidth than MSR point as a result of this relaxation. In this paper, we further show that the gap between the repair bandwidth of BFR-MSR and MBR point can be made arbitrarily small for large systems while keeping the per node storage of BFR-MSR point same as MSR point. Therefore, such a relaxation of regenerating codes allow for simultaneous achievability of per-node storage of MSR point and repair bandwidth of MBR point.

%\begin{figure}
%\captionsetup[subfigure]{font=footnotesize}
%\centering
%\subcaptionbox{}[.25\textwidth]{%
% \includegraphics[width=0.25\columnwidth]{reg}
%}%
%\subcaptionbox{}[.25\textwidth]{
% \includegraphics[width=0.25\columnwidth]{bfr}
%}
%\subcaptionbox{\label{fig:bfr-reg5}}[.30\textwidth]{
% \includegraphics[width=0.4\columnwidth]{bfr-reg5}
%}
%\caption{(a) Node repair in regenerating codes. (b) Node repair in relaxation of regenerating codes. (c) Trade-off curves in toy example.}
%\end{figure}

\begin{figure*}
	\begin{center}
		\setlength{\tabcolsep}{+0.09in}
		\begin{tabular}{ccc}
			\includegraphics[height=1.5in,width=1.6in]{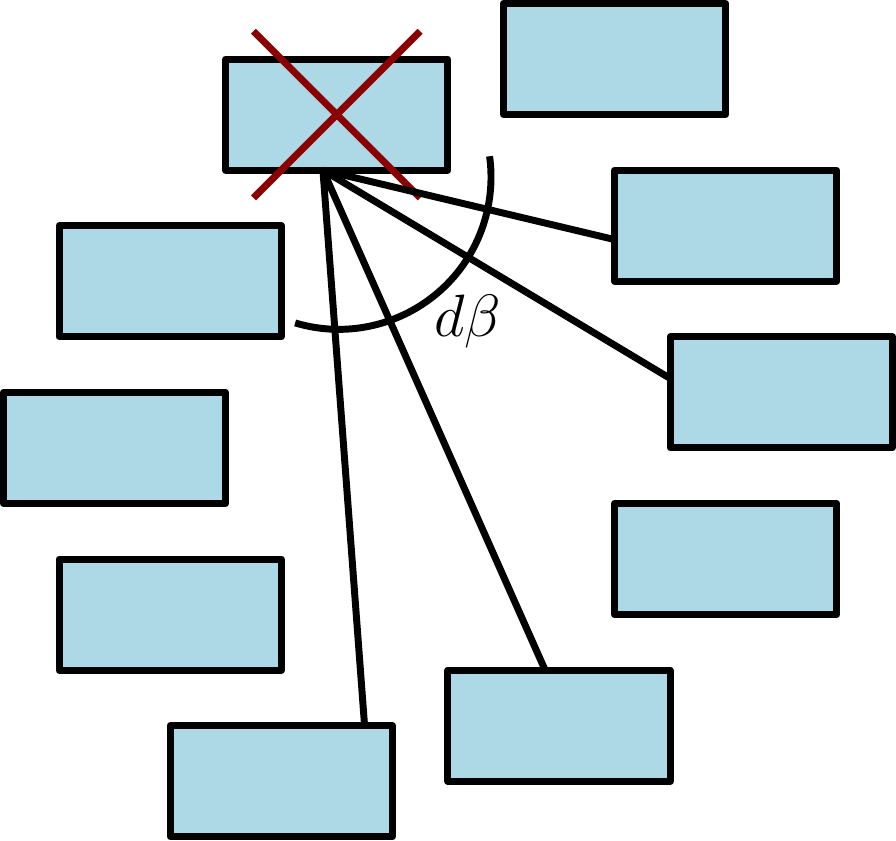} & \quad
			\includegraphics[height=1.5in,width=1.6in]{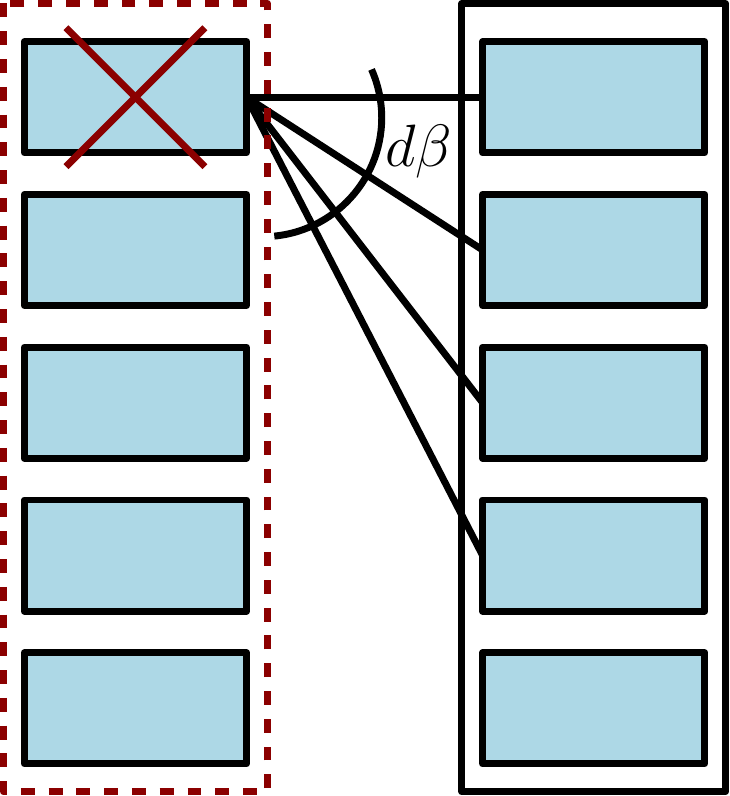} & \quad
			\includegraphics[height=1.6in,width=2.2in]{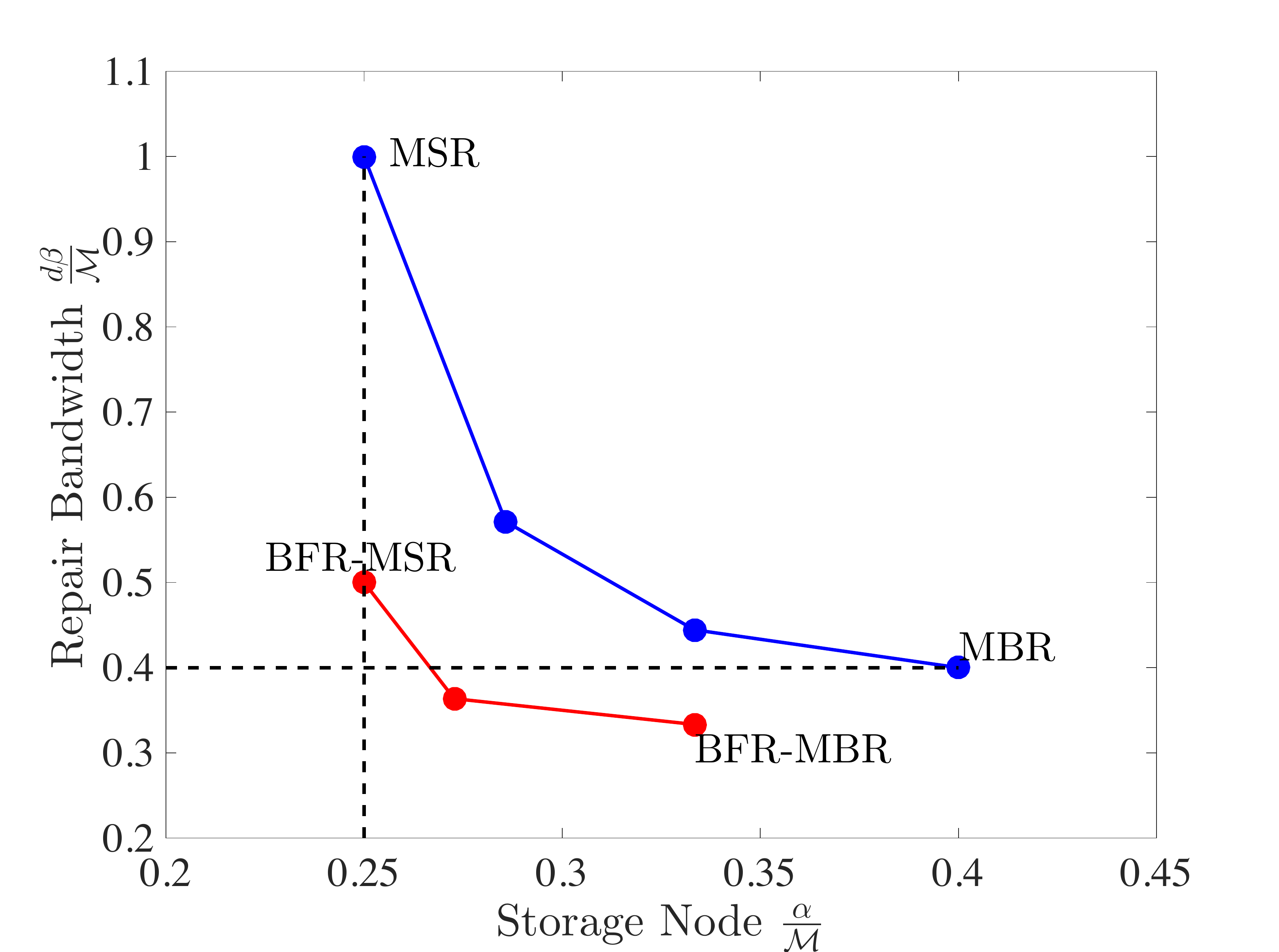}  \\
			(a) & (b) & (c) \\
		\end{tabular}
	\end{center}
	\vspace{-0.15in}
	\caption{(a) Node repair in regenerating codes. (b) Node repair in relaxation of regenerating codes. (c) Trade-off curves in toy example.}
\label{fig:bfr-reg5} 
	\vspace{-0.2in}
\end{figure*}

\subsection{Contributions and organization}

The contributions of this paper can be summarized as follows:

\begin{itemize}
\item We develop a framework to analyze resilience against block failures in DSS with node repair efficiencies. We consider a DSS with a single failure domain, where nodes belonging to the same failure group form a block of the codeword.
\item We introduce block failure resilient (BFR) codes, which allow for data collection from any $b_c= b-\rho$ blocks, where $b$ is the number of blocks in the system and $\rho$ is the resilience parameter of the code. Considering a load-balancing among blocks, a same number of nodes are contacted within these $b_c$ blocks. (A total of $k=k_cb_c$ nodes and downloading $\alpha$, i.e., all-symbols from each.) This constitutes data collection property of BFR codes. ($\rho=0$ case can be considered as a special case of batch codes introduced in ~\cite{Ishai:Batch04}.) We emphasize that our results in this part of the manuscript also consider the case of $\rho>0$. Furthermore, we introduce repairability in BFR codes, where any node of a failed block can be reconstructed from any $d_r$ nodes of any remaining $b_r=b-\sigma\leq b-1$ blocks. (A total of $d=d_rb_r$ nodes and downloading $\beta$ symbols from each.) As introduced in~\cite{Dimakis:Network10}, we utilize graph expansion of DSS employing these repairable codes, and derive file size bounds and characterize BFR-MSR and BFR-MBR points for a wide set of parameter regimes. We emphasize the relation between $d_r$ and $k_c$ as well as $\rho$ and $\sigma$ to differentiate all cases. \footnote{The case with $\frac{d}{b-\sigma} \geq \frac{k}{b-\rho}$ and $\sigma > \rho$ remains open in terms of a general proof for the minimum possible file size bound. However, the obtained bound is shown to be tight by code constructions for certain parameter regimes.} 
\item The main focus of this manuscript is on code constructions for $\sigma>0$ and $\rho=0$ scenario, where the data access assumes the availability of all $\rho$ blocks and partial data access (``repair") is over any given subset of blocks of size $b-\sigma$.
\item We construct explicit codes achieving these points for a set of parameters when $d_r \geq k_c$, $\rho=0$ and $\sigma=1$.  For a system with $b=2$ blocks case, we show that achieving both MSR and MBR properties simultaneously is asymptotically possible. (This is somewhat similar to the property of Twin codes \cite{Rashmi:Enabling11}, but here the data collection property is different.) Then, for a system with $b\geq 3$ blocks case, we consider utilizing multiple codewords, which are placed into DSS via a combinatorial design (projective planes) based codeword placement. %We show this technique establishes optimal codes for a wide set of parameter ranges.
\item We also provide the code constructions for any $\sigma<b-1$ and $\rho=0$ by utilizing Duplicated Combination Block Designs (DCBD). These codes can achieve BFR-MSR and BFR-MBR points for a wider set of parameters than the codes described above. 
 
%For the most general case of any $d_r$, $k_c$, $\rho$ and $\sigma$, we modify our previous constructions and show the achieveability of the upper bound for the general case.
\item We introduce BFR locally repairable codes (BFR-LRC). We establish an upper bound on the resilience of such codes (using techniques similar to \cite{Rawat:Optimal14}) and propose resilience-optimal constructions that use a two step encoding process and combinatorial design placement.
\item We also analyze the case where regenerating codes are used to improve repair efficiency of BFR-LRC codes. These codes are called BFR-MSR/MBR-LRC and they have a better performance in terms of repair bandwidth. We identify the upper bound on the file size that can be stored for both cases and also propose a construction that utilizes DCBD based construction to achieve the studied bound for certain parameter regimes.
\item We provide codes with table-based type of relaxation for repairable BFR in order to operate within wider set of parameters.
\end{itemize}

The rest of the paper is is organized as follows. In Section \ref{sec:Background}, we introduce BFR codes and provide preliminaries. Section \ref{sec:FileSize} focuses on file size bounds for efficient repair where we study BFR-MSR and BFR-MBR points. In Section \ref{sec:CodeConst}, we provide explicit code constructions. We discuss BFR-LRC and local regeneration in BFR-LRC in Section \ref{sec:BFR-LRC}. We extend our discussion in Section \ref{sec:Disc} where we  analyze repair time of DSS with BFR as well as other codes, e.g., regenerating codes, and also propose a relaxation for the BFR model, where we provide explicit codes achieving the performance of BFR-MSR/MBR. Finally we conclude in Section~\ref{sec:Conclusion}. 

\subsection{Related Work}

In the seminal work of Dimakis et al. \cite{Dimakis:Network10}, \textit{regenerating codes} are presented, which are proposed to improve upon classical erasure codes in terms of repair bandwidth. The authors focus on a setting similar to maximum distance separable (MDS) codes as regenerating codes also have \textit{any k out of n} property which allows data collector to connect to any $k$ nodes to decode data stored over $n$ nodes; however, they show that by allowing a failed node to connect to $d \geq k$ nodes, they can significantly reduce the repair bandwidth. Thus, a trade-off between per node storage and repair bandwidth for a single node repair is presented. Such a repair process is referred to as \textit{functional} repair since the \textit{newcomer} node may not store the same data as the failed node; however in terms of functionality, they are equivalent. In \cite{Dimakis:Network10,Wu:Deterministic07,Wu:Existence10}, functional repair is studied to construct codes achieving the optimality at the two ends of the trade-off curve. In \cite{Rashmi:Optimal11}, the focus is on \textit{exact} repair where the failed node is repaired such that the newcomer stores exactly the same data as the failed node stored. The proposed code construction is optimal for all parameters $(n,k,d)$ for minimum bandwidth regeneration (MBR) point as well as optimal for $(n,k,d\geq 2k-2)$ for minimum storage regeneration (MSR) point. It can be deducted that since $n-1 \geq d$ necessarily, these codes' rate is bounded by $\frac{1}{2}+\frac{1}{2n}$ for MSR point. \cite{Papailiopoulos:Repair13} utilizes two parity nodes to construct exact regeneration codes that are optimal in repair bandwidth, whereas \cite{Cadambe:Permutation11,Tamo:Zigzag13} focus on exact-repair for systematic node repair through permutation-matrix based constructions. Our BFR model here can be seen as a structured relaxation of the data collection and repair process of regenerating codes as discussed in the previous subsection.

Combinatorial designs are utilized in constructing fractional repetition codes in \cite{Rouayheb:Fractional10}. In this construction, MDS codewords are replicated and these replicated symbols are placed to storage nodes according to the combinatorial designs. The resulting coding scheme can be viewed as the relaxation of regenerating codes where the repairs are performed in a \emph{table-based} manner (instead of ``any $d$" property). Other works that utilize combinatorial designs include \cite{Tian:Layered14,Olmez:Fractional16,Silberstein:Optimal15}. For the constructions proposed in this paper, we consider having (in certain cases precoded versions of) \emph{regenerating codewords} placed to storage nodes according to combinatorial designs. As compared to fractional repetition codes, this allows to have bandwidth efficient repair with``any $d_r$" property in the block failure context studied here. In the last part of the sequel, we revisit and provide codes with a table-based type of relaxation for repairable BFR model, where instead of ``any $d_r$" per helper block we consider ``any $d$" per helper sub-block.

Proposed recently in \cite{Oggier:Self11}, \textit{local} codes operate node repairs within some local group thereby reducing repair degree. In \cite{Gopalan:Locality12}, minimum distance of locally repairable codes are studied for scalar case only and Pyramid codes are shown to be optimal in terms of minimum distance \cite{Huang:Pyramid07}. \cite{Prakash:Optimal12} generalizes the bound by allowing multiple local parities per local group for the scalar linear code. Next, \cite{Papailiopoulos:Locally12} focuses on the minimum distance bound for vector codes which have only one local parity per group as well as characterize the trade-off between per node storage and resilience. In \cite{Rawat:Optimal14,Kamath:Codes14},  the authors generalize the bound on minimum distance for vectors codes which can have multiple parity nodes per local group and characterize the trade-off between per node storage and resilience. \cite{Tamo:Family14} proposes locally repairable codes with small field sizes by considering polynomial construction of Reed-Solomon codes. For LRCs, when the number of failures exceed the corresponding minimum distance of the local group, the local recovery would not be possible. Our BFR model here can be considered as the opposite of locality property, where non-local nodes have to be contacted for recovery operations. 

Recently, another type of erasure codes called Partial-MDS are proposed for RAID systems to tolerate not just disk failures but also sector failures \cite{Blaum:Partial13}. Considering a stripe on a disk system as an $r \times n$ array where disks are represented by columns and all disks have $r$ sectors, PMDS codes tolerate an erasure of any $m$ sectors per row plus any $s$ additional elements. Later, relaxation of PMDS codes are studied in \cite{Blaum:Construction13} which allow erasure of any $m$ columns plus any $s$ additional elements. These codes are called Sector-Disk (SD) codes. Our BFR model can also be considered as a class of SD codes since BFR construction tolerates $\rho$ block failures which can be treated as disks (columns) in SD code construction. However, for the full access case, BFR codes do not have any additional erasure protection beyond $m=\rho$ disk failures. And, for the partial access case, BFR codes allow additional erasure protection but the code tolerates any additional $c-k_c$ number of erasures per block (disk) rather than $s$ number of erasures that can occur anywhere over the remaining disks. On the other hand, repair process in SD codes utilize the row-wise parities (one sector from other disks) to correct the corresponding sectors in the failed disks, but repairable BFR allows the failed node (sector) in a block (disk) to contact any set of nodes in other blocks.   
%\htext{This is not correct right? We can also do sth. like partial access after $\rho$ disk failures. So, if I am correct, we can modify and say that "However, for the full access case, BFR codes do not have any additional erasure protection beyond $m=\rho$ disk failures. And, for the partial access case, BFR codes allow additional erasure protection but the code tolerates any additional $c-k_c$ number of erasures per block (disk) rather than $s$ number of erasures that can occur anywhere over the remaining disks.}

We note that the blocks in our model can be used to model clusters (e.g., racks) in DSS. Such a model is related to the work in \cite{Gaston:realistic13} which differentiates between within-rack communication and cross-rack communication. Our focus here would correspond to the case where within the failure group, communication cost is much higher than the cross-group communication cost, as no nodes from the failed/unavailable group can be contacted to regenerate a node. Another recent work \cite{Prakash:Storage17} considers clustered storage as well, but the repair mechanism is different than our model, where, similar to \cite{Gaston:realistic13}, \cite{Prakash:Storage17} considers availability of nodes in the block that has a failed node, i.e., there is no consideration of failure of blocks.
%%%%%%%%%%%%%%%%%%%%%%%%%%%%%%%%%%%%%%%%%%%%%%%%%%%%%%%%%%%%%%%%%%%%%%%%%%%%%%
%%%%%%%%%%%%%%%%%%%%%%%%%%%%%%%%%%%%%%%%%%%%%%%%%%%%%%%%%%%%%%%%%%%%%%%%%%%%%%

\section{Background and Preliminaries}
\label{sec:Background}
\subsection{Block failure resilient codes and repairability}

Consider a code $\Cc$ which maps $\Mc$ symbols (over $\FF_q$) in $\fv$ (file) to length $n$ codewords (nodes) $\cv=(c_1,\dots,c_n)$ with $c_i\in \FF_q^\alpha$ for $i=1,\dots,n$. These codewords are distributed into $b$ blocks each with block capacity $c=\frac{n}{b}$ nodes per block. We have the following definition.

\begin{definition}[Block Failure Resilient (BFR) Codes]
An $(n,b,\Mc,k,\rho,\alpha)$ block failure resilient (BFR) code encodes $\Mc$ elements in $\FF_q$ ($\fv$) to $n$ codeword symbols (each in $\FF_q^\alpha$) that are grouped into $b$ blocks such that $\fv$ can be decoded by accessing to any $\frac{k}{b-\rho}$ nodes from each of the  $b-\rho$ blocks.
\label{def:BFR}
\end{definition}

We remark that, in the above, $\rho$ represents the resilience parameter of the BFR code, i.e., the code can tolerate $\rho$ block erasures. Due to this data collection (file decoding) property of the code, we denote the number of blocks accessed as $b_c=b-\rho$ and number of nodes accessed per block as $k_c=\frac{k}{b_c}$. Noting that $k_c\leq c$ should be satisfied, we differentiate between \emph{partial} block access, $k_c<c$, and \emph{full} block access $k_c=c$. Throughout the paper, we assume $b \mid n$. i.e., $c$ is integer, and $(b-\rho) \mid k$, i.e., $k_c$ is integer.

%Scalar BFR codes (i.e., $\alpha=1$ in the definition above) with $\rho=0$ are special cases of batch codes introduced in ~\cite{Ishai:Batch04}. In particular, an $(n',N',k_c,m',t')$ batch code encodes an $n'$-length string to $m'$ buckets (of total length $N'$) such that any length $k_c$-tuple of the string can be decoded by accessing to at most $t'$ symbols per bucket. Scalar BFR codes with $b_c=b$ correspond to accessing all buckets as in batch codes but with recovering all of the underlying string, i.e., $k_c=n'$ case.

Remarkably, any MDS array code \cite{MacWilliams:Theory77} can be utilized as BFR codes for the full access case. In fact, such an approach will be optimal in terms of minimum distance, and therefore for resilience $\rho$. However, for $k_c<c$, MDS array codes may not result in an optimal code in terms of the trade-off between resilience $\rho$ and code rate $\frac{\Mc}{n\alpha}$. Concatenation of Gabidulin codes
and MDS codes as originally proposed in \cite{Rawat:Optimal14} gives optimal BFR codes for all parameters. For completeness, we provide this coding technique adapted to generate BFR codes in Appendix \ref{app:gab_mds}. We remark that this concatenation approach is  used for locally repairable codes in\cite{Rawat:Optimal14}, for locally repairable codes with minimum bandwidth node repairs in \cite{Kamath:Explicit13}, thwarting adversarial errors in \cite{Silberstein:Error12,Silberstein:Error15}, cooperative regenerating codes with built-in security mechanisms against node capture attacks in \cite{Koyluoglu:Secure14} and for constructing PMDS codes in \cite{Calis:General16}. In this work, we focus on repairable BFR codes, as defined in the following. %\htext{Concatenation of Gabidulin codes and MDS codes as proposed in \cite{Rawat:Optimal14,Calis:Explicit16} gives optimal BFR codes for $k_c < c$.}

\begin{definition}[Block Failure Resilient Regenerating Codes (BFR-RC)]
An $(n,b,\Mc,k,\rho,\alpha,$ $d,\sigma,\beta)$ block failure resilient regenerating code (BFR-RC) is an $(n,b,\Mc,k,\rho,\alpha)$ BFR code (data collection property) with the following repair property: Any node of a failed block can be reconstructed by accessing to any $d_r=\frac{d}{b-\sigma}$ nodes of any $b_r=b-\sigma$ blocks and downloading $\beta$ symbols from each of these $d=b_rd_r$ nodes.
\end{definition}

We assume $(b-\sigma)\mid d$, i.e., $d_r$ is integer. (Note that $d_r$ should necessarily satisfy $\frac{d}{b-\sigma}=d_r\leq c=\frac{n}{b}$ in our model.) We consider the trade-off between the \emph{repair bandwidth} $\gamma=d\beta$ and \emph{per node storage} $\alpha$ similar to the seminal work of Dimakis et al. \cite{Dimakis:Network10}. In particular, we define $\alpha_{\textrm{BFR-MSR}}=\frac{\Mc}{k}$ as the minimum per node storage and $\gamma_{\textrm{BFR-MBR}}=\alpha_{\textrm{BFR-MBR}}$ as the minimum repair bandwidth for an $(n,b,\Mc,k,\rho,\alpha,d,\sigma,\beta)$ BFR-RC.

\begin{remark}
	Having $\sigma>0$ allows partial data reconstruction as one can regenerate the unavaliable/failed node inside a block.
\end{remark}

\subsection{Information flow graph}

The operation of a DSS employing such codes can be modeled by a multicasting scenario over an information flow graph \cite{Dimakis:Network10}, which has three types of nodes: 1) Source node ($S$): Contains original file $\fv$. 2) Storage nodes, each represented as $x_i$ with two sub-nodes($(x^{\rm in}_i,x^{\rm out}_i)$), where $x^{\rm in}$ is the sub-node having the connections from the live nodes, and $x^{\rm out}$ is the storage sub-node, which stores the data and is contacted for node repair or data collection (edges between each $x^{\rm in}_i$ and $x^{\rm out}_i$ has $\alpha$-link capacity). 3) Data collector ($\rm{DC}$) which contacts $x^{\rm{out}}$ sub-nodes of $k$ live nodes (with edges each having $\infty$-link capacity). (As described above, for BFR codes these $k$ nodes can be any $\frac{k}{b-\rho}$ nodes from each of the $b-\rho$ blocks.) Then, for a given graph $\Gc$ and DCs $\rm{DC}_i$, the file size can be bounded using the max flow-min cut theorem for multicasting utilized in network coding~\cite{Ho:random06,Dimakis:Network10}.
\begin{lemma}[Max flow-min cut theorem for multicasting] \label{lma:MFMCforMulticast}
$$\Mc \leq \min_{\Gc} \min_{\rm{DC}_i} \rm{max flow}(S \to \rm{DC}_i,\Gc),$$
where $\rm{flow}(S \to \rm{DC}_i,\Gc)$ represents the flow from the source
node $S$ to $\rm{DC}_i$ over the graph $\Gc$.
\end{lemma}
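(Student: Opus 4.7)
The plan is to reduce the claim to the classical max-flow min-cut theorem for single-source multicast network coding \cite{Ho:random06}, applied to the specific information flow graph $\Gc$ of a DSS described just above the lemma. Concretely, I would proceed in three steps: (i) fix an arbitrary failure/repair history, which determines $\Gc$; (ii) fix an arbitrary data collector $\rm{DC}_i$ consistent with the BFR access pattern; (iii) observe that the resulting instance is exactly a single-source multicast from $S$ to $\rm{DC}_i$ over an edge-capacitated DAG.

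First I would verify the modeling step, since this is where the subtlety lies. Each storage node $x_i$ is split into $(x^{\rm in}_i,x^{\rm out}_i)$ joined by an edge of capacity $\alpha$, so that any cut respects the per-node storage constraint; each helper edge feeding $x^{\rm in}_i$ from a contacted $x^{\rm out}_j$ has capacity $\beta$, so that cuts respect the per-helper download constraint; and each edge from the $k$ contacted $x^{\rm out}$ sub-nodes to $\rm{DC}_i$ carries capacity $\infty$, reflecting unrestricted read access at the collector. Because every functional node-repair and every data-collection operation that the code must support corresponds to a valid execution of a network code on $\Gc$, any scheme that stores $\Mc$ source symbols and reconstructs them at $\rm{DC}_i$ induces a feasible multicast of rate $\Mc$ from $S$ to $\rm{DC}_i$.

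Next, I invoke the max-flow min-cut theorem of Ahlswede-Cai-Li-Yeung, in the form used in \cite{Ho:random06}, to conclude $\Mc \leq \mathrm{maxflow}(S \to \rm{DC}_i, \Gc)$ for this particular $\Gc$ and $\rm{DC}_i$. Since a valid BFR-RC must support reconstruction \emph{for every} admissible failure/repair history and \emph{for every} admissible data collector, the inequality must hold uniformly; taking the minimum over $\Gc$ and over $\rm{DC}_i$ gives the stated bound.

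The main obstacle is not the invocation of the network coding theorem, which is by now textbook, but the preceding modeling check: one must be careful that the BFR-specific constraints (access to exactly $k_c$ nodes per data-collecting block, $d_r$ helpers per helping block, $\beta$ symbols per helper) are all encoded by the edge capacities of $\Gc$ and by the class of admissible $\rm{DC}_i$'s, so that the min-cut automatically incorporates the block-structure. Once this is in place, the lemma is just a specialization of the multicast max-flow min-cut bound to the particular family of graphs arising from the BFR model, and no further computation is needed at this stage; the actual extraction of numerical bounds on $\Mc$ is deferred to subsequent sections where specific cut sets will be chosen.
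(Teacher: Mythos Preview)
Your proposal is correct and essentially matches the paper's treatment: the paper does not supply its own proof of this lemma but simply cites it as a known result from \cite{Ho:random06,Dimakis:Network10}, so your sketch of (i) fixing the information flow graph, (ii) fixing a data collector, and (iii) invoking the single-source multicast max-flow min-cut theorem is precisely the argument implicit in those references. Your additional care in verifying that the BFR-specific edge capacities faithfully encode the block-structured access constraints is appropriate and goes slightly beyond what the paper spells out, but it is exactly the right check to make before applying the cited theorem.
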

Therefore, $\mathcal{M}$ symbol long file can be delivered to a DC, only if the min cut is at least $\mathcal{M}$. In the next section, similar to \cite{Dimakis:Network10}, we consider $k$ successive node failures and evaluate the min-cut over possible graphs, and obtain file size bounds for DSS operating with BFR-RC.

\subsection{Vector codes}

\looseness=-1 An $(n,M,d_{\rm{min}},\alpha)_q$ vector code $\Cc \subseteq (\FF_q^\alpha)^n$ is a collection of $M$ vectors of length $n\alpha$ over $\FF_q$. A codeword $\cv \in \Cc$ consists of $n$ blocks, each of size $\alpha$ over $\FF_q$. We can replace each $\alpha$-long block with an element in $\FF_{q^\alpha}$ to obtain a vector $\cv = (\cv_1,\cv_2,\dots,\cv_n) \in \FF_{q^\alpha}^n$. The minimum distance, $d_{\rm{min}}$, of $\Cc$ is defined as minimum Hamming distance between any two codewords in $\Cc$.

\begin{definition}
Let $\cv$ be a codeword in $\Cc$ selected uniformly at random from $M$ codewords. The minimum distance of $\Cc$ is defined as $$d_{\rm{min}}=n-\max\limits_{\Ac \subseteq[n]:H(\cv_\Ac) < \log_q M}|\Ac|,$$
where $\Ac=\left\{i_1,\dots,i_{|\Ac|}\right\}\subseteq [n]$, $\cv_\Ac=(\cv_{i_1},\dots,\cv_{i_{|\Ac|}})$, and $H(\cdot)$ denotes $q$-entropy.
\end{definition}

A vector code is said to be maximum distance separable (MDS) code if $\alpha \mid \log_q M$ and $d_{\rm{min}}=n-\frac{log_q M}{\alpha}+1$. A linear $(n,M,d_{\rm{min}},\alpha)_q$ vector code is a linear subspace of $\FF_q^{\alpha n}$ of dimension $\Mc=\log_q M$. An $[n,\Mc,d_{\rm{min}},\alpha]_q$ array code is called \emph{MDS array code} if $\alpha \mid \Mc$ and $d_{min} = n- \frac{\Mc}{\alpha}+1$.

The encoding process of an $(n,M=q^\Mc,d_{\rm{min}},\alpha)_q$ vector code can be summarized by 
$\GG:\FF_q^\Mc \rightarrow \left(\FF_q^\alpha \right)^n$. The encoding function is defined by an $\Mc \times n\alpha$ generator matrix $G=[\gv_1^1,\dots,\gv_1^\alpha \mid \dots \mid \gv_n^1,\dots,\gv_n^\alpha]$ over $\FF_q$. 

\subsection{Maximum rank distance codes}

Gabidulin codes \cite{Gabidulin:Theory85}, are an example of class of rank-metric codes, called maximum rank distance (MRD) codes. (These codes will be utilized later in the sequel.)

Let $\FF_{q^m}$ be an extension field of $\FF_q$. An element $\textit{v} \in \FF_{q^m}$ can be represented as the vector $\textbf{\textit{v}} = (\textit{v}_1,\dots,\textit{v}_m)^T \in \FF_{q}^{m}$, such that $\textit{v}=\sum_{i=1}^{m} v_i b_i$, for a fixed basis $\left\{b_1,\dots,b_m\right\}$ of the extension field $\FF_{q^m}$. Using this, a vector $\vv = (v_1,\dots,v_N) \in \FF_{q^m}^N$ can be represented by an $m \times N$ matrix $\textbf{V}=[v_{i,j}]$ over $\FF_q$, which is obtained by replacing each $v_i$ of $\vv$ by its vector representation $(v_{i,1},\dots,v_{i,m})^T$.

\begin{definition}
The rank of a vector $\vv \in \FF_{q^m}^N$, rank($\vv$), is defined as the rank of its $m \times N$ matrix representation $\textbf{V}$ (over $\FF_q$). Then, rank distance is given by $$d_R(\vv , \uv)=rank(\textbf{V}-\textbf{U}).$$ 
\end{definition}  

An $[N,K,D]_{q^m}$ rank-metric code $C \subseteq \FF_{q^m}^{N}$ is a linear block code over $\FF_{q^m}$ of length $N$ with dimension $K$ and minimum rank distance $D$. A rank-metric code that attains the Singleton bound $D \leq N-K+1$ in rank-metric is called \emph{maximum rank distance} (MRD) code. Gabidulin codes can be described by evaluation of \emph{linearized polynomials}.

\begin{definition}
A linearized polynomial $f(y)$ over $\FF_{q^m}$ of $q$-degree $t$ has the form $$f(y)=\sum_{i=0}^{t}a_i y^{q^i}$$ where $a_i \in \FF_{q^m}$, and $a_t \neq 0$.
\end{definition}
Process of encoding a message $(f_1,f_2,\dots,f_K)$ to a codeword of an $[N,K,D=N-K+1]$ Gabidulin code over $\FF_{q^m}$ has two steps:

\begin{itemize}
\item Step 1: Construct a data polynomial $f(y)=\sum_{i=1}^K f_i y^{q^{i-1}}$ over $\FF_{q^m}$.
\item Step 2: Evaluate $f(y)$ at $\left\{y_1,y_2,\dots,y_n\right\}$ where each $y_i \in \FF_{q^m}$, to obtain a codeword $\cv = (f(y_1),\dots,f(y_N)) \in F_{q^m}^N$.
\end{itemize}

\begin{remark}
For any $a,b \in \FF_q$ and $\textit{v}_1 , \textit{v}_2 \in \FF_{q^{m}}$, we have $f(a \textit{v}_1+b \textit{v}_2)=a f(\textit{v}_1)+b f(\textit{v}_2)$.
\end{remark}

\begin{remark}
Given evaluations of $f(\cdot)$ at any $K$ linearly independent (over $\FF_{q}$) points in $\FF_{q^{m}}$, one can reconstruct the message vector. Therefore, an $[N,K,D]$ Gabidulin code is an MDS code and can correct any $D-1=N-K$ erasures. 
\end{remark}

\subsection{Locally repairable codes}

Recently introduced locally repairable codes reduce repair degree by recovering a symbol via contacting small number of helper nodes for repair. 

\begin{definition}[Punctured Vector Codes]
Given an $(n,M,d_{\rm{min}},\alpha)_q$ vector code $\Cc$ and a set  $\Sc \subset [n]$, $\Cc|_\Sc$ is used to denote the code obtained by puncturing $\Cc$ on $[n]\backslash\Sc$. In other words, codewords of $C|_\Sc$ are obtained by keeping those vector symbols in $\cv=(\cv_1,\dots,\cv_n)\in \Cc$ which have their indices in set $\Sc$. 
\end{definition}

\begin{definition}
\label{def:locality}
An $(n,M,d_{\rm{min}},\alpha)_q$ vector code $\Cc$ is said to have $(r,\delta)$ \emph{locality} if for each vector symbol $\cv_i \in \FF_q^n$, $i \in [n]$, of codeword $\cv = (\cv_1,\dots,\cv_n) \in \Cc$, there exists a set of indices $\Gamma(i)$ such that
\begin{itemize}
\item $i \in \Gamma(i)$
\item $|\Gamma(i)|\leq r+\delta-1$
\item Minimum distance of $\Cc|_{\Gamma(i)}$ is at least $\delta$.
\end{itemize}
\end{definition}

\begin{remark}
The last requirement in Definition \ref{def:locality} implies that each element $j \in \Gamma(i)$ can be written as a function of any set of $r$ elements in $\Gamma(i)\backslash \left\{j\right\}$. %Also, $H(\Gamma(i))\leq r\alpha$
\end{remark}

\begin{comment}
\subsection{Block designs and projective planes}

%We first provide the definition of balanced incomplete block designs (BIBDs), which are probably one of the most-studied types of block designs \cite{Stinson:Combinatorial04}.
We first provide the definition of balanced incomplete block designs (BIBDs) \cite{Stinson:Combinatorial04}.
\begin{definition}[Balanced incomplete block design]
A $(v,\kappa,\lambda)$-BIBD has $v$ points distributed into blocks of size $\kappa$ such that any pair of points are contained in $\lambda$ blocks.
\end{definition}

%A $(v,\kappa,\lambda)$-BIBD then have the following properties \cite{Stinson:Combinatorial04}.
\begin{corollary}\label{thm:BIBDcorollary}
For a $(v,\kappa,\lambda)$-BIBD,
\begin{itemize}
\item Every point occurs in $r=\frac{\lambda (v-1)}{\kappa-1}$ blocks.
\item The design has exactly $b=\frac{vr}{\kappa}=\frac{\lambda(v^2-v)}{\kappa^2-\kappa}$ blocks.
\end{itemize}
\end{corollary}

In the achievable schemes of this work, we utilize a special class of block designs that are called projective planes \cite{Stinson:Combinatorial04}.
\begin{definition}
A $(v=p^2+p+1,\kappa=p+1,\lambda=1)$-BIBD with $p\geq 2$ is called a projective plane of order $p$.
\end{definition}

Projective planes have the property that every pair of blocks intersect at a unique point (as $\lambda=1$). In addition, due to Corollary~\ref{thm:BIBDcorollary}, in projective planes, every point occurs in $r=p+1$ blocks, and there are $b=v=p^2+p+1$ blocks.
\end{comment}

%%%%%%%%%%%%%%%%%%%%%%%%%%%%%%%%%%%%%%%%%%%%%%%%%%%%%%%%%%%%%%%%%%%%%%%%%%%%%%
%%%%%%%%%%%%%%%%%%%%%%%%%%%%%%%%%%%%%%%%%%%%%%%%%%%%%%%%%%%%%%%%%%%%%%%%%%%%%%

\section{File Size Bound for Repairable BFR Codes}
\label{sec:FileSize}

In this section, we perform an analysis to obtain file size bounds for repairable BFR codes. We focus on different cases in order to cover all possible values of parameters $d_r$, $k_c$, $\rho$ and $\sigma$. 

We denote the whole set of blocks in a DSS by $\Bc$, where $|\Bc|=b$. A failed node in block $i$ can be recovered from any $b-\sigma$ blocks. Denoting an instance of such blocks by $\Bc_i^r$, we have $|\Bc_i^r|=b-\sigma$ and $i \notin \Bc_i^r$. This repair process requires a total number of $d$ nodes and equal number of nodes from each block in $\Bc_i^r$, $d_r = \frac{d}{b-\sigma}$, is contacted. Data collector, on the other hand, connects to $b-\rho$ blocks represented by $\Bc^{c}$ to reconstruct the stored data. In this process, a total number of $k$ nodes are connected to retrieve the stored data and the same number of nodes from each block in $\Bc^{c}$, $k_c=\frac{k}{b-\rho}$, are connected.

In the following, we analyze the information flow graph for a total number of $k_c(b-\rho)$ failures and repairs, and characterize the min-cut for the corresponding graphs to upper bound the file size that can be stored in DSS. 

Our analysis here differs from the classical setup  considered in \cite{Dimakis:Network10}, in the sense that the scenarios considered here require analysis of different cases (different min-cuts in information flow graph) depending on the relation between $d_r$ and $k_c$ ($d_r \geq k_c$ vs. $d_r < k_c$). Before detailing the analysis for each case (in Section \ref{subsec:d_r>=k_c} and \ref{subsec:d_r<k_c}), we here point out why this phenomenon occurs. First, consider the case $d_r \geq k_c$, a failed node in block $i$ requires connections from the blocks $j \in \Bc^r_i$ and that are previously failed and repaired and connected to DC. At this point, block $i$ needs to contact some additional nodes in block $j$ ($d_r-k_c$ number of nodes) that may not be connected to DC. On the other hand, when $d_r < k_c$, then, for the min-cut, any failed node in block $i$ would contact $d_r$ nodes in block $j$, which are already connected to DC. Therefore, the storage systems exhibits different min-cuts depending on the values of $d_r$, $k_c$, $\rho$ and $\sigma$. In the following, we analyze each case separately, obtain the min-cuts as well as corresponding corner points in per-node storage and repair bandwidth trade-off. Section \ref{subsec:d_r>=k_c} focuses on $d_r \geq k_c$ case, where the number of helper nodes from each block is greater than or equal to the number of nodes connected to DC from each block. Section \ref{subsec:d_r<k_c} details the remaining case, i.e., $d_r<k_c$.

\subsection{Case I: $d_r\geq k_c$}
\label{subsec:d_r>=k_c}

%We already know that $d\geq k$ from regenerating code properties. Hence, $ d \geq k \iff d_r(b-\sigma) \geq k_c(b-\rho)$. Since we only assume that $d_r \geq k_c$, no relation between $\rho$ and $\sigma$ is known at this point. Hence we further split our analysis into two cases as follows.

\subsubsection{Case I.A: $\sigma \leq \rho$}

The first case we focus on is having $\sigma \leq \rho$, which implies that the number of blocks participating to the repair process is greater than or equal to the number of blocks that are connected to DC, i.e., $b-\sigma \geq b-\rho$. 

%The first case we focus on is having $\sigma \leq \rho$, which means the number of blocks helping in repair process is more than or equal to number of blocks that are connected to DC, $b-\sigma \geq b-\rho$. As a result, all nodes that are connected to DC are utilized in repair as well as all blocks that are connected to DC are utilized in repair. 

\begin{theorem}
If $\alpha=\frac{\Mc}{k}$ or $\alpha=d \beta$, the upper bound on the file size when $d_r\geq k_c$ and $\sigma \leq \rho$ is given by
\begin{equation}
\Mc  \leq  \sum_{i=1}^{b-\rho}k_c\min \left\{\alpha,(d-(i-1)k_c)\beta \right\}.
\label{eq:boundcase1}
\end{equation}
\label{thm:case1}
\vspace{-0.25in}
\end{theorem}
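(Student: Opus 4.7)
The plan is to apply the max-flow min-cut bound (Lemma~\ref{lma:MFMCforMulticast}) to a carefully engineered information flow graph $\Gc$ that models $k=k_c(b-\rho)$ sequential node failures and repairs: I would fail and regenerate $k_c$ nodes in block $1$, then $k_c$ nodes in block $2$, and so on through block $b-\rho$, finally attaching the data collector $\rm{DC}$ via $\infty$-capacity links to the $k_c$ freshly-regenerated sub-nodes $x^{\rm out}$ in each of these $b-\rho$ blocks, which is a valid data-collection instance under Definition~\ref{def:BFR}.

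The central design choice is the repair helper selection, which I would make so as to maximize reuse of previously-regenerated nodes. For a regeneration taking place in block $i$ (with $1\leq i\leq b-\rho$), I would pick the $b-\sigma$ helper blocks to consist of the $i-1$ already-repaired blocks $\{1,\ldots,i-1\}$ together with $b-\sigma-(i-1)$ additional blocks drawn from $\Bc\setminus\{1,\ldots,i\}$. This selection is feasible because $\sigma\leq\rho$ and $i\leq b-\rho$ give $b-\sigma\geq b-\rho\geq i$, so the required ``other'' blocks exist. From each already-repaired helper block I would contact all $k_c$ regenerated nodes (possible because $d_r\geq k_c$) together with $d_r-k_c$ fresh nodes, while each new helper block contributes $d_r$ fresh nodes; this totals exactly $d$ helpers, of which $(i-1)k_c$ are previously regenerated and $d-(i-1)k_c$ are fresh. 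The $k_c$ co-failed nodes within block $i$ cannot assist one another since helper blocks must exclude the failed block itself.

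With this structure, the min-cut argument parallels~\cite{Dimakis:Network10}. I would define a cut $(A,B)$ with $S\in A$, $\rm{DC}\in B$, placing every initial storage node fully in $A$, and for each regenerated node $x$ in block $i$ choosing between (i) placing $x^{\rm in}\in A$, $x^{\rm out}\in B$ at cost $\alpha$ from the cut internal edge, or (ii) placing both $x^{\rm in},x^{\rm out}\in B$ at cost $(d-(i-1)k_c)\beta$, noting that the $(i-1)k_c$ previously-regenerated helpers have their $x^{\rm out}\in B$ (since $\rm{DC}$ attaches to them) and hence their $\beta$-edges do not cross the cut, while the $d-(i-1)k_c$ fresh helpers have $x^{\rm out}\in A$ and each contributes $\beta$. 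Summing the cheaper of the two options across the $k_c$ nodes in block $i$ and over $i=1,\ldots,b-\rho$, and invoking Lemma~\ref{lma:MFMCforMulticast}, yields \eqref{eq:boundcase1}.

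The main obstacle is verifying consistency of the helper selection and the associated cut across all $k_c(b-\rho)$ regenerations simultaneously — in particular checking that each previously-failed block still has $d_r-k_c$ unused nodes to supply (which uses $d_r\leq c$), that the $\beta$-edges from fresh helpers truly account for the entire repair bandwidth that crosses the cut, and that the per-regeneration cost assigned is attainable under a single globally consistent choice of $(A,B)$. A secondary point is to confirm that the hypothesis $\alpha=\Mc/k$ or $\alpha=d\beta$ makes the bound interpretable at the two operating points: at the BFR-MSR end the $\alpha$-branch of the $\min$ is active throughout and the bound collapses to $\Mc$, while at the BFR-MBR end the $(d-(i-1)k_c)\beta$-branch binds for every term, recovering the claimed file-size characterization.
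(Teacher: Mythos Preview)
Your proposal is correct and builds the same scenario as the paper: fail and repair $k_c$ nodes in block $1$, then block $2$, \ldots, through block $b-\rho$, choosing helpers so that every regeneration in block $i$ reuses all $(i-1)k_c$ previously regenerated nodes. You present this as an explicit $(A,B)$ cut in the information flow graph; the paper carries out the identical calculation as an entropy chain $H(\Xc_{\Oc})\le\sum_{i}k_c\min\{\alpha,(d-(i-1)k_c)\beta\}$. These are equivalent, and your helper-count bookkeeping (yielding exactly $d-(i-1)k_c$ fresh helpers per regeneration in block $i$) matches the paper's.

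The one substantive piece you skip is the paper's order-invariance step. After bounding via the block-by-block order $\Oc$, the paper goes on to show that \emph{every} failure order produces the same cut value, by decomposing an arbitrary permutation into adjacent transpositions and verifying that each swap leaves the total cut unchanged --- and this last verification only goes through when $\alpha=\Mc/k$ or $\alpha=d\beta$, which is precisely why the theorem carries that hypothesis. Your single-order cut already establishes \eqref{eq:boundcase1} as \emph{an} upper bound (indeed for all $\alpha$, so strictly more than the theorem claims), but it does not account for the role of the stated hypothesis. Your closing remark that the hypothesis makes the bound ``interpretable'' at the two corner points is a true downstream consequence, not the reason the paper imposes it; the actual reason is that order-invariance (and hence the identification of \eqref{eq:boundcase1} as the minimum over all repair orders, not just a particular one) only holds at those two $\alpha$ values.
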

\begin{proof}
Denote the order of node repairs with corresponding (node) indices in the ordered set $\Oc$, where $|\Oc|=k_c(b-\rho)$. We will show that any order of failures results in the same cut value. We have $\mathscr{C}_{\Oc} = \sum_{i=1}^{b-\rho}k_c\min \left\{\alpha,(d-(i-1)k_c\beta) \right\}$. This follows by considering that block $i$ connects to $i-1$ blocks, each having $k_c$ nodes that are repaired and connected to DC, which makes the cut value of $(d-(i-1)k_c)\beta$. We denote the stored data of node $j$ in block $l$ as $\Xc_{l,j}$ and downloaded data to this node as $\Rc_{l,j}$. The set $\Oc$ includes an ordered set of node incides $(i,j)$ contacted by DC, where the order indicates the sequence of failed and repaired nodes. (For instance, $\Oc = \left\{(1,1),(1,2),(2,3),\cdots \right\}$ refers to a scenario in which third node of second blocks is repaired after second node of first block.) We consider DC connecting to the nodes in $$\Oc= \left\{ (1,1),(1,2),\cdots,(1,k_c),(2,1),(2,2),\cdots,(2,k_c),\cdots,(b-\rho,1),\cdots,(b-\rho,k_c) \right\},$$ where each block has repairs sequentially. Due to data reconstruction property, it follows that $H(\Fc|\Xc_{\Oc})=0$. Accordingly, we have $$\Mc=H(\Fc)=H(\Fc)-H(\Fc|\Xc_{\Oc})=I(F;\Xc_{\Oc})\leq H(\Xc_{\Oc}).$$ We consider the following bound on the latter term $H(\Xc_{\Oc})$ to obtain $\mathscr{C}_{\Oc}$, i.e., \textit{``cut value"} for the repair order given by $\Oc$. (Note that the bounds given below correspond to introducing \textit{``cuts"} in the corresponding information flow graph.) We denote $\Oc(i) \triangleq \left\{ (i,1),(i,2),\cdots,(i,k_c) \right\} $ as the nodes contacted in $i$-th block. We consider each repaired node contacts the previously repaired nodes in $\Oc$. Accordingly, we consider the entropy calculation given by %$$H(\Xc_{\Oc})=\sum_{o=1}^{b-\rho} H(\Xc_{\Oc(o)}|\Xc_{\Oc(1)},\cdots,\Xc_{\Oc_{o-1}}).$$
\begin{eqnarray}
    H(\Xc_{\Oc}) &=& \sum_{i=1}^{b-\rho} H(\Xc_{\Oc(i)}|\Xc_{\Oc(1)},\cdots,\Xc_{\Oc_{(i-1)}}) 
    \leq \sum_{i=1}^{b-\rho} \sum_{(i,j)\in \Oc(i)} H(\Xc_{(i,j)}|\Xc_{\Oc(1)},\cdots,\Xc_{\Oc_{(i-1)}})   \\
    &\stackrel{(a)}{\leq}& \sum_{i=1}^{b-\rho} \sum_{(i,j)\in \Oc(i)} \min \left\{\alpha,(d-(i-1)k_c)\beta \right\}   \stackrel{(b)}{=} \sum_{i=1}^{b-\rho}k_c\min \left\{\alpha,(d-(i-1)k_c)\beta \right\} \triangleq \mathscr{C}_{\Oc} \nonumber
\label{eq:entropy_case1-condt_ent}
\end{eqnarray}
where (a) follows as $H(\Xc_{(i,j)}|\Xc_{\Oc(1)},\cdots,\Xc_{\Oc(i-1)})\leq H(\Xc_{(i,j)})\leq \alpha$ and $$H(\Xc_{(i,j)}|\Xc_{\Oc(1)},\cdots,\Xc_{\Oc(i-1)}) \leq H(\Rc_{(i,j)}|\Xc_{\Oc(1)},\cdots,\Xc_{\Oc(i-1)}) \leq (d-(i-1)k_c)\beta,$$ as $i-1$ blocks containing previously repaired nodes contributes to $(i-1)k_c\beta$ symbols in $\Rc_{(i,j)}$. Therefore, as the bound on $H(X_{(i,j)})$ above holds for any $j$ such that $(i,j)\in \Oc_{(i)}$, we have (b), from which we obtain the following bound on file size $\Mc$ $$\Mc  \leq \sum_{i=1}^{b-\rho}k_c\min \left\{\alpha,(d-(i-1)k_c)\beta \right\}.$$ Interchanging the failure order in $\Oc$ at indices $k_c$ and $k_c+1$, we obtain another order $\Oc^*$. Using the same bounding technique above, the resulting cut value can be calculated as $\mathscr{C}_{\Oc^*} = (k_c-1)\min \left\{\alpha,d\beta \right\} + \min \left\{\alpha,(d-(k_c-1))\beta \right\} + \min \left\{\alpha,(d-1)\beta \right\} + (k_c-1)\min \left\{\alpha,(d-k_c)\beta \right\} + \sum_{i=3}^{b-\rho}k_c\min \left\{\alpha,(d-(i-1)k_c)\beta \right\}$. Note that the first term here corresponds to the first set of $k_c-1$ nodes in the first blocks, second term corresponds to the first node repaired in the second block, third term corresponds to the remaining node in the first block, fourth terms corresponds to remaining $k_c-1$ nodes in the second block, and the last term corresponds to the remaining blocks. We observe that $\mathscr{C}_{\Oc}=\mathscr{C}_{\Oc^*}$, i.e., the total cut values induced by $\Oc$ and $\Oc^*$ are the same. Consider $\mathscr{C}_{\Oc^*}-\mathscr{C}_{\Oc}$, which evaluates to $$\mathscr{C}_{\Oc^*}-\mathscr{C}_{\Oc}= \min \left\{\alpha,(d-1)\beta \right\} - \min \left\{\alpha,d\beta \right\} + \min \left\{\alpha,(d-(k_c-1))\beta \right\} - \min \left\{\alpha,(d-k_c)\beta \right\}.$$ We observe that $\mathscr{C}_{\Oc^*}-\mathscr{C}_{\Oc}=0$ at both $\alpha=\frac{\Mc}{k}$ and $\alpha=d\beta$ operating points. (For $\alpha=\frac{\Mc}{k}$, all $\min\left\{ \right\}$ terms result in $\alpha$ and cancel each other. For the $\alpha=d\beta$ point, $\min\left\{ \right\}$ terms remove $\alpha$ (as $\alpha=d\beta$) and remaining terms cancel each other.) %\htext{?1? Accordingly, theorems would be modified to file size bound for minimum storage $)\alpha=\frac{\Mc}{k})$ or minimum bandwidth $(\alpha=d\beta)$ cases is given by.?1?} The equality holds for any order of failures, which is to be proved next.

%The equality holds for swapping of any two indices (not just $k_c$ and $k_c+1$) as well. To prove that, we'll use the following lemmas.

The equivalence analysis above can be extended to any pair of orders. In the following, we show this through the following lemmas. We first observe that any order can be obtained from another by swapping adjacent failures.

\begin{lemma}
Any given failure order $\Oc^*$ can be obtained by permuting the elements in the order $\Oc$, and the underlying permutation $\pi_{\Oc \to \Oc^*}$ operation can be decomposed into stages of swapping of adjacent elements.
%Every order of failures, i.e. $\Pi(\Oc)$, can be obtained from a single failure order, $\Oc$, by swapping adjacent failures, which can be considered as permutation by swapping. 
\label{thm:lemma_permut}
\end{lemma}

We provide an example here, the lemma above generalizes this argument to any pair of orders. Consider $\Oc=(1,2,3)$ and $\Oc^*=(3,2,1)$, the permutation $\pi_{\Oc \to \Oc^*}$ is given by position mappings $\left\{1\to3,2\to2,3\to1\right\}$. This can be obtained by composition of three permutations $\left\{1\to1,2\to3,3\to2\right\}$, for swapping 2 and 3, $\left\{1\to2,2\to1,3\to3\right\}$, for swapping 1 and 3, and $\left\{1\to1, 2\to3,3\to2\right\}$, for swapping 1 and 2. 

Utilizing Lemma~\ref{thm:lemma_permut}, it remains to show that the cut value remains the same if we interchange adjacent failures in any failure order $\Oc$. We show that this holds in the following lemma.

\begin{lemma} 
For any order $\Oc$ swapping any adjacent failures does not result in min-cut value $\mathscr{C}_\Oc$ to change. 
\end{lemma}

To show this, consider two orders $\Oc^{1}$ and $\Oc^{2}$, where $\Oc^{2}$ is obtained by swapping order of failures at locations $j$ and $j+1$, $\Oc^2=\pi(\Oc^1)$. Then, we can say that the cut values up to $j-1$ and the cut values after $j+1$ are individually the same for both $\Oc^{1}$ and $\Oc^{2}$. Furthermore, there are two possible cases for swapped failures $j$ and $j+1$; i) either both are from the same block, ii) they are from different blocks. For the former case, the swapping does not affect the cut values for failures $j$ and $j+1$. In the latter, first note that when $\alpha=\frac{\Mc}{k}$, there is no change in cut value, $\mathscr{C}_{\Oc^1}=\mathscr{C}_{\Oc^2}$, hence we'll only focus on $\alpha=d\beta$ case. Assume for the $\Oc^{1}$ we have $(d-i)\beta$ for $i^{\rm{th}}$ and $(d-i^*)\beta$ for $(i+1)^{\rm{th}}$ failures. Then, $\Oc^{2}$ should have $(d-(i^*-1))\beta$ and $(d-(i+1))\beta$ respectively. Note that the sums are still the same, $(d-i)\beta+(d-i^*)\beta=(d-(i^*-1))\beta+(d-(i+1))\beta$, hence we can conclude that swapping any two adjacent failures does not change the min-cut value, $\mathscr{C}_{\Oc^1}=\mathscr{C}_{\Oc^2}$.   

Combining two lemmas, we conclude that every order of failures has the same cut value as $\Oc$, which is $\mathscr{C}_{\Oc}$.
%We can \htext{?2?Did you try all possibe cases? How to show this clearly? Ex: $k_c-1$ in first block, 1 in second block, 1 in third block, 1 in first block.?2?} keep interchanging the elements of $\Oc$ in an iterative fashion, and the resulting order at each iteration step as a cut value same as $\mathscr{C}_{\Oc}$
\end{proof}

%\begin{proof}
%Denote the order of indices of node failures by $\Oc$ such that $|\Oc|=k_c(b-\rho)$. We will prove that any order of failures will result in the same cut. Let $\Oc$ be an order such that first $k_c$ failures occur in the first block, next $k_c$ failures occur in the second block and so on. Denote by $\mathscr{C}_{\Oc}$ the total cut generated by $\Oc$, $\mathscr{C}_{\Oc} = \sum_{i=1}^{b-\rho}k_c\min \left\{\alpha,\beta(d-(i-1)k_c) \right\}$. Interchanging the failures at indices $k_c$ and $k_c+1$, we can obtain another order $\Oc^*$. Hence, the resulting cut can be calculated as $\mathscr{C}_{\Oc^*} = (k_c-1)\min \left\{\alpha,d\beta \right\} + \min \left\{\alpha,\beta(d-(k_c-1)) \right\} + \min \left\{\alpha,\beta(d-1) \right\} + (k_c-1)\min \left\{\alpha,\beta(d-k_c) \right\} + \sum_{i=3}^{b-\rho}k_c\min \left\{\alpha,\beta(d-(i-1)k_c) \right\}$. One can observe that $\mathscr{C}_{\Oc}=\mathscr{C}_{\Oc^*}$ that is the total cut generated by $\Oc$ and $\Oc^*$ is the same. The equality holds for swapping of any two indices (not just $k_c$ and $k_c+1$) at this first iteration step as well. We can keep interchanging  the indices of any two failures at each iteration step so that new orders would be generated for which the total cut is still the same as $\mathscr{C}_{\Oc}$.
%\end{proof}

The key distinction to be made in our scenario is that all of the nodes that are failed and repaired are utilized in the repair process of the subsequent failures, which is why the order of failures does not matter. Note that for the special case of $k_c=1$, the above bound reduces to the classical bound given in \cite{Dimakis:Network10}, albeit only for $\alpha=\frac{\Mc}{k}$ or $\alpha=d\beta$ case. We obtain the following corner points in the trade-off region.

\begin{corollary}
For $d_r \geq k_c$ and $\sigma \leq \rho$, corresponding BFR-MSR and BFR-MBR points can be found as follows.
\begin{equation}
(\alpha_{\textrm{BFR-MSR}},\gamma_{\textrm{BFR-MSR}}) = \left(\frac{\Mc}{k},\frac{\Mc d}{kd-\frac{k^2(b-\rho-1)}{b-\rho}}\right),
\label{eq:case1_MSR}
\end{equation}
\begin{equation}
(\alpha_{\textrm{BFR-MBR}},\gamma_{\textrm{BFR-MBR}}) = \left(\frac{\Mc d}{kd-\frac{k^2(b-\rho-1)}{2(b-\rho)}},\frac{\Mc d}{kd-\frac{k^2(b-\rho-1)}{2(b-\rho)}}\right).
\label{eq:case1_MBR}
\end{equation}

\label{cor:case1}
\end{corollary}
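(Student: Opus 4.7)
The plan is to specialize the file-size bound in \eqref{eq:boundcase1} at each of the two operating points of interest, solve for the remaining free parameter ($\beta$), and then read off the pair $(\alpha,\gamma)$.

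For the BFR-MSR point, I would substitute $\alpha=\alpha_{\textrm{BFR-MSR}}=\Mc/k$ and look for the smallest $\beta$ (hence the smallest $\gamma=d\beta$) that is still consistent with \eqref{eq:boundcase1}. The key observation is that for the sum in \eqref{eq:boundcase1} to reach $\Mc$, each of the $b-\rho$ summands must attain $k_c\alpha$; equivalently, we need $(d-(i-1)k_c)\beta\geq\alpha$ for every $i\in\{1,\dots,b-\rho\}$. The binding index is $i=b-\rho$, since $(d-(i-1)k_c)\beta$ is decreasing in $i$. Setting this last constraint to equality gives $\beta=\alpha/(d-(b-\rho-1)k_c)$. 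Using $k_c=k/(b-\rho)$ to simplify $(b-\rho-1)k_c=k(b-\rho-1)/(b-\rho)$ and substituting $\alpha=\Mc/k$ then yields $\gamma_{\textrm{BFR-MSR}}=d\beta=\Mc d/\bigl(kd-k^2(b-\rho-1)/(b-\rho)\bigr)$, matching \eqref{eq:case1_MSR}.

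For the BFR-MBR point, I would use the defining relation $\alpha=d\beta$, which forces $\min\{\alpha,(d-(i-1)k_c)\beta\}=(d-(i-1)k_c)\beta$ for every $i\geq 1$, so the bound \eqref{eq:boundcase1} collapses to a straightforward arithmetic series,
\begin{equation*}
\Mc\;\leq\;k_c\beta\sum_{i=1}^{b-\rho}\bigl(d-(i-1)k_c\bigr)=k_c\beta\Bigl[(b-\rho)d-k_c\tfrac{(b-\rho)(b-\rho-1)}{2}\Bigr].
\end{equation*}
Using $k_c(b-\rho)=k$, this simplifies to $\Mc\leq k\beta\bigl[d-k(b-\rho-1)/(2(b-\rho))\bigr]$. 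Solving for $\beta$ at equality and then multiplying by $d$ gives $\alpha_{\textrm{BFR-MBR}}=\gamma_{\textrm{BFR-MBR}}=\Mc d/\bigl(kd-k^2(b-\rho-1)/(2(b-\rho))\bigr)$, which is exactly \eqref{eq:case1_MBR}.

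Neither step is really difficult; the only delicate point is justifying the choice of $\beta$ at the MSR point. Specifically, I must argue that the binding index in the $\min\{\cdot,\cdot\}$ expression is $i=b-\rho$ (so that all other summands are also equal to $k_c\alpha$), and that the resulting $\beta$ is indeed the smallest admissible value, i.e.\ that no smaller $\beta$ makes \eqref{eq:boundcase1} consistent with $\Mc=k\alpha$. The monotonicity of $(d-(i-1)k_c)\beta$ in $i$ handles both points in one line, so there is no real obstacle; the rest is algebraic bookkeeping using $k_c=k/(b-\rho)$.
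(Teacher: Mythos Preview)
Your proposal is correct and follows essentially the same approach as the paper: the paper likewise substitutes $\alpha=\Mc/k$ into \eqref{eq:boundcase1}, identifies the binding constraint $(d-(b-\rho-1)k_c)\beta\geq\alpha$ to obtain the minimum $\beta$ for the MSR point, and sets $\alpha=d\beta$ to collapse the bound into the arithmetic sum $\sum_{i=1}^{b-\rho}k_c(d-(i-1)k_c)\beta$ for the MBR point. Your write-up is in fact more explicit than the paper's about why $i=b-\rho$ is the binding index and why the $\min$ reduces cleanly at MBR, but the argument is the same.
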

 
\begin{proof}
For BFR-MSR point, we set $\alpha=\frac{\Mc}{k}$ in \eqref{eq:boundcase1} and obtain the requirements $[d-(b-\rho-1)k_c\beta \geq \alpha]$ from which we obtain that minimum $\gamma$ occurs at $\gamma_{\textrm{BFR-MSR}}=\frac{\alpha}{d-(b-\rho-1)k_c}d$ as $\beta$ is lower bounded by $\frac{\alpha}{d-(b-\rho-1)k_c}$. BFR-MBR point, on the other hand, follows from the bound $\Mc \leq \sum_{i=1}^{b-\rho} k_c[d-(i-1)k_c]\beta$ as $\alpha=d\beta$ at the minimum bandwidth. More specifically, minimum $\beta$ is given by $\beta=\frac{\Mc}{\sum_{i=1}^{b-\rho} k_c[d-(i-1)k_c]}$ and $\gamma_{\textrm{BFR-MBR}}=\alpha_{\textrm{BFR-MSR}}=d\beta$.
\end{proof}  

\begin{remark}
When $b=\rho+1$, we observe that $\alpha_{\textrm{BFR-MSR}}=\gamma_{\textrm{BFR-MSR}}=\alpha_{\textrm{BFR-MBR}}=\gamma_{\textrm{BFR-MBR}}=\frac{\Mc}{k}$.
\end{remark}

\subsubsection{Case I.B: $\sigma > \rho$}

For the case of having $\sigma > \rho$, we have $|\Bc^{c}|=b-\rho > |\Bc^r|=b-\sigma$. Noting that the helper nodes are chosen from the ones that are already connected to DC, we consider that $\Bc^{c} \supset \Bc^r$. Here, the min-cut analysis similar to the one given in the proof of Lemma \ref{thm:case2} is same as having a system with $\bar{b}=b-\rho,\bar{\rho}=0,\bar{\sigma}=\sigma-\rho$ since the analysis for the file size bound only utilizes $b-\rho$ blocks and repairs occur by connecting to a subset of these $b-\rho$ blocks. That is, the remaining $\rho$ blocks do not contribute to the cut value. Therefore, we conclude that $\mathscr{C}(b,\rho,\sigma)=\mathscr{C}(b-\rho,0,\sigma-\rho)$ when $\sigma>\rho$. %Utilizing this observation, we obtain the following bound on the file size. 

%When we have $\sigma > \rho$ and noting that the helper nodes should be chosen from the ones that are already connected to DC, we can say that $\Bc^{dc} \supset \Bc^r$. Hence we have $|\Bc^{dc}|=b-\rho > |\Bc^r|=b-\sigma$. Then, the min-cut in this system is same as having a system with $\bar{b}=b-\rho,\bar{\rho}=0,\bar{\sigma}=\sigma-\rho$ since the min-cut only utilizes $b-\rho$ blocks and repairs happens by connecting to a subset of these $b-\rho$ blocks and the remaining $\rho$ blocks do not contribute to the cut. Hence, we conclude that $\mathscr{C}(b,\rho,\sigma)=\mathscr{C}(b-\rho,0,\sigma-\rho)$ when $\sigma>\rho$.

%The min-cut is then obtained when we have $k_c$ failures per block from $b-\sigma$ blocks first since any upcoming failures will only connect to some set of blocks of size $b-\sigma$ so by downloading from all of the already repaired nodes, we can obtain the min-cut. The order among those failures does not matter since the sum of the cut remains the same. We have the following upper bound on the file size.

\begin{lemma}
An upper bound on the file size when $d_r\geq k_c$ and $\sigma > \rho$ is given by
\begin{align}
\Mc  \leq  & \sum_{i=1}^{b-\sigma}k_c\min \left\{\alpha,\beta(d-(i-1)k_c) \right\} + \sum_{i=b-\sigma+1}^{b-\rho} k_c\min \left\{\alpha,\beta(d-(b-\sigma)k_c) \right\}.
\label{eq:boundcase2}
\end{align}
\label{thm:case2}
\vspace{-0.25in}
\end{lemma}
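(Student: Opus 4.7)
The plan is to mimic the information-flow cut argument from the proof of Theorem~\ref{thm:case1}, adapted to the regime $\sigma > \rho$. As noted in the paragraph just before the lemma, the $\rho$ blocks never contacted by DC contribute nothing to the cut, so one may take the helper sets $\Bc_i^r$ to lie entirely inside $\Bc^{c}$ and work with an effective system $(\bar b,\bar\rho,\bar\sigma)=(b-\rho,\,0,\,\sigma-\rho)$. This reduction is what forces the ``capping'' of the cut-subtraction at $(b-\sigma)k_c\beta$.

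Concretely, I would use the block-by-block failure ordering $\Oc=\{(1,1),\ldots,(1,k_c),(2,1),\ldots,(b-\rho,k_c)\}$ and telescopically decompose $H(\Xc_\Oc)$, bounding each new term $H(\Xc_{(i,j)}\mid \Xc_{\Oc(1)},\ldots,\Xc_{\Oc(i-1)})$ by $\min\{\alpha,\,H(\Rc_{(i,j)}\mid\cdot)\}$ exactly as in Theorem~\ref{thm:case1}. The analysis then splits on whether the current block index $i$ lies in $[1,b-\sigma]$ or in $[b-\sigma+1,\,b-\rho]$.

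In the first regime, only $i-1\le b-\sigma-1$ previously repaired blocks exist, so the helper set $\Bc_i^r$ of size $b-\sigma$ can be chosen to contain all of them, and the overlap with $\Xc_{\Oc(1)},\ldots,\Xc_{\Oc(i-1)}$ accounts for $(i-1)k_c\beta$ of the $d\beta$ downloaded symbols, yielding the term $\min\{\alpha,(d-(i-1)k_c)\beta\}$. In the second regime, $i-1\ge b-\sigma$ previously repaired blocks exist (this is where $\sigma>\rho$ is essential, since even at $i=b-\rho$ one still needs $b-\rho-1\ge b-\sigma$), but only $b-\sigma$ of them can be selected as helpers; picking $\Bc_i^r$ entirely inside the already-repaired set maximizes the overlap at $(b-\sigma)k_c\beta$, yielding the capped term $\min\{\alpha,(d-(b-\sigma)k_c)\beta\}$. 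Summing over the two regimes reproduces the asserted bound.

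The main obstacle is conceptual rather than computational: one must recognize that, although more previously-repaired blocks are available for $i>b-\sigma$, the hard constraint that any repair contacts only $b-\sigma$ helper blocks rigidly limits how much information can be reused, which in turn forces the cut-saving term to saturate. Because the statement is only an upper bound, no order-invariance argument is needed; it suffices to exhibit the above single failure ordering and compute its cut value. Reusing the $(\bar b,\bar\rho,\bar\sigma)=(b-\rho,0,\sigma-\rho)$ reduction also sidesteps any bookkeeping about the untouched $\rho$ blocks, keeping the proof a clean structural adaptation of Theorem~\ref{thm:case1}.
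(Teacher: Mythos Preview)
Your proposal is correct and follows essentially the same approach as the paper: both use the block-by-block failure ordering, split the analysis at the threshold index where $b-\sigma$ blocks have been exhausted, invoke Theorem~\ref{thm:case1} for the first segment, and observe that every subsequent repair can draw all $b-\sigma$ helper blocks from already-repaired nodes, yielding the capped term. Your remark that a single ordering suffices for an upper bound (no order-invariance needed) matches the paper's stance, which likewise does not claim minimality of this cut here but defers that to a conjecture.
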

\begin{proof}
Let $\Oc$ be an order such that first $k_c$ failures occur in the first block, the next $k_c$ failures occur in the second block and so on. Denote by $\mathscr{C}_{\Oc}$ the total cut value induced by $\Oc$. (The analysis detailed in the proof of Theorem \ref{thm:case1} is followed here.) Consider the node indexed by $i=k_c(b-\sigma)$ so that order $\Oc$ can be split into two parts as $\Oc^{i-}$ and $\Oc^{i+}$, where $\Oc^{i-}$ represents the failures up to (and including) the node $i$, and $\Oc^{i+}$ represents the remaining set of failures. 

Since any node failure contacts $b-\sigma$ blocks and noting that $\Oc^{i-}$ includes exactly $b-\sigma$ blocks, any node failure in $\Oc^{i+}$ would contribute to the cut value as $\min\left\{ \alpha, (d-(b-\sigma)k_c)\beta \right\}$. On the other hand, $\mathscr{C}_{\Oc^{i-}}$ follows from Theorem \ref{thm:case1}. Combining $\mathscr{C}_{\Oc^{i-}}$ and $\mathscr{C}_{\Oc^{i+}}$, we get \eqref{eq:boundcase2}.
\end{proof}

\begin{remark}
We conjecture that the order given in the proof above corresponds to the order producing the min-cut. We verified this with numerical analysis for systems having small $n$ values. Although a general proof of this conjecture is not established yet \footnote{The main difficulty for this case is having $\sigma > \rho$, which makes some failed nodes not being utilized in the repair process. That is why the proposed order is constructed such a way that we try to maximize the use of failed nodes in the repair process of subsequent nodes.}, we were able to construct codes achieving the stated bound. Therefore we conjecture the following MSR/MBR points for this case.
\end{remark}

Utilizing the bound given in Lemma \ref{thm:case2}, we obtain following result (proof is similar to  Corollary \ref{cor:case1} and omitted for brevity).

\begin{conjecture}
For $d_r \geq k_c$ and $\sigma > \rho$, corresponding BFR-MSR and BFR-MBR points can be found as follows.
\begin{equation}
(\alpha_{\textrm{BFR-MSR}},\gamma_{\textrm{BFR-MSR}}) = \left(\frac{\Mc}{k},\frac{\Mc d}{kd-\frac{k^2(b-\sigma)}{b-\rho}}\right),
\label{eq:case2_MSR}
\end{equation}

\begin{equation}
(\alpha_{\textrm{BFR-MBR}},\gamma_{\textrm{BFR-MBR}}) = \left(\frac{\Mc d}{kd-\frac{k^2(b-\sigma)(b+\sigma-2\rho-1)}{2(b-\rho)^2}},\frac{\Mc d}{kd-\frac{k^2(b-\sigma)(b+\sigma-2\rho-1)}{2(b-\rho)^2}}\right).
\label{eq:case2_MBR}
\end{equation}
\label{cor:case2}
\end{conjecture}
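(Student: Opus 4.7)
The plan is to mirror the proof of Corollary~\ref{cor:case1}, but substitute in the bound from Lemma~\ref{thm:case2}. Since this is labeled a conjecture (rather than a corollary) solely because Lemma~\ref{thm:case2} itself rests on the unverified claim that the stated failure order induces the min-cut, the task reduces to routinely evaluating the Lemma~\ref{thm:case2} bound at the two endpoints $\alpha=\frac{\Mc}{k}$ and $\alpha=d\beta$, and checking that the resulting expressions match \eqref{eq:case2_MSR} and \eqref{eq:case2_MBR}.

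For the BFR-MSR endpoint, I would set $\alpha=\frac{\Mc}{k}$. The Lemma~\ref{thm:case2} bound requires all $\min\{\cdot,\cdot\}$ terms to evaluate to $\alpha$ in order for the bound to be tight, and the binding constraint is the last one, namely $\beta(d-(b-\sigma)k_c)\geq\alpha$. This forces $\beta\geq \frac{\alpha}{d-(b-\sigma)k_c}$, so the minimum repair bandwidth is $\gamma=d\beta=\frac{\Mc d}{k\bigl(d-(b-\sigma)k_c\bigr)}$. Substituting $k_c=\frac{k}{b-\rho}$ inside the parenthesis gives $\gamma_{\textrm{BFR-MSR}}=\frac{\Mc d}{kd-\frac{k^2(b-\sigma)}{b-\rho}}$, as claimed in \eqref{eq:case2_MSR}.

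For the BFR-MBR endpoint, I would set $\alpha=d\beta$, which makes every $\min$ term equal to the corresponding $\beta(d-(i-1)k_c)$ (or $\beta(d-(b-\sigma)k_c)$) expression. The bound then becomes
\begin{equation*}
\Mc\leq k_c\beta\left[(b-\sigma)d-k_c\tfrac{(b-\sigma)(b-\sigma-1)}{2}\right]+k_c\beta(\sigma-\rho)\bigl(d-(b-\sigma)k_c\bigr).
\end{equation*}
Factoring and collecting the $d$ terms gives $(b-\rho)d$, and collecting the $k_c$ terms gives $k_c(b-\sigma)\bigl[\tfrac{b-\sigma-1}{2}+(\sigma-\rho)\bigr]=k_c\tfrac{(b-\sigma)(b+\sigma-2\rho-1)}{2}$. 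Substituting $k_c=\frac{k}{b-\rho}$ and solving for $\beta$ yields $\beta=\frac{\Mc}{kd-\frac{k^2(b-\sigma)(b+\sigma-2\rho-1)}{2(b-\rho)^2}}$, and since $\alpha=d\beta=\gamma$ at the MBR point, this delivers \eqref{eq:case2_MBR}.

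The only genuine obstacle is bookkeeping: making sure the second sum in Lemma~\ref{thm:case2} has exactly $\sigma-\rho$ terms (since $b-\rho-(b-\sigma+1)+1=\sigma-\rho$, which is positive precisely because $\sigma>\rho$), and that both sums are telescoped correctly before substituting $k_c=\frac{k}{b-\rho}$. I would also verify the boundary consistency with Corollary~\ref{cor:case1} by checking that, formally setting $\sigma=\rho$, the MBR expression reduces to \eqref{eq:case1_MBR} (since $b+\sigma-2\rho-1=b-\rho-1$ and $b-\sigma=b-\rho$, giving $\frac{(b-\rho)(b-\rho-1)}{2(b-\rho)^2}=\frac{b-\rho-1}{2(b-\rho)}$) and similarly for MSR, providing a sanity check that the two cases glue together at the boundary $\sigma=\rho$.
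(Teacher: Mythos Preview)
Your proposal is correct and follows exactly the approach the paper prescribes: the paper states that the result follows from Lemma~\ref{thm:case2} by the same method as Corollary~\ref{cor:case1} and omits the details, and you have carried out precisely that computation, including correctly identifying the binding constraint $\beta(d-(b-\sigma)k_c)\geq\alpha$ for MSR and the telescoping of the two sums for MBR. Your observation about why the statement is a conjecture rather than a corollary also matches the paper's own remark.
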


If code constructions achieve the points above (\eqref{eq:case2_MSR} and \eqref{eq:case2_MBR}), then this will result in an optimal BFR-MSR/BFR-MBR. Later, we propose code constructions achieving these points, see Section~\ref{subsec:dcbd}.

\subsection{Case II: $d_r < k_c$}
\label{subsec:d_r<k_c}
We first note that $d\geq k$. (This follows similarly to the analysis for regenerating codes, as otherwise one can regenerate each node by contacting $d<k$ nodes, and obtain the stored data with less than $k$ nodes.) Assume $\rho \leq \sigma$, then $b-\sigma \leq b-\rho$. This, together with $d_r < k_c$ implies that repair operation contacts to less number of blocks and less number of nodes per block as compared to data access. If this is the case, via a repair operation, all the remaining nodes can be regenerated and data access can be completed using repair operation. Therefore, the valid scenario is $\rho > \sigma$. (This is similar to the reasoning of $d \geq k$ assumption in regenerating codes.)

%Having $ d \geq k $ implies $d_r(b-\sigma) \geq k_c(b-\rho)$. In this case, having $d_r<k_c$ necessarily implies $b-\sigma > b-\rho \iff \rho > \sigma$, as $d \geq k$ for regenerating codes. Therefore, we only focus on $\rho > \sigma$ case in this section.
%Since for regenerating codes $d \geq k$ is assumed, having $d_r<k_c$ necessarily requires $b-\sigma > b-\rho \iff \rho > \sigma$. Therefore, we will only study $\rho > \sigma$ case in this section.

%Having $d_r < k_c$ means that some of the nodes that are connected to DC will not be utilized in repair process. On top of that, we have $\rho > \sigma$ which means that there will be some blocks that are used in repair but not connected to DC. 

\begin{theorem}
The optimal file size when $d_r < k_c$ and $\rho > \sigma$ is given by
\begin{equation}
\Mc  \leq  \sum_{i=1}^{b-\rho}d_r\min \left\{\alpha,(d-(i-1)d_r)\beta \right\} +  \sum_{i=1}^{b-\rho} (k_c-d_r) \min \left\{\alpha,(d-(b-\rho-1)d_r)\beta \right\}.
\label{eq:boundcase3}
\end{equation}
\label{thm:case3}
\vspace{-0.2in}
\end{theorem}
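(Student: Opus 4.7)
The plan is to mirror the information-flow / entropy argument in the proof of Theorem~\ref{thm:case1}, but with a failure order tailored to the regime $d_r < k_c$. The key observation is that, for the purpose of an upper bound on $\Mc$, a DC-contacted block becomes ``saturated'' as a helper once it contains $d_r$ repaired nodes: since each helper block contributes only $d_r$ helpers per repair, further repairs in that block do not increase the overlap available to future failures. This motivates a two-phase schedule that first brings every DC-contacted block up to exactly $d_r$ repairs, and only afterwards completes the remaining repairs in each block.

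Step 1 (failure order and setup). List $k=k_c(b-\rho)$ failures in an order $\Oc$ as follows. In Phase~1, fail $d_r$ nodes in block $1$, then $d_r$ in block $2$, $\ldots$, then $d_r$ in block $b-\rho$. In Phase~2, fail the remaining $k_c-d_r$ nodes in block $1$, then in block $2$, $\ldots$, then in block $b-\rho$. DC contacts all $k$ repaired nodes, so by data reconstruction $H(\Fc\mid \Xc_\Oc)=0$ and hence $\Mc\leq H(\Xc_\Oc)$. Apply the chain rule as in Theorem~\ref{thm:case1} and bound each term by
\[
H(\Xc_{(i,l)} \mid \Xc_{\Oc<(i,l)}) \;\leq\; \min\bigl\{\alpha,\; H(\Rc_{(i,l)}\mid \Xc_{\Oc<(i,l)})\bigr\},
\]
where for the upper bound I may assume the repair selects its $d_r$ helpers per helper block from previously repaired nodes whenever possible.

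Step 2 (per-failure bounds). For a Phase-1 failure in block $i$, the previously repaired nodes outside block $i$ live in blocks $1,\ldots,i-1$ with $d_r$ each. Since $i-1\leq b-\rho-1 < b-\sigma$ (using $\rho>\sigma$), all $i-1$ earlier blocks can be chosen among the $b-\sigma$ helper blocks, and in each one all $d_r$ helpers can be drawn from previously repaired nodes because $d_r\leq k_c$. The overlap is thus $(i-1)d_r$ and the per-failure bound is $\min\{\alpha,(d-(i-1)d_r)\beta\}$; summing $d_r$ failures per block across $i=1,\ldots,b-\rho$ gives the first sum of \eqref{eq:boundcase3}. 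For a Phase-2 failure in any block $i$, every other DC-set block already carries at least $d_r$ repaired nodes from Phase~1, so using those $b-\rho-1$ blocks as helper blocks yields overlap $(b-\rho-1)d_r$; the remaining $b-\sigma-(b-\rho-1)=\rho-\sigma+1$ helper-block slots are filled from the $\rho$ blocks outside the DC set, which is feasible since $\sigma\geq 1$ (as $b-\sigma\leq b-1$). Each Phase-2 failure then yields the constant bound $\min\{\alpha,(d-(b-\rho-1)d_r)\beta\}$, and summing $(k_c-d_r)$ failures per block across $b-\rho$ blocks gives the second sum of \eqref{eq:boundcase3}.

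Step 3 (expected obstacle). The subtlety that distinguishes this case from Theorem~\ref{thm:case1} is that prior failures \emph{inside} block $i$ never reduce the per-failure bound: because the $b-\sigma$ helper blocks must exclude block $i$, the $l-1$ in-block predecessors cannot serve as helpers, and only overlap with previously failed \emph{other} blocks counts. The two-phase ordering is chosen precisely to absorb this asymmetry so that, once Phase~1 saturates every DC-set block at $d_r$ repairs, all later failures achieve the same constant bound regardless of block order. The delicate verification is the helper-feasibility in Phase~2, which is exactly what the hypothesis $\rho>\sigma$ buys us (together with the implicit $\sigma\geq 1$). As in Case~I, this establishes the upper bound via a specific schedule; matching this schedule to a genuine min-cut---and ultimately proving tightness---is deferred to the explicit constructions given later in the paper.
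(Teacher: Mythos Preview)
Your proposal is correct and uses essentially the same two-phase failure ordering as the paper: first $d_r$ repairs in each of the $b-\rho$ DC-contacted blocks, then the remaining $k_c-d_r$ repairs in each block. Your per-failure bounds and feasibility checks match the paper's computation of the cut for this schedule.

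The one difference worth noting is that the paper goes a step further than you do. Beyond exhibiting this schedule and computing its cut (which, as you correctly observe, already suffices for the inequality $\Mc\leq\text{RHS}$), the paper argues by contradiction that this ordering actually \emph{minimizes} the cut value over all orderings: it shows that if $\Oc^{i-}$ (the first $d_r(b-\rho)$ failures) did not contain exactly $d_r$ failures from each block, then either some failures would be wasted on non-DC blocks or some helper block would only supply $d_r-1$ repaired helpers, strictly increasing the cut; and that once $\Oc^{i-}$ is fixed this way, every failure in $\Oc^{i+}$ already achieves the maximum possible overlap. This min-cut optimality is what justifies the word ``optimal'' in the theorem statement and underpins the subsequent BFR-MSR/MBR corollary. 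You explicitly defer this to ``explicit constructions given later,'' which is a legitimate route to tightness, but it means your argument as written proves the stated inequality without yet establishing that no tighter cut-based bound exists.
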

\begin{proof}
Let $\Oc$ be an order and let index $i=d_r(b-\rho)$ so that order $\Oc$ can be split into two parts as $\Oc^{i-}$ and $\Oc^{i+}$ where $\Oc^{i-}$ represents the failures up to (and including) index $i$ and $\Oc^{i+}$ represents the remaining set of failures. For index $i$, $\mathscr{C}_{\Oc^{i-}}$ takes its minimum possible value if $\Oc^{i-}$ contains exactly $d_r$ failures from each of $b-\rho$ blocks. We show this by a contradiction. Assume that $\Oc^{i-}$ contains $d_r$ failures from each of $b-\rho$ blocks but corresponding $\mathscr{C}_{\Oc^{i-}}$ is not the minimum. We have already shown that the failure order among the nodes in $\Oc^{i-}$ does not matter as long as the list contains $d_r$ failures from $b-\rho$ blocks. This follows from $d_r = k_c$ case analyzed in Theorem \ref{thm:case1}. Assume an order spanning $b-\rho+t$ blocks for $t>0$. This ordering will include nodes that are not connected to DC and can be omitted. This means that an order that minimizes the cut value needs to contain a block that has at least $d_r+1$ failures. Denote such an ordering by $\Oc^{i--}$,  assume without loss of generality that this block $j$ has $t \geq 1$ additional failures. In such a case, we observe that the cut value for failed nodes  in other blocks would not be affected by this change since each such node is already connected to $d_r$ nodes of block $j$. Furthermore, by removing a failed node of some other block (other than $j$) from $\Oc^{i-}$, the cut values corresponding to other nodes in the list would only increase since the failures can benefit at most $d_r-1$ nodes of that block as opposed to $d_r$ in $\Oc^{i-}$. Hence, in order to minimize $\mathscr{C}_{\Oc^{i-}}$, $\Oc^{i-}$ needs to include exactly $d_r$ failures from each of $b-\rho$ blocks. Also, it can be observed that, when $\Oc^{i-}$ is constructed as above, then $\Oc^{i+}$ also takes its minimum possible value, since any failure in $\Oc^{i+}$ utilizes maximum possible number of repaired nodes ($d_r$ nodes from each of the $b-\rho-1$ blocks) that are connected to DC hence it only needs to contact $\rho-\sigma+1$ blocks that are not connected to DC. Therefore, both $\mathscr{C}_{\Oc^{i-}}$ and $\mathscr{C}_{\Oc^{i+}}$ are minimized individually with $\Oc^{i-}$. Therefore, for a fixed threshold $i=d_r(b-\rho)$, we obtain the min-cut. %Then, the question is that by changing the threshold $i$ is it possible to reduce $\mathscr{C}_{\Oc^{i-}}+\mathscr{C}_{\Oc^{i-}}$.
\end{proof}

\begin{corollary}
For $d_r < k_c $, corresponding BFR-MSR and BFR-MBR points can be found as follows.

\begin{equation}
(\alpha_{\textrm{BFR-MSR}},\gamma_{\textrm{BFR-MSR}}) = \left(\frac{\Mc}{k},\frac{\Mc d}{\frac{kd(\rho-\sigma+1)}{b-\sigma}}\right),
\label{eq:case3_MSR}
\end{equation}

\begin{equation}
(\alpha_{\textrm{BFR-MBR}},\gamma_{\textrm{BFR-MBR}}) = \left(\frac{\Mc d}{\frac{kd(\rho-\sigma+1)}{b-\sigma} + \frac{d^2(b-\rho)(b-\rho-1)}{2(b-\sigma)^2}},\frac{\Mc d}{\frac{kd(\rho-\sigma+1)}{b-\sigma} + \frac{d^2(b-\rho)(b-\rho-1)}{2(b-\sigma)^2}}\right).
\label{eq:case3_MBR}
\end{equation}
\label{cor:case3}
\end{corollary}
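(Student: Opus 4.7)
The plan is to follow the same recipe used in Corollary~\ref{cor:case1}: substitute the two corner operating conditions ($\alpha=\Mc/k$ for MSR, $\alpha=d\beta$ for MBR) into the file-size bound of Theorem~\ref{thm:case3}, resolve the $\min\{\cdot,\cdot\}$ terms, and solve for the smallest admissible $\beta$. Throughout, I will use the identities $d_r=d/(b-\sigma)$ and $k_c=k/(b-\rho)$ (so $(b-\rho)k_c=k$), and the easy algebraic simplification $d-(b-\rho-1)d_r=d(\rho-\sigma+1)/(b-\sigma)$, which makes the final expressions in \eqref{eq:case3_MSR} and \eqref{eq:case3_MBR} fall out.

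For the BFR-MSR point, set $\alpha=\Mc/k$ and demand that every $\min$ in \eqref{eq:boundcase3} is realized by $\alpha$. Since $(d-(i-1)d_r)\beta$ is decreasing in $i$, the binding constraint occurs at $i=b-\rho$, giving $(d-(b-\rho-1)d_r)\beta\ge\alpha$; the second sum introduces no new constraint, as its single value $(d-(b-\rho-1)d_r)\beta$ is exactly the tightest one already considered. Under this condition the bound collapses to $\Mc\le[(b-\rho)d_r+(b-\rho)(k_c-d_r)]\alpha=(b-\rho)k_c\alpha=k\alpha$, which is tight at $\alpha=\Mc/k$. The smallest $\beta$ meeting the binding constraint is $\beta_{\textrm{MSR}}=\alpha/[d-(b-\rho-1)d_r]$, and then $\gamma_{\textrm{MSR}}=d\beta_{\textrm{MSR}}=(\Mc d/k)\cdot(b-\sigma)/[d(\rho-\sigma+1)]$, which rearranges to \eqref{eq:case3_MSR}.

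For the BFR-MBR point, set $\alpha=d\beta$, so every $(d-(i-1)d_r)\beta\le\alpha$ and all $\min$ terms collapse to the linear-in-$\beta$ expressions. The bound becomes
\begin{equation*}
\Mc \le \beta\left[\sum_{i=1}^{b-\rho}d_r\bigl(d-(i-1)d_r\bigr) + (b-\rho)(k_c-d_r)\bigl(d-(b-\rho-1)d_r\bigr)\right].
\end{equation*}
Expanding the first sum as $(b-\rho)d_r d - d_r^2(b-\rho)(b-\rho-1)/2$ and distributing the second term, the $d_r d$ cross-terms cancel and $(b-\rho)k_c=k$, leaving the bracket equal to $kd(\rho-\sigma+1)/(b-\sigma)+d^2(b-\rho)(b-\rho-1)/[2(b-\sigma)^2]$. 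Solving for $\beta$ and using $\alpha=\gamma=d\beta$ gives \eqref{eq:case3_MBR}.

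I do not expect a serious obstacle here: Theorem~\ref{thm:case3} already establishes the file-size bound, and this corollary is essentially a bookkeeping calculation. The only mildly delicate point is verifying that at the MSR operating condition $(d-(b-\rho-1)d_r)\beta\ge\alpha$ the second sum in \eqref{eq:boundcase3} really contributes $(b-\rho)(k_c-d_r)\alpha$ rather than a term that would loosen the bound; this is immediate because that sum's argument is exactly the tightest $\min$ in the first sum. With that observation the two corner points drop out of the algebra above.
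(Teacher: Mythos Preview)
Your proposal is correct and follows exactly the approach the paper uses for Corollary~\ref{cor:case1} (and implicitly for this corollary, whose proof the paper omits): specialize the bound in \eqref{eq:boundcase3} at $\alpha=\Mc/k$ and $\alpha=d\beta$, identify the binding $\min$, and solve for the minimal $\beta$. The algebraic identity $d-(b-\rho-1)d_r=d(\rho-\sigma+1)/(b-\sigma)$ you single out is precisely what collapses both expressions to the stated forms, and your check that the second sum in \eqref{eq:boundcase3} introduces no additional constraint at the MSR point is the only non-bookkeeping step.
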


\subsection{BFR-MSR and BFR-MBR Points for Special Cases}

The general case is  analyzed in the previous section, and we here focus on special cases for BFR-MSR and BFR-MBR points. The first case analyzed below has the property that corresponding BFR-MSR codes achieve both per-node storage point of MSR codes and repair bandwidth of MBR codes simultaneously when $2d \gg k$. 

%The most general case was demonstrated before, now in this section couple of special cases of interest will be studied. The first case has the property that at its BFR-MSR point, achieving both per node storage point of regular MSR and repair bandwidth of regular MBR for $2d-k \gg 1$ is possible. 

\subsubsection{Special Case for I.B: $\rho=0$, $\sigma=1$, $d_r \geq k_c$ and $b=2$}

\begin{figure}
 \centering
 \includegraphics[width=0.34\columnwidth]{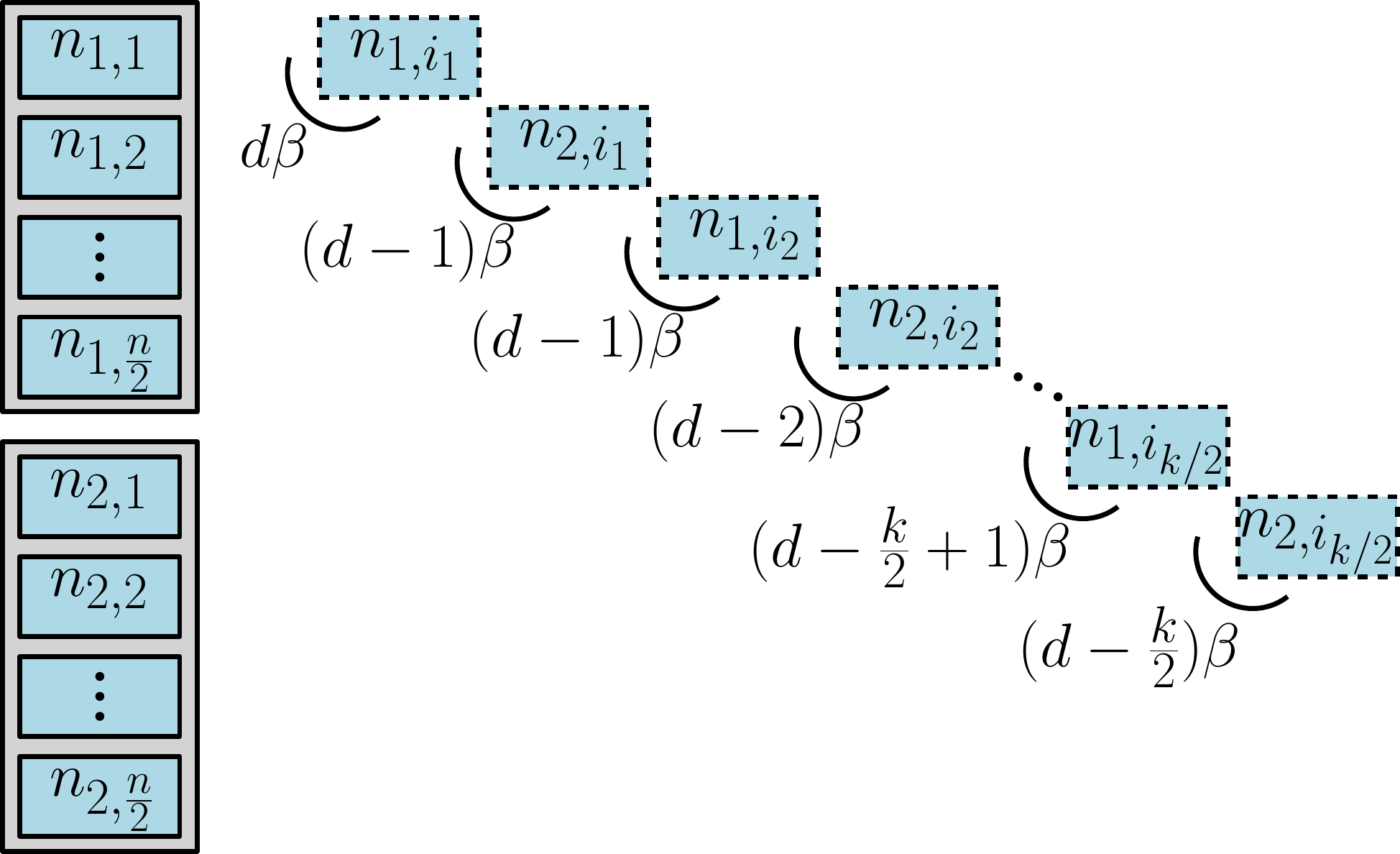}
 \caption{Repair process for $b=2$ (two blocks) case.}
\label{fig:Two-block}
\vspace{-0.3in}
\end{figure}

Consider the 2-block case $(b=2)$ as in Fig.~\ref{fig:Two-block}, and assume $2 \mid k$. The file size $\Mc$ can be upper bounded with the repair procedure shown in Fig.~\ref{fig:Two-block}, which displays one of the ``minimum-cut'' scenarios, wherein any two consecutive node failures belong to different blocks. Assuming $d\geq \frac{k}{2}$, we obtain
\begin{equation}
\Mc \leq \sum_{i=0}^{\frac{k}{2}-1}\min \{\alpha,(d-i)\beta\} + \sum_{i=1}^{\frac{k}{2}}\min\{\alpha,(d-i)\beta\}.
\label{eq:min-cut_two}
\end{equation}

 Achieving this upper bound \eqref{eq:min-cut_two} with equality would yield maximum possible file size. One particular repair instance is shown in Fig.~\ref{fig:Two-block}, and we note that the order of node repairs does not matter as the sum of the corresponding cut values would be the same with different order of failures as long as we consider connection from data collector to $\frac{k}{2}$ repaired nodes from each block. %(See proof of Lemma \ref{thm:case2}.)

For BFR-MSR point, $\alpha=\alpha_{\textrm{BFR-MSR}}=\frac{\Mc}{k}$. In the bound \eqref{eq:min-cut_two}, we then have $\alpha_{\textrm{BFR-MSR}} \leq (d-\frac{k}{2})\beta_{\textrm{BFR-MSR}}$. Achieving equality would give the minimum repair bandwidth for the MSR case. Hence, BFR-MSR point is given by
\begin{equation}
(\alpha_{\textrm{BFR-MSR}},\gamma_{\textrm{BFR-MSR}}) = \left( \frac{\Mc}{k},\frac{2\Mc d}{2kd-k^2} \right).
\label{eq:min-cut_two-MSR-values}
\end{equation}
Note that, this coincides with that of \eqref{eq:case2_MSR} where we set $b=2$, $\rho=0$ and $\sigma=1$ therein.

%BFR-MBR codes, on the other hand, have the property that $d\beta=\alpha$ with minimum possible $d\beta$ while achieving the equality in \eqref{eq:min-cut_two}. Then,
%\begin{equation}
%\Mc = \sum_{i=0}^{k/2-1}(d-i)\beta + \sum_{i=1}^{k/2}(d-i)\beta.
%\label{eq:min-cut_two-MBR}
%\end{equation}
BFR-MBR codes, on the other hand, have the property that $d\beta=\alpha$ with minimum possible $d\beta$ while achieving the equality in \eqref{eq:min-cut_two}. Inserting $d\beta=\alpha$ in \eqref{eq:min-cut_two}, we obtain that
\begin{equation}
(\alpha_{\textrm{BFR-MBR}},\gamma_{\textrm{BFR-MBR}}) = \left(\frac{4\Mc d}{4dk-k^2},\frac{4\Mc d}{4dk-k^2}\right).
\label{eq:min-cut_two-MBR-values}
\end{equation} 
This coincides with that of \eqref{eq:case2_MBR} where we set $b=2$, $\rho=0$ and $\sigma=1$ therein.

We now consider the case where $2 \nmid k$  (as compared to previous section where we assumed $k_c = \frac{k}{b-\rho}$), and characterize trade-off points for all possible system parameters in this special case. First consider the special case of $k=3$ and two different order of failures, one with first failure in first block, second failure in second block, third failure in first block and the other one with first and second failures from first block, third failure from second block. Accordingly, observe that the cuts as  $\min\{\alpha,d\beta\}+2\min\{\alpha,(d-1)\beta\}$ and $2\min\{\alpha,d\beta\}+\min\{\alpha,(d-2)\beta\}$ respectively. For MSR case, first sum would require $\alpha=(d-1)\beta$, whereas second sum requires $\alpha=(d-2)\beta$, resulting in higher repair bandwidth. Henceforth, one needs to be careful even though cut values are same for both orders of failures in  both MSR ($3\alpha$) and MBR ($(3d-2)\beta$) cases. 
\begin{equation}
\Mc \leq \sum_{i=1}^{\floorb{\frac{k}{2}}+1}\min\{\alpha,d\beta\}  + \sum_{i=1}^{\floorb{\frac{k}{2}}}\min\{\alpha,(d-\floorb{\frac{k}{2}}-1)\beta\} 
\end{equation}

%Following on analysis similar to the one above, (adding one additional term to \eqref{eq:min-cut_two}) we obtain result for odd values of $k$ as well. 
%\htext{?5?Here, write out \eqref{eq:min-cut_two} with $\floorb{\frac{k}{2}}$ and add one additional term as the third term in the sum. Argue which case correspoinds to min-cut here.?5?}
 
The corresponding trade-off point are summarized below by following analysis similar to the one above.
%Same analysis can be done for odd values of $k$ as well, which results in different values of $k_c$, %resulting in the following.
\begin{equation}
(\alpha_{\textrm{BFR-MSR}},\gamma_{\textrm{BFR-MSR}}) =
\begin{cases}
	(\frac{M}{k},\frac{2Md}{2kd-k^2-k}), & \textrm{ if $k$ is odd} \\
    (\frac{M}{k},\frac{2Md}{2kd-k^2}), &\textrm{ o.w.}
\end{cases}
\label{eq:MSR_cases}
\end{equation}

\begin{equation}
(\alpha_{\textrm{BFR-MBR}},\gamma_{\textrm{BFR-MBR}}) =
\begin{cases}
	(\frac{4Md}{4dk-k^2+1},\frac{4Md}{4dk-k^2+1}), & \textrm{ if $k$ is odd} \\
    (\frac{4Md}{4dk-k^2},\frac{4Md}{4dk-k^2}), & \textrm{ o.w.}
\end{cases}
\label{eq:MBR_cases}
\end{equation}

%One important observation from (\ref{eq:MSR_cases}) is that $\alpha_{\textrm{BFR-MSR}}=\alpha_{\textrm{MSR}}$ and $\gamma_{\textrm{BFR-MSR}}=\gamma_\textrm{MBR}$, i.e., the per node storage of MSR point and repair bandwidth of MBR point of regenerating codes \cite{Dimakis:Network10} are simultaneously achievable. We provide the generalization of these bounds to $b \geq 2$ case in the following.
Here, we compare $\gamma_{\textrm{BFR-MSR}}$ and $\gamma_{\textrm{MBR}}$. We have $\min\{\gamma^{\textrm{k-odd}}_{\textrm{BFR-MSR}}, \gamma^{\textrm{k-even}}_{\textrm{BFR-MSR}}\} \geq \gamma_{\textrm{MBR}}=\frac{2\Mc d}{k(2d-k+1)}$, and, if we have $2d-k \gg 1$, then $\gamma^{\textrm{k-odd}}_{\textrm{BFR-MSR}} \approx \gamma^{\textrm{k-even}}_{\textrm{BFR-MSR}} \approx \gamma_{\textrm{MBR}}$. This implies that BFR-MSR codes with $b=2$ achieves repair bandwidth of MBR and per-node storage of MSR codes simultaneously for systems with $d \gg 1$. On Fig.~\ref{fig:odd_regular} and \ref{fig:even_regular}, we depict the ratio of $\gamma^{\textrm{k-odd}}_{\textrm{BFR-MSR}}$ and $\gamma^{\textrm{k-even}}_{\textrm{BFR-MSR}}$ to $\gamma_{\textrm{MBR}}$ respectively, where we keep $k$ constant and vary $d$ as $2k \geq d \geq k$. Also, only even $d$ values are shown in both figures. It can be observed that ratio gets closer to 1 as we increase $k$. Next, we provide the generalization of critical points to $b \geq 2$ case in the following.
%\htext{?6?Figure discussing observations can be generated?6?} 

\begin{figure*}[h]
	\vspace{-0.2in}
    \centering
    \begin{subfigure}[t]{0.5\textwidth}
        \centering
 		\includegraphics[width=0.75\columnwidth]{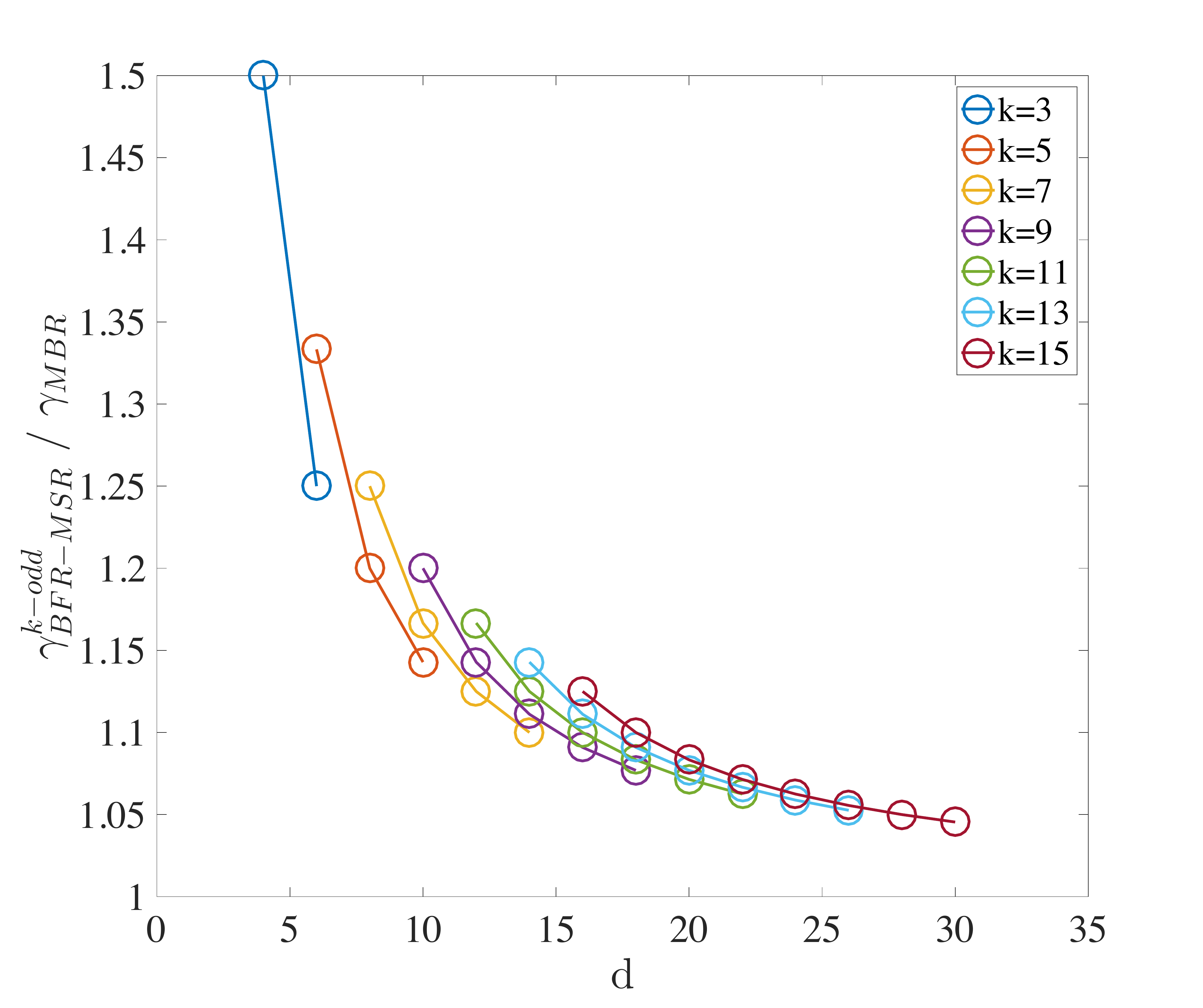}
 		\caption{}
 		\label{fig:odd_regular}
    \end{subfigure}%
    ~ 
    \begin{subfigure}[t]{0.5\textwidth}
        \centering
 		\includegraphics[width=0.75\columnwidth]{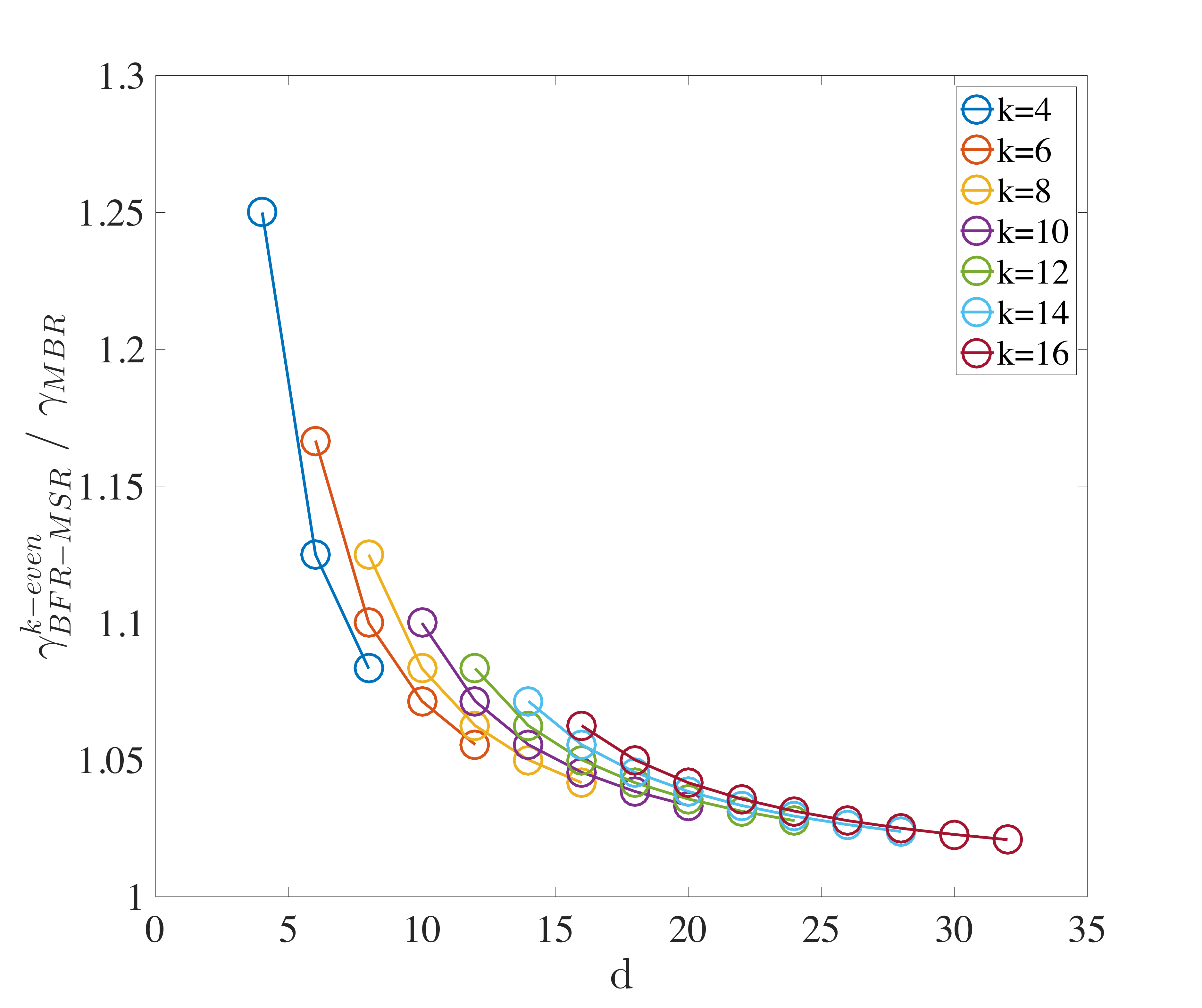}
 		\caption{}
 		\label{fig:even_regular}
    \end{subfigure}
    \vspace{-0.15in}
    \caption{(a) Ratio $\frac{\gamma^{\textrm{k-odd}}_{\textrm{BFR-MSR}}}{\gamma_{\textrm{MBR}}}$ vs. $d$. (b) Ratio $\frac{\gamma^{\textrm{k-even}}_{\textrm{BFR-MSR}}}{\gamma_{\textrm{MBR}}}$ vs. $d$.} 
    
\end{figure*}

\begin{remark}
	The bound in \eqref{eq:min-cut_two} holds for the intermediate points as well for the special case discussed in this section $(b=2,\rho=0,\sigma=1)$ (as compared to previous section where the general case is analyzed). See Fig.~\ref{fig:bfr-reg5}(c) and \cite{Calis:Repairable14}.
\end{remark}

\subsubsection{Special Case for Case I.B: $\rho=0$, $\sigma=1$, $d_r \geq k_c$ and $b\geq2$}

From Corollary \ref{cor:case2}, we obtain the corresponding MSR and MBR points in this special case.

%We can use Theorem \ref{thm:case1} to characterize the file size bound and corresponding MSR and MBR points.%The same steps described above can be used to derive the file size bound for $b$-blocks.
%Accordingly, the optimal file size for a DSS employing a repairable BFR code is as follows.

%\begin{corollary}
%The upper bound on the file size is given by
%\begin{align}
%\begin{split}
%\Mc  \leq  & \sum_{i=0}^{\frac{k}{b}-1}\min(\alpha,(d-(b-1)i)\beta) \\
%& + \sum_{i=0}^{\frac{k}{b}-1}\min(\alpha,(d-1-(b-1)i)\beta) + \ldots \\
%& + \sum_{i=0}^{\frac{k}{b}-1}\min(\alpha,(d-(b-1)-(b-1)i)\beta).
%\end{split}
%\end{align}
%\end{corollary}  

\begin{corollary}
For $\rho=0$, $\sigma=1$, $d_r \geq k_c$ and $b\geq2$ BFR-MSR and BFR-MBR points are as follows. 
\begin{equation}
(\alpha_{\textrm{BFR-MSR}},\gamma_{\textrm{BFR-MSR}}) = \left(\frac{\Mc}{k},\frac{\Mc d}{kd-\frac{k^2(b-1)}{b}}\right)
\label{eq:min-cut_b-MSR-values}
\end{equation}

\begin{equation}
(\alpha_{\textrm{BFR-MBR}},\gamma_{\textrm{BFR-MBR}}) = \left(\frac{\Mc d}{kd-\frac{k^2(b-1)}{2b}},\frac{\Mc d}{kd-\frac{k^2(b-1)}{2b}}\right)
\label{eq:min-cut_b-MBR-values}
\end{equation}
\end{corollary}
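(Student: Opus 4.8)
The plan is to obtain this statement as a direct specialization of the Case~I.B analysis to $\rho=0$, $\sigma=1$. First I would check the hypotheses: $\rho=0,\sigma=1$ gives $\sigma>\rho$, so together with the standing assumption $d_r\geq k_c$ we are in Case~I.B, and the file-size bound of Lemma~\ref{thm:case2} applies with $b-\rho=b$, $b-\sigma=b-1$ and $k_c=k/b$. Plugging these into \eqref{eq:boundcase2}, the second sum runs only over the single index $i=b$, so the bound collapses to
\[
\Mc\ \leq\ \sum_{i=1}^{b}\frac{k}{b}\,\min\Bigl\{\alpha,\ \beta\bigl(d-(i-1)\tfrac{k}{b}\bigr)\Bigr\}.
\]

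Next I would read off the two corner points exactly as in the proof of Corollary~\ref{cor:case1}. For BFR-MSR, set $\alpha=\alpha_{\textrm{BFR-MSR}}=\Mc/k$; since the right-hand side above is at most $\sum_{i}\tfrac{k}{b}\alpha=\Mc$, equality is forced, which makes every $\min$ equal to $\alpha$, the binding constraint being $\beta\bigl(d-(b-1)\tfrac{k}{b}\bigr)\geq\alpha$. The smallest feasible $\beta$, hence the smallest $\gamma=d\beta$, occurs at equality and yields $\gamma_{\textrm{BFR-MSR}}=\dfrac{\Mc d}{kd-k^{2}(b-1)/b}$, which is \eqref{eq:min-cut_b-MSR-values}. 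For BFR-MBR, impose $\alpha=d\beta$; then each $\min$ equals its right branch and, using $\sum_{i=1}^{b}(i-1)=\tfrac{b(b-1)}{2}$,
\[
\Mc\ \leq\ \frac{k}{b}\beta\sum_{i=1}^{b}\Bigl(d-(i-1)\tfrac{k}{b}\Bigr)\ =\ \beta\Bigl(kd-\tfrac{k^{2}(b-1)}{2b}\Bigr),
\]
so the minimal $\beta$ is at equality and $\alpha_{\textrm{BFR-MBR}}=\gamma_{\textrm{BFR-MBR}}=d\beta=\dfrac{\Mc d}{kd-k^{2}(b-1)/2b}$, i.e.\ \eqref{eq:min-cut_b-MBR-values}. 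Alternatively, one can substitute $\rho=0,\sigma=1$ straight into the BFR-MSR/BFR-MBR expressions \eqref{eq:case2_MSR}--\eqref{eq:case2_MBR} of Conjecture~\ref{cor:case2}, using $\tfrac{k^{2}(b-\sigma)}{b-\rho}=\tfrac{k^{2}(b-1)}{b}$ and $\tfrac{k^{2}(b-\sigma)(b+\sigma-2\rho-1)}{2(b-\rho)^{2}}=\tfrac{k^{2}(b-1)b}{2b^{2}}=\tfrac{k^{2}(b-1)}{2b}$.

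Since this is a pure specialization, I do not expect a genuine obstacle. The only thing needing (routine) care is the active-constraint bookkeeping---confirming which branch of each $\min\{\cdot,\cdot\}$ is tight at the two operating points---which is verbatim the computation already carried out in Corollary~\ref{cor:case1}. Achievability of these points (tightness of the displayed bound) is not part of this converse statement; it is handled separately by the explicit constructions in Section~\ref{sec:CodeConst}.
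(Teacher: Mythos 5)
Your proposal is correct and follows essentially the same route as the paper: the paper obtains this corollary by substituting $\rho=0$, $\sigma=1$ into the expressions \eqref{eq:case2_MSR}--\eqref{eq:case2_MBR}, which themselves come from the bound of Lemma~\ref{thm:case2} via exactly the corner-point analysis of Corollary~\ref{cor:case1} that you spell out. Your explicit bookkeeping of the binding $\min$ branches and your closing remark that tightness/achievability is deferred to the constructions of Section~\ref{sec:CodeConst} match the paper's treatment, so there is nothing to add.
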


We observe that $\gamma_{\textrm{BFR-MSR}} \leq \gamma_{\textrm{MSR}}=\frac{\Mc d}{k(d-k+1)}$ for $b \leq k$, which is the case here as we assume $b \mid k$. Also, we have $\frac{\gamma_{\textrm{BFR-MSR}}}{\gamma_{\textrm{MBR}}} = \frac{d-\frac{k-1}{2}}{d-k\frac{b-1}{b}} \geq 1$ when $b \geq \frac{2k}{k+1}$ which is always true. Hence, $\gamma_{\textrm{BFR-MSR}}$ is between $\gamma_{\textrm{MSR}}$ and $\gamma_{\textrm{MBR}}$, see Fig.~\ref{fig:bfr-reg5}(c).

%%%%%%%%%%%%%%%%%%%%%%%%%%%%%%%%%%%%%%%%%%%%%%%%%%%%%%%%%%%%%%%%%%%%%%%%%%%%%%
%%%%%%%%%%%%%%%%%%%%%%%%%%%%%%%%%%%%%%%%%%%%%%%%%%%%%%%%%%%%%%%%%%%%%%%%%%%%%%

\section{BFR-MSR and BFR-MBR Code Constructions}
\label{sec:CodeConst}
\subsection{Transpose code: b=2}

\textbf{\textit{Construction I (Transpose code):}} Consider $\alpha=d=\frac{n}{2}$, and placement of $\frac{n\alpha}{2}$ symbols denoted by $\left\{ x_{i,j}: i,j \in \left\{ 1,2, \cdots, \frac{n}{2}=\alpha \right\} \right\}$ for $b=2$ blocks according to the following rule: Node $i$ in (in block $b=1$) stores symbols $\left\{ x_{i,j}: j \in \left\{ 1,2, \cdots, \alpha \right\} \right\}$, whereas node $i + \frac{n}{2}$ (in block $b=2$) stores symbols $\left\{ x_{j,i}: j \in \left\{ 1,2, \cdots, \alpha \right\} \right\}$ for $i=1,2, \cdots, \frac{n}{2}$. Note that, when the stored symbols in nodes of block 1 is represented as a matrix, the symbols in block 2 corresponds to transpose of that matrix. (We therefore refer to this code as transpose code.)

Due to this transpose property, the repair of a failed node $i$ in the first block can be performed by connecting to all the nodes in the second block and downloading only $1$ symbol from each node. That is, we have $d\beta=\alpha$. Consider now that the file size $\Mc=kd-(\frac{k}{2})^2$, and an $[N=\alpha^2,K=\Mc]$ MDS code is used to encode file $\fv$ into symbols denoted with $x_{i,j}$, $i,j=1,\dots,\alpha$. Here, BFR data collection property for reconstructing the file is satisfied, as connecting to any $k_c=\frac{k}{2}$ nodes from each block assures at least $K$ distinct symbols. This can be shown as follows: Consider $\alpha \times \alpha$ matrix $X$, where $i$-th row, $j$-th column has the element $x_{i,j}$. Rows of $X$ correspond to nodes of block 1, and columns of $X$ correspond to nodes of block 2. Any $\frac{k}{2}$ rows (or any $\frac{k}{2}$ columns) provide total of $\frac{k\alpha}{2}$ symbols. And $\frac{k}{2}$ rows and $\frac{k}{2}$ columns intersect at $\left(\frac{k}{2}\right)$ symbols. Therefore, total number of symbols from any $\frac{k}{2}$ rows and $\frac{k}{2}$ columns is $\Mc$. Note that the remaining system parameters are  $d_r=\frac{n}{2} \geq k_c$, $\rho=0$ and $\sigma=1$. Henceforth, this code is a BFR-MBR code as the operating point in \eqref{eq:min-cut_b-MBR-values}, is achieved with $\Mc=kd-(\frac{k}{2})^2$ and $d\beta=\alpha$ for $\beta=1$ (scalar code). 

A similar code to this construction is Twin codes introduced in \cite{Rashmi:Enabling11}, where the nodes are split into two types and a failed node of a a given type is regenerated by connecting to nodes only in the other type. During construction of Twin codes, the message is first transposed and two different codes are applied to both original message and it's transposed version separately to obtain code symbols. On the other hand, we apply one code to the message and transpose resulting symbols during placement. Also, Twin codes, as opposed to our model, do not have balanced node connection for data collection. In particular, DC connects to only (a subset of $k$ nodes from) a single type and repairs are conducted from $k$ nodes. On the other hand, BFR codes, for $b=2$ case, connects to $\frac{k}{2}$ nodes from each block and repairs are from any $d$ nodes in other block. %\htext{?8? How about $\rho=1$ (similar to twin), $\sigma=1$ case. Does this construction achieve optimal file size??8?}

This construction, however, is limited to $b=2$ and in the following section we propose utilization of block designs to construct BFR codes for $b>2$.

%One instance of BFR codes is given in the Fig.~\ref{fig:Transpose}. We set $\alpha=d=\frac{n}{2}$, and store the transpose of the first block's symbols in the second block. The repair of a failed node $i$ in the first block can be performed by connecting all the nodes in the second block and downloading only $1$ symbol from each node. That is, $d\beta=\alpha$. Further, we set $\Mc=kd-(\frac{k}{2})^2$, and use an $[N=\alpha^2,K=\Mc]$ MDS code to encode file $\fv$ into symbols denoted with $x_{i,j}$, $i,j=1,...,\alpha$. BFR data collection property allows for reconstructing the file, as connecting any $\frac{k}{2}$ nodes from each block assures at least $K$ distinct symbols. Note that the other system parameters are  $d_r=\frac{n}{2} \geq k_c$, $\rho=0$ and $\sigma=1$. Henceforth, this code is a BFR-MBR code for $\beta=1$ (scalar code), as the optimal file size in \eqref{eq:min-cut_two-MBR-values}, i.e., $\Mc=kd-(\frac{k}{2})^2$, is achieved with $d\beta=\alpha$. A similar code to this construction is Twin codes introduced in \cite{Rashmi:Enabling11}, where the nodes are split into two types and a failed node of a a given type is regenerated by connecting to nodes only in the other type. However, Twin codes, as opposed to our model, do not have balanced node connection for data collection. In particular, DC connects to only (a subset of $k$ nodes from) a single type. On the other hand, BFR codes, for $b=2$ case, connects to $\frac{k}{2}$ nodes from each block.

\begin{figure}
 \centering
 \includegraphics[width=0.4\columnwidth]{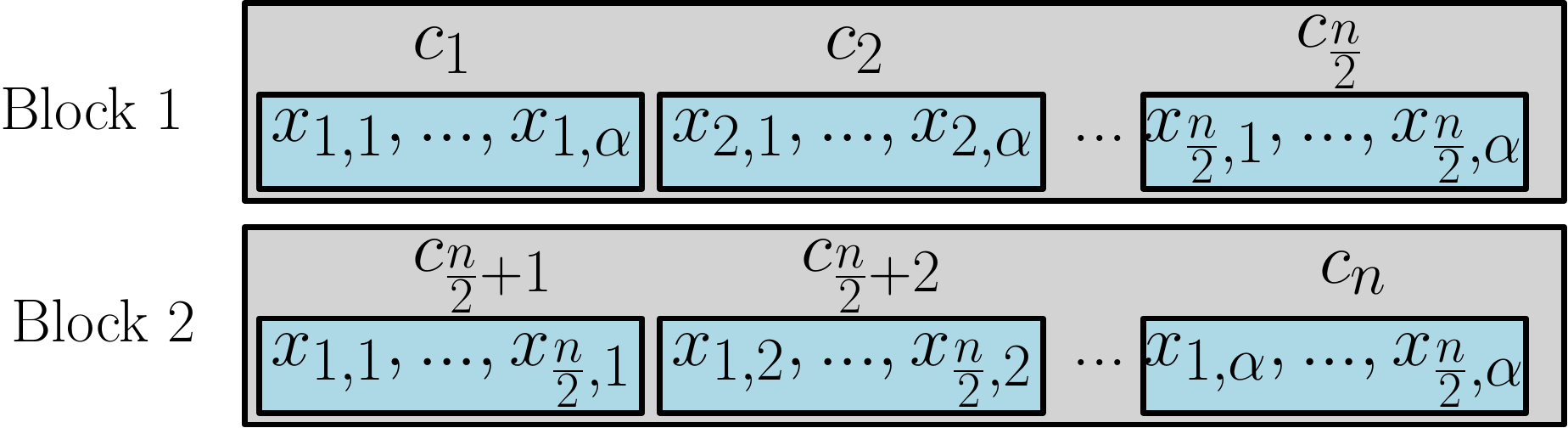}
 \caption{Transpose code is a two-block BFR-MBR code.}
\label{fig:Transpose}
\vspace{-0.25in}
\end{figure}

\subsection{Projective plane based placement of regenerating codewords ($\rho=0$, $\sigma=1$)}

\begin{figure*}[h]
    \centering
    \begin{subfigure}[t]{0.4\textwidth}
        \centering
 		\includegraphics[width=0.15\columnwidth]{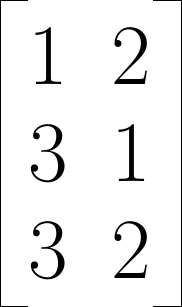}
 		\caption{}
    \end{subfigure}%
    ~ 
    \begin{subfigure}[t]{0.4\textwidth}
        \centering
 		\includegraphics[width=1\columnwidth]{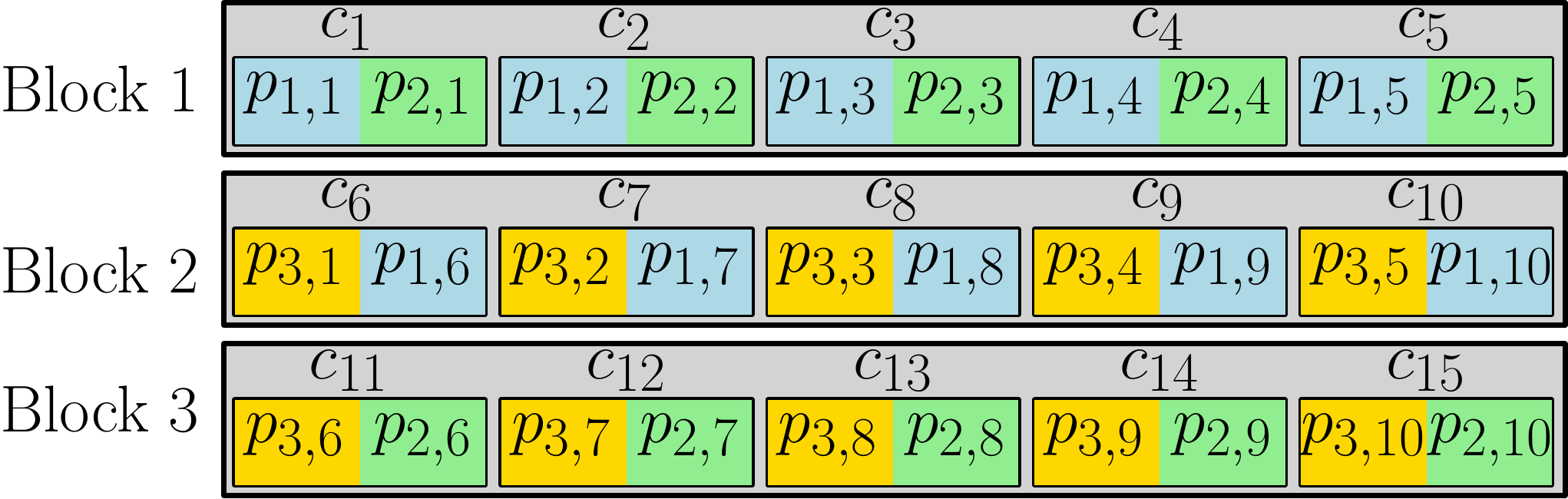}
 		\caption{}
 		
    \end{subfigure}
    \caption{(a) Matrix representation of block design. (b) Three-block BFR-RC.}
    \label{fig:3-block BFR-RC}
    \vspace{-0.2in}
\end{figure*}

\begin{comment}
\begin{figure}[h]
 \centering
 \includegraphics[width=0.65\columnwidth]{3-block}
 \caption{Three-block BFR-RC}
\label{fig:3-block BFR-RC}
\end{figure}
\end{comment}
 Consider that the file $\fv$ of size $\Mc$ is partitioned into 3 parts $\fv_{1}$, $\fv_{2}$ and $\fv_{3}$ each of size $\tilde{\Mc}=\frac{\Mc}{3}$. Each partition $\fv_i$ is encoded with an $[\tilde{n}=10,\tilde{k}=4,\tilde{d}=5,\tilde{\alpha},\tilde{\beta}]$ regenerating code $\tilde{\Cc}$, where the resulting partition codewords are represented with $\Pc_{1}=\left\{ p_{1,1:\tilde{n}} \right\}$ for $\fv_1$, $\Pc_{2}=\left\{ p_{2,1:\tilde{n}} \right\}$ for $\fv_2$, and $\Pc_{3}=\left\{ p{_{3,1:\tilde{n}}} \right\}$ for $\fv_3$. These symbols are grouped in a specific way and placed into nodes within blocks as represented in Fig.~\ref{fig:3-block BFR-RC}, where each node contains two symbols each coming from two different partitions. We set the BFR code parameters as $[\Mc=3\tilde{\Mc}, k=\frac{3}{2}\tilde{k}, d=2\tilde{d}, \alpha=2\tilde{\alpha}, \beta=\tilde{\beta}]$.

%Consider that the file $\Fc$ of size $\Mc$ contains 3 sub-files $\Fc_{1}$, $\Fc_{2}$ and $\Fc_{3}$ each of size $\tilde{\Mc}$. We encode these sub-files with $[\tilde{n}=10,\tilde{k}=4,\tilde{d}=5,\tilde{\alpha},\tilde{\beta}]$ regenerating code $\tilde{\Cc}$, represent the resulting symbols with $\Pc_{1}=p_{1,1:\tilde{n}}$ for $\Fc_1$, $\Pc_{2}=p_{2,1:\tilde{n}}$ for $\Fc_2$, and $\Pc_{3}=p{_{3,1:\tilde{n}}}$ for $\Fc_3$. These symbols are grouped in a specific way placed into nodes within blocks as represented in Fig.~\ref{fig:3-block BFR-RC}, where each node contains two symbols each coming from two of the different sets $\Pc_{1},\Pc_{2},\Pc_{3}$. We set the sub-code $\tilde{\Cc}$ parameters as $[\Mc=3\tilde{\Mc}, k=\frac{3}{2}\tilde{k}, d=2\tilde{d}, \alpha=2\tilde{\alpha}, \beta=\tilde{\beta}]$.

Assume the first block (denoted as Block 1) is unavailable and its first node, which contains codeword $c_{1}$, has to be reconstructed. Due to underlying regenerating code, contacting $5$ nodes of Block 2 and accessing to $p_{1,6:10}$ regenerates $p_{1,1}$. Similarly, $p_{2,1}$ can be reconstructed from Block 3. Any node reconstruction can be handled similarly, by connecting to remaining 2 blocks and repairing each symbol of the failed node by corresponding $\tilde{d}$ nodes in each block. As we have $k=6$, DC, by connecting to 2 nodes from each block, obtains a total of $12$ symbols, which consist of 4 different symbols from each of $\Pc_{1}$, $\Pc_{2}$ and $\Pc_{3}$. As the embedded regenerating code has $\tilde{k}=4$, all $3$ partitions ($\fv_1,\fv_2$ and $\fv_3$) can be recovered, from which $\fv$ can be reconstructed. 

%Assume Block 1 is unavailable and its first node, which contains codeword $c_{1}$, has to be reconstructed. Due to underlying regenerating code, contacting $5$ nodes of Block 2 and accessing to $p_{1,6:10}$ regenerates $p_{1,1}$. Similarly, $p_{2,1}$ can be reconstructed from Block 3. Any node failures can be handled similarly, by connecting to remaining 2 blocks and repairing each symbol of failed node by connecting $\tilde{d}$ nodes in a block. As we have $k=6$, DC, connecting to 2 nodes from each block, obtains $12$ symbols which has 4 different symbols from each of $\Pc_{1}$, $\Pc_{2}$ and $\Pc_{3}$. As the embedded regenerating code has $\tilde{k}=4$, all $3$ sub-files can be recovered. 

In the following construction, we generalize the BFR-RC construction above utilizing projective planes for the case of having $\rho=0$, $\sigma=1$. As defined in Section \ref{sec:FileSize}, this necessarily requires $d_r > k_c$.) We first introduce projective planes in the following and then detail the code construction.

\begin{definition}[Balanced incomplete block design \cite{Stinson:Combinatorial04}]
A $(v,\kappa,\lambda)$-BIBD has $v$ points distributed into blocks of size $\kappa$ such that any distinct pair of points are contained in $\lambda$ blocks.
\end{definition}

%A $(v,\kappa,\lambda)$-BIBD then have the following properties \cite{Stinson:Combinatorial04}.
\begin{corollary}\label{thm:BIBDcorollary}
For a $(v,\kappa,\lambda)$-BIBD,
\begin{itemize}
\item Every point occurs in $r=\frac{\lambda (v-1)}{\kappa-1}$ blocks.
\item The design has exactly $b=\frac{vr}{\kappa}=\frac{\lambda(v^2-v)}{\kappa^2-\kappa}$ blocks.
\end{itemize}
\end{corollary}

In the achievable schemes of this work, we utilize a special class of block designs that are called projective planes \cite{Stinson:Combinatorial04}.
\begin{definition}
A $(v=p^2+p+1,\kappa=p+1,\lambda=1)$-BIBD with $p\geq 2$ is called a projective plane of order $p$.
\label{def:projective-plane}
\end{definition}

Projective planes have the property that every pair of blocks intersect at a unique point (as $\lambda=1$). In addition, due to Corollary~\ref{thm:BIBDcorollary}, in projective planes, every point occurs in $r=p+1$ blocks, and there are $b=v=p^2+p+1$ blocks.

\noindent \textbf{\textit{Construction II (Projective plane based placement of regenerating codes):}} For any $(n,b,\Mc,k,$ $\rho=0,\alpha,d,\sigma=1,\beta)$ code satisfying $b=p^2+p+1$, $k=\frac{b}{p+1}\tilde{k}$, $d=(p+1)\tilde{d}$, $\alpha=(p+1)\tilde{\alpha}$, $\beta = \tilde{\beta}$, $p \mid \tilde{d}$, $p+1 \mid \tilde{n}$, $p+1 \mid \tilde{k}$ where $p$ is the order of the underlying projective plane, and $[\tilde{n},\tilde{k},\tilde{d},\tilde{\alpha},\tilde{\beta}]$ represents the underlying regenerating code parameters, consider a file $\fv$ of size $\Mc$.
\begin{itemize}
	\item First divide $\Mc$ into $v=p^2+p+1$ parts, $\fv_{1}$, $\fv_{2}, \cdots, \fv_{v}$.
	\item Each part, of size $\tilde{\Mc}=\frac{\Mc}{v}$, is then encoded using $[\tilde{n},\tilde{k},\tilde{d},\tilde{\alpha},\tilde{\beta}]$ regenerating code $\tilde{\Cc}$. We represent the resulting partition codewords with $\Pc_{i}=p_{i,1:\tilde{n}}$ for $i=1,\dots,v$. We then consider index of each partition as a point in a $(v=p^2+p+1,\kappa=p+1,\lambda=1)$ projective plane. (Indices of symbol sets $\Pc_\Jc$ and points $\Jc$ of the projective plane are used interchangeably in the following.)
	\item We perform the placement of each symbol to the system using this projective plane mapping. (The setup in Fig. \ref{fig:3-block BFR-RC}(b) can be considered as a toy model. Although the combinatorial design with blocks given by $\{p_1,p_2\},\{p_3,p_1\},\{p_3,p_2\}$ has projective plane properties with $p=1$, it is not considered as an instance of a projective plane \cite{Stinson:Combinatorial04}.) In this placement, total of $\tilde{n}$ symbols from each partition $\Pc_{i}$ are distributed to $r$ blocks evenly such that each block contains $\frac{\tilde{n}}{r}$ nodes where each node stores $\alpha=\kappa\tilde{\alpha}$ symbols.
\end{itemize}

Note that blocks of projective plane give the indices of partitions $\Pc_i$ stored in the nodes of the corresponding block in DSS. That is, all nodes in a block stores symbols from unique subset of $\Pc=\{\Pc_1,\dots,\Pc_v\}$ of size $\kappa$. (For instance, in Fig.~\ref{fig:3-block BFR-RC}(b), the first block of the block design has part $\left\{ p_1, p_2\right\}$, and accordingly symbols from partitions $\Pc_1$ and $\Pc_2$ are placed into node of Block 1.) Here, as each point in the block design is repeated in $r$ blocks, the partition codewords span $r$ blocks. Overall, the system can store a file of size $\Mc=v\tilde{\Mc}$ with $b=v$ blocks. (Note that, $r=\kappa=p+1$ and $b=v=p^2+p=1$ for projective planes. See Definition \ref{def:projective-plane}.) We have the parameters as
\begin{equation}
\Mc=v\tilde{\Mc}, k=\frac{b}{r}\tilde{k}, d=\kappa\tilde{d}, \alpha=\kappa\tilde{\alpha}, \beta=\tilde{\beta} 
\label{eq:assignment-b}
\end{equation}
where we choose the parameters to satisfy $r-1 \mid \tilde{d}$, $r \mid \tilde{n}$ (for splitting partition codewords evenly to $r$ blocks) and $r \mid \tilde{k}$ (for data collection as detailed below). We have $d_r=\frac{\tilde{d}}{r-1}=\frac{d}{b-1} > k_c=\frac{\tilde{k}}{r}=\frac{k}{b}$ as $d \geq k$ and hence the required condition $d_r > k_c$ is satisfied

\begin{figure}
 \centering
 \includegraphics[width=1\columnwidth]{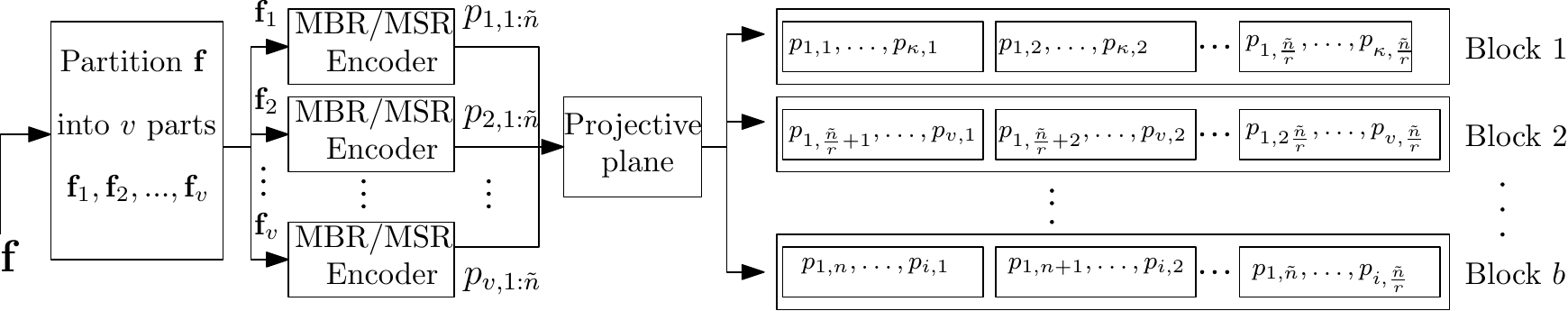}
 \caption{Illustrating the construction of BFR codes using projective plane based placement of regenerating codes. ($n'=\tilde{n}-\frac{\tilde{n}}{r}+1.)$}
\label{fig:general_projective}
\vspace{-0.3in}
\end{figure}

\emph{Node Repair:} Consider that one of the nodes in a block is to be repaired. Note that the failed node contains $\kappa$ symbols, each coming from a distinct partition. Using the property of projective planes that any $2$ blocks has only $1$ point in common, any remaining block can help for the regeneration of $1$ symbol of the failed node. Furthermore, as any point in the block design has a repetition degree of $r$, one can connect to $r-1$ blocks, $d_r=\frac{\tilde{d}}{r-1}$ nodes per block, to repair one symbol of a failed node. Combining these two observations; we observe that node regeneration can be performed by connecting to $(r-1)\kappa$ blocks. In particular, substituting $r=\kappa=p+1$, we see that connecting to $p^{2}+p=b-1$ blocks allows for reconstructing of any node of a failed block.

%Consider that one of the nodes in a block is to be repaired. Note that the failed node contains $\kappa$ symbols, each coming from a distinct subfile's regenerating codeword. Using projective planes' property that any $2$ blocks has only $1$ partition in common, any remaining block can help for in the regeneration of $1$ symbol of the failed node. Furthermore, as any partition has a repetition degree of $r$, one can connect to $r-1$ blocks, $d_r=\frac{\tilde{d}}{r-1}$ nodes per block, to repair one symbol of a failed node. Combining these two, node regeneration is performed by connecting $(r-1)\kappa$ blocks. Substituting $\kappa=p+1$ and $r=p+1$, connecting to $p^{2}+p=b-1$ blocks allows for reconstructing any node of a failed block.

\emph{Data Collection:} DC connects to $k_c=\frac{\tilde{k}}{r}$ nodes per block from all $b_{c}=b$ blocks, i.e., a total of $k=\frac{b}{r}\tilde{k}=\frac{v}{r}\tilde{k}$ nodes each having encoded symbols of $\kappa=r$ partitions. These total of $v\tilde{k}$ symbols include $\tilde{k}$ symbols from each partition, from which all partitions can be decoded, and hence the file $\fv$, can be reconstructed.

%DC, connects $k_c=\frac{\tilde{k}}{r}$ nodes per block from all $b_{c}=b$ blocks, i.e., a total of $k=\frac{b}{r}\tilde{k}$ nodes each having encoded symbols of $\kappa$ subfiles. These total of $v\tilde{k}$ symbols include $\tilde{k}$ symbols from each subfile, from which all subfiles, hence the file $\fv$, can be decoded. It was earlier stated that $d_r > k_c$ was required and we have $d_r=\frac{\tilde{d}}{r-1}=\frac{d}{b-1} > k_c=\frac{\tilde{k}}{r}=\frac{k}{b}$ doing the necessary conversions. 

\begin{remark}
 This construction is related to \textit{layered codes} studied in \cite{Tian:Layered14}. In that work, layering helps to construct codes with exact repair properties. In Construction II, on the other hand, multiple nodes in the system can have the same \textit{type} (representing the same subset of partitions), and this enables to achieve different operating points for the block failure model.
\end{remark}

\subsubsection{BFR-MSR}
To construct a BFR-MSR code, we set each sub-code $\tilde{\Cc}$ in Construction II as an MSR code, which has $ 
\tilde{\alpha}=\frac{\tilde{\Mc}}{\tilde{k}}$ and $\tilde{d}\tilde{\beta}=\frac{\tilde{\Mc}\tilde{d}}{\tilde{k}(\tilde{d}-\tilde{k}+1)}$. This, together with \eqref{eq:assignment-b}, results in the following parameters of our BFR-MSR construction
\begin{equation}
\alpha=\tilde{\alpha}\kappa=\frac{\Mc}{k}, d\beta=\kappa \tilde{d}\tilde{\beta}=\frac{\Mc d}{k(d-\frac{k(p+1)^{2}}{p^2+p+1}+p+1)}.
\label{eq:bfr-msr}
\end{equation}

%From \eqref{eq:bfr-msr}, we observe that this code achieves the MSR point for $\tilde{k}=p+1$.
We remark that if we utilize ZigZag codes\cite{Tamo:Zigzag13} as the sub-code $\tilde{\Cc}$ above, we have $[\tilde{n},\tilde{k},\tilde{d}=\tilde{n}-1,\tilde{\alpha}=\tilde{r}^{\tilde{k}-1},\tilde{\beta}=\tilde{r}^{\tilde{k}-2}, \tilde{r}=\tilde{n}-\tilde{k}]$, and having $\tilde{d}=\tilde{n}-1$ requires connecting to $1$ node per block for repairs in our block model. In addition, product matrix MSR codes \cite{Rashmi:Optimal11} require $\tilde{d} \geq 2\tilde{k}-2$, and they can be used as the sub-code $\tilde{\Cc}$, for which we do not necessarily have $\tilde{d}={r-1}$. We observe from \eqref{eq:min-cut_b-MSR-values} and \eqref{eq:bfr-msr} that the BFR-MSR point is achieved for $\tilde{k}=p+1$, implying $k=b$, i.e. DC connects necessarily 1 node per block for data reconstruction when our Construction II gives BFR-MSR code. 
 
%We remark that if we utilize ZigZag codes\cite{Tamo:Zigzag13} as the sub-code $\tilde{\Cc}$ above, we have $[\tilde{n},\tilde{k},\tilde{d}=\tilde{n}-1,\tilde{\alpha}=\tilde{r}^{\tilde{k}-1},\tilde{\beta}=\tilde{r}^{\tilde{k}-2}, \tilde{r}=\tilde{n}-\tilde{k}]$, and having $\tilde{d}=\tilde{n}-1$ requires connecting to $1$ node per block for repairs in our block model. On the other hand, product matrix MSR codes \cite{Rashmi:Optimal11} can be used as the sub-code $\tilde{\Cc}$ for any $\tilde{d} \geq 2\tilde{k}-2$, for which we do not necessarily have $\frac{\tilde{d}}{r-1}=1$. We observe from \eqref{eq:min-cut_b-MSR-values} and \eqref{eq:bfr-msr} that MSR point is achieved for $\tilde{k}=p+1$, meaning $k=b$. 

\subsubsection{BFR-MBR}

To construct a BFR-MBR code, we set each sub-code $\tilde{\Cc}$ in Construction II as a product matrix MBR code \cite{Rashmi:Optimal11}, which has $\tilde{\alpha}=\tilde{d}\tilde{\beta}= \frac{2\tilde{\Mc}\tilde{d}}{\tilde{k}(2\tilde{d}-\tilde{k}+1)}$.
This, together with \eqref{eq:assignment-b}, results in the following parameters of our BFR-MBR construction
\begin{equation}
\alpha=d\beta=\frac{2\Mc d}{k(2d-\frac{k(p+1)^2}{p^2+p+1}+p+1)}.
\label{eq:bfr-mbr}
\end{equation}

From \eqref{eq:min-cut_b-MBR-values} and \eqref{eq:bfr-mbr}, we observe that the BFR-MBR point is achieved for $\tilde{k}=p+1$.

\subsection{Duplicated Block Design Based BFR Codes ($\rho=0$ and $\sigma < b-1$)}
\label{subsec:dcbd}
In this section, BFR codes for special case of having $\rho=0$ is constructed. We note that $\rho=0$ implies that DC contact all $b$ blocks to retrieve the stored data. Before detailing the code construction, we first introduce a block design referred to as duplicated combination block design \cite{Tian:Layered14}.

%In this section, codes for special case of having $\rho=0$ is discussed. $\rho=0$ implies that DC contact all $b$ blocks to retrieve the stored data. Before going into details, we first introduce a block design called duplicated combination block design \cite{Tian:Layered14}.

\begin{definition}[DCBD]
Let $(\tilde{\kappa},\tilde{v})$ denote the parameters for a block design, where $\tilde{v}$ points from all possible sets of blocks each with $\tilde{\kappa}$ points. Then, duplicated combination block design (DCBD) (with repetition $\tilde{r}$ ) is a block design where the given block design is duplicated $\tilde{r}$ times with different labeling points. (Here, total of $v=\tilde{r}\tilde{v}$ points are splitted into $\tilde{r}$ groups of $\tilde{v}$ points, where each group generates sub-blocks according to the given block design.)

%Let $(\tilde{\kappa},\tilde{v})$ denote the parameters for a block design, where each block contains $\kappa$ points and there are $v$ points. Then, duplicated combination block design (DCBD) (with repetition $r$ ) is a block design where each $\kappa$ combinations of $v$ points appears exactly $r$ times.
\end{definition}

\begin{example}
DCBD with $\tilde{v}=5$, $\tilde{\kappa}=4$ and $\tilde{r}=3$ is given below.

\begin{equation}
\left[ \begin{array}{@{}*{12}{c}@{}}
     1 & 3 & 4 & 5 & \: \quad 6 & 8 & 9 & 10 & \: \quad 11 & 13 & 14 & 15\\
     1 & 2 & 4 & 5 & \quad 6 & 7 & 9 & 10 & \: \quad 11 & 12 & 14 & 15\\
     1 & 2 & 3 & 5 & \quad 6 & 7 & 8 & 10 & \: \quad 11 & 12 & 13 & 15\\
     1 & 2 & 3 & 4 & \quad 6 & 7 & 8 & 9 & \: \quad 11 & 12 & 13 & 14\\
     2 & 3 & 4 & 5 & \quad 7 & 8 & 9 & 10 & \: \quad 12 & 13 & 14 & 15\\
\end{array} \right]
\end{equation}

%$\left\{1,3,4, 5|, |6,8,9, 10|, |11,13,14, 15\right\}$  \\
%$\left\{1,2,4, 5|, |6,7,9, 10|, |11,12,14, 15\right\}$  \\
%$\left\{1,2,3, 5|, |6,7,8, 10|, |11,12,13, 15\right\}$  \\
%$\left\{1,2,3, 4|, |6,7,8,9|, |11,12,13,14\right\}$  \\
%$\left\{2,3,4, 5|, |7,8,9, 10|, |12,13,14, 15\right\}$  \\

It can be observed that each sub-block consists $\tilde{v} \choose \tilde{\kappa}$ blocks, each containing a different set of $\tilde{\kappa}$ points. Also, the same combination is repeated $\tilde{r}$ times (with different labels for points, namely $\left\{ 6,7,8,9,10\right\}$ and $\left\{ 11,12,13,14,15\right\}$). Each row here corresponds to a block of DCBD, where sub-blocks aligned similarly in columns represent the underlying $(\tilde{\kappa},\tilde{v})$ block design. We refer to the sub-blocks as \textit{repetition groups} in the following.
\end{example}

\noindent \textbf{\textit{Construction III (DCBD based BFR-RC):}} For any $(n,b,\Mc,k,$ $\rho=0,\alpha,d,\sigma<b-1,\beta)$ code satisfying  $b={\tilde{v}\choose\tilde{\kappa}}$, $k=\frac{b}{b-1}\tilde{k}$, $d=\frac{(b-\sigma)\tilde{d}}{b-\sigma-1}$, $\alpha=(b-1)(b-\sigma)\tilde{\alpha}$, $\beta = (b-\sigma-1)(b-1)\tilde{\beta}$, $b-\sigma-1 \mid \tilde{d}$, $b-1 \mid \tilde{n}$, $b-1 \mid \tilde{k}$ where $(\tilde{v}, \tilde{\kappa})$ are the underlying DCBD parameters and  $[\tilde{n},\tilde{k},\tilde{d},\tilde{\alpha},\tilde{\beta}]$ represents the underlying regenerating code parameters, consider a file $\fv$ of size $\Mc$.

\begin{itemize}
\item Divide $\Mc$ into $(b-\sigma){b \choose b-1}$ parts of equal size $\tilde{\Mc}$, i.e., $\tilde{\Mc}(b-\sigma)b=\Mc$. 
\item \looseness=-1 Encode each part $\fv_i$ using an $[\tilde{n},\tilde{k}=\tilde{\Mc},\tilde{d}]$ regenerating code (referred to as the sub-code $\tilde{\Cc}$). 
\item Place the resulting partition codewords according to DCBD design (with $\tilde{v}=b$, $\tilde{\kappa}=b-1$ and $\tilde{r}=b-\sigma$) such that each block has $c=\frac{\tilde{n}}{b-1}$ nodes, where each node stores $\kappa=\tilde{\kappa}\tilde{r}$ symbols, each coming from a different partition.
%\item Treat each row as a block, hence a block contains $c$ nodes, which store $\kappa r$ symbols, each coming from different partition. 
\end{itemize}

The system stores a file of size $\Mc=b(b-\sigma)\tilde{\Mc}$ over $b$ blocks. We have the parameters as 
\begin{equation}
\Mc=b(b-\sigma)\tilde{\Mc}, k=\frac{b\tilde{k}}{b-1}, d=\frac{\tilde{d}(b-\sigma)}{b-\sigma-1}, \alpha=(b-1)(b-\sigma)\tilde{\alpha}, \beta=(b-\sigma-1)(b-1)\tilde{\beta}
\label{eq:assignment-dcbd}
\end{equation}
where we consider $\sigma < b-1$.

The following example (with $b=5$ and $\sigma=2$) illustrates a repair scenario. Assume that the failed node is in the first block and it will be regenerated by blocks 2,3 and 4 (as $b-\sigma=3$). Considering the first repetition group, it can be observed that symbols of each of $b-\sigma=3$ partitions ($\Pc_3, \Pc_4$ and $\Pc_5$) can be found in $b-\sigma-1=2$ of these blocks, whereas the remaining $\sigma-1=1$ partition ($\Pc_1$) can be found in all $b-\sigma$ blocks. (In the representation below, numbers represent the indices for partitions $\Pc_i$. And, the three highlighted rows for the first repetition group includes partitions $\Pc_1,\Pc_3,\Pc_4,\Pc_5$ that are relevant to the symbols stored in the block corresponding to the first row below.)

%Observe the following example for clarification where $b=5$ and $\sigma=2$. Assume a failed node is in first block and regenerated by blocks 2,3 and 4 (as $b-\sigma=3$), then it can be observed that $b-\sigma=3$ partitions ($\left\{3, 4, 5 \right\}$) are only $b-\sigma-1=2$ of those blocks and the remaining $\sigma-1$ partition ($\left\{1 \right\}$) is contained in all $b-\sigma$ blocks. 

\begin{equation}
\left[ \begin{array}{@{}*{12}{c}@{}}
     1 & 3 & 4 & 5 & \: \quad 6 & 8 & 9 & 10 & \: \quad 11 & 13 & 14 & 15\\
     \htext{1} & 2 & \htext{4} & \htext{5} & \: \quad 6 & 7 & \htext{9} & \htext{10} & \: \quad \htext{11} & 12 & \htext{14} & \htext{15} \\
     \htext{1} & 2 & \htext{3} & \htext{5} & \: \quad \htext{6} & 7 & \htext{8} & \htext{10} & \: \quad 11 & 12 & \htext{13} & \htext{15} \\
     1 & 2 & \htext{3} & \htext{4} & \: \quad \htext{6} & 7 & \htext{8} & \htext{9} & \: \quad \htext{11} & 12 & \htext{13} & \htext{14} \\
     2 & 3 & 4 & 5  & \: \quad 7 & 8 & 9 & 10 & \: \quad 12 & 13 & 14 & 15\\
\end{array} \right]
\label{eq:DCBD-ex-reg}
\end{equation}

\textit{Node Repair:} Generalizing above argument, consider that one of the nodes in a block is to be repaired by contacting to $b-\sigma$ blocks. A failed node contains $\kappa=(b-1)(b-\sigma)$ symbols, each coming from a distinct partition codeword. The properties of the underlying (DCBD) block design, (considering the first repetition group), implies that there exists $b-\sigma$ partitions of the failed node that are contained only in $b-\sigma-1$ of the blocks contacted for repair. The remaining $b-1-(b-\sigma)=\sigma-1$ partitions (of each repetition group) are contained in all of the contacted $b-\sigma$ blocks. These partitions are referred to as the \textit{common partitions} of a repetition group in $b-\sigma$ contacted blocks. (In the example above, partitions $\Pc_1, \Pc_6, \Pc_{11}$ are the common partitions for the first, second and third repetition group respectively.)

%Consider that one of the nodes in a block is to be repaired. A failed node contains $(b-1)(b-\sigma)$ symbols, each coming from a distinct partition's codeword. Using properties of given block design (just first repetition), we observe that in $b-\sigma$ blocks, there exists $b-\sigma$ partitions of failed node that are contained only in $b-\sigma-1$ of such blocks (in each block $b-\sigma-1$ of them). From the remaining $b-1-(b-\sigma)=\sigma-1$ partitions, all of them are contained in both failed node and contacted $b-\sigma$ blocks. 

Using this observation for DCBD based construction, (i.e., considering all repetition groups), consider obtaining $\sigma-1$ common partitions from $b-\sigma-1$ blocks of each of the $\tilde{r}=b-\sigma$ repetition groups. In addition, consider obtaining remaining relevant partitions ($b-\sigma$ partitions per repetition group) from these $\tilde{r}=b-\sigma$ partition groups (total of $\tilde{r}(b-\sigma)(b-\sigma-1)$ partitions). 

These $\sigma-1$ common partitions per repetition group over $\tilde{r}=b-\sigma$ repetition groups are contacted evenly. Namely, each other relevant partition are contacted from only $b-\sigma-1$ blocks, the common partitions among $b-\sigma$ blocks are contacted only $b-\sigma-1$ times (i.e., by not contacting any common point at all in only one repetition group from a block and since there are $b-\sigma$ repetition groups and $b-\sigma$ contacted blocks, we can do this process evenly for all blocks). Henceforth, from each block same amount of symbols (and same amount of symbols from each partition) is downloaded. In total, there are $(\sigma-1)(b-\sigma)(b-\sigma-1)$ common points and each block contributes the transmission of $(\sigma-1)(b-\sigma-1)$ common partitions. Hence, $\beta=[(\sigma-1)(b-\sigma-1)+ (b-\sigma-1)(b-\sigma)]\tilde{\beta}=(b-\sigma-1)(b-1)\tilde{\beta}$.

%Using this observation in block design (all repetitions), a failed node downloads total of $(\sigma-1)(b-\sigma-1)$ (downloads symbols from common partitions in all $b-\sigma$ blocks from $b-\sigma-1$ repetition) plus symbols from $b-\sigma-1$ partitions in all $b-\sigma$ repetitions ($(b-\sigma-1)(b-\sigma)$ symbols). Hence, $\beta=(\sigma-1)(b-\sigma-1)+ (b-\sigma-1)(b-\sigma)=(b-\sigma-1)(b-1)\tilde{\beta}$. 

In order to have a successful regeneration for each node to be regenerated, we also require the following condition in this construction.

\begin{lemma}
Construction III requires the necessary condition $\frac{\tilde{n}}{b-1} \geq \frac{\tilde{d}}{b-\sigma-1}$ for repair feasibility.
\label{lmm:dcbd_rep}
\end{lemma}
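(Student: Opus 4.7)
The plan is to trace the bookkeeping of helper nodes required by a single node repair in Construction III and compare the demand per contacted block against the block capacity $c=\tilde{n}/(b-1)$; the claimed inequality then falls out by a direct pigeonhole argument.

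First, I would recall how the regeneration load decomposes. A failed node stores $\kappa=(b-1)(b-\sigma)$ partition symbols, one symbol from each of $(b-1)(b-\sigma)$ distinct partition codewords (the $\tilde{\kappa}=b-1$ partitions appearing in its row of the DCBD, across all $\tilde{r}=b-\sigma$ repetition groups). Since each partition codeword is an $[\tilde{n},\tilde{k},\tilde{d},\tilde{\alpha},\tilde{\beta}]$ regenerating codeword, regenerating \emph{one} such symbol requires contacting exactly $\tilde{d}$ other nodes that each hold a symbol of \emph{that same} partition codeword, and each such contacted node contributes $\tilde{\beta}$ subsymbols.

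Second, I would use the DCBD geometry to determine how the $\tilde{d}$ helpers for a fixed partition $\Pc$ of the failed node are distributed across the $b-\sigma$ contacted blocks. By construction, each partition appears in exactly $b-1$ of the $b$ design blocks (one block per repetition group omits it), and the failed block is one of these $b-1$ occurrences. Moreover, the repair schedule described just before the lemma is deliberately balanced so that, for every partition $\Pc$ of the failed node, helper symbols for $\Pc$ are drawn from exactly $b-\sigma-1$ of the $b-\sigma$ contacted blocks (the ``common partitions'' are artificially restricted to $b-\sigma-1$ contacting blocks to equalize the per-block download $\beta=(b-\sigma-1)(b-1)\tilde{\beta}$). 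Thus the $\tilde{d}$ helpers for $\Pc$ must be packed into these $b-\sigma-1$ blocks; spreading them evenly (as the construction does) forces each such block to supply $\tilde{d}/(b-\sigma-1)$ helper nodes for partition $\Pc$.

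Third, the feasibility condition follows by comparing demand to capacity inside a contacted block. Each block contains only $c=\tilde{n}/(b-1)$ nodes in total, and within a block any given partition codeword has exactly one symbol per node, so the number of nodes in a block that are eligible helpers for $\Pc$ is at most $c$. Hence the allocation above can be realized if and only if
\[
\frac{\tilde{d}}{b-\sigma-1} \;\leq\; \frac{\tilde{n}}{b-1},
\]
which is the stated necessary condition.

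I do not expect a serious obstacle; the only subtlety is confirming that the same helper node is never asked to contribute two distinct symbols from the same partition codeword (otherwise the count $\tilde{d}/(b-\sigma-1)$ would be an understatement). This is immediate from the DCBD placement, where every node stores at most one symbol per partition. With that sanity check in hand, the inequality is exactly a pigeonhole bound, and the lemma is established.
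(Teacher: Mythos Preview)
Your proposal is correct and follows essentially the same counting argument as the paper: a partition of the failed node is available in at most $b-\sigma-1$ of the contacted blocks, each such block offers exactly $c=\tilde{n}/(b-1)$ eligible helpers, and the regenerating sub-code needs $\tilde{d}$ of them. The only framing difference is that you derive the $b-\sigma-1$ restriction from the balanced repair schedule of Construction~III, whereas the paper observes directly that, within each repetition group, $b-\sigma$ of the failed node's partitions are \emph{physically} present in only $b-\sigma-1$ of the contacted blocks (the ``non-common'' partitions), which makes the inequality necessary independent of any scheduling choice; since the lemma concerns Construction~III specifically, both framings are valid and yield the same bound.
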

\begin{proof}
Given $v-1$ combinations of $v$ points, any two combinations differs only in one point. Also, any combination is missing only one point. If one collects $b-\sigma\geq 2$ of such combinations in Construction III, then the partition with least number of instances is $b-\sigma-1$. (This follows as the first combination is missing only one partition which is necessarily included in the second combination.) Then, by contacting any $b-\sigma$ blocks, one can recover partitions of failed node from these $b-\sigma$ blocks. (There exist at least $b-\sigma-1$ number of blocks containing nodes storing symbols from a given partition.) Since each block has $c=\frac{\tilde{n}}{b-1}$ symbols from each partition, we require $\frac{\tilde{n}}{b-1} \geq \frac{\tilde{d}}{b-\sigma-1}$ to have repair feasibility in Construction III.
%he collects partitions such that the partition with least number of instances is $b-\sigma-1$ since any partition is missing only at one combination. Then, by contacting any $b-\sigma$ blocks, newcomer can recover its own partitions from those $b-\sigma$ blocks (there exist at least $b-\sigma-1$ of blocks contains nodes that store the symbols from the same partition). Since each block has $\frac{\tilde{n}}{b-1}$ symbols from each partition, we require $\frac{\tilde{n}}{b-1} \geq \frac{\tilde{d}}{b-\sigma-1}$ to have feasible repairs.
\end{proof}

Therefore, a failed node can be regenerated from $d=\frac{\tilde{d}(b-\sigma)}{b-\sigma-1}$ nodes and downloading $\beta=(b-\sigma-1)(b-1)\tilde{\beta}$ symbols from each block. (Note that, $d_r=\frac{d}{b-\sigma}=\frac{\tilde{d}}{b-\sigma-1}$.) For each partition of the failed node, $\tilde{d}\tilde{\beta}$ symbols are downloaded, from which one can regenerate each partition. Note that, repeating combinations multiple times enables us to have uniform downloads from the nodes during repairs.

%Then, a failed node can be regenerated from $d=\frac{\tilde{d}(b-\sigma)}{b-\sigma-1}$ and downloading $\beta=(b-\sigma-1)(b-1)\tilde{\beta}$ symbols from each block ($d_r=\frac{d}{b-\sigma}=\frac{\tilde{d}}{b-\sigma-1}$). For each repetition, in the failed node $\tilde{d}\tilde{\beta}$ symbols are downloaded, which is enough to regenerate each partition. Note that, repeating combinations multiple times enables us to have uniform downloads from the nodes in repairs.  

\textit{Data Collection:} DC connects to $k_c=\frac{k}{b}=\frac{\tilde{k}}{b-1}$ nodes per block (as $\rho=0)$, and downloads total of $k_c\alpha$ symbols from each block. These symbols include $\frac{\tilde{k}\tilde{\alpha}}{b-1}$ symbols from each of $(b-1)(b-\sigma)$ partitions. Therefore, from all blocks, $\frac{(b-1)\tilde{k}\tilde{\alpha}}{b-1}=\tilde{k}\tilde{\alpha}$ symbols per partition is collected, from which each partition can be decoded via underlying sub-code $\tilde{\Cc}$, and the stored file $\fv$ can be reconstructed.   

\subsubsection{BFR-MSR}
To construct a BFR-MSR code, we set each sub-code $\tilde{\Cc}$ in Construction III as an MSR code, which has $\tilde{\alpha}=\frac{\tilde{\Mc}}{\tilde{k}}$ and  $\tilde{d}\tilde{\beta}=\frac{\tilde{\Mc}\tilde{d}}{\tilde{k}(\tilde{d}-\tilde{k}+1)}$. This, together with \eqref{eq:assignment-dcbd}, results in the following parameters of our BFR-MSR construction
\begin{eqnarray}
\alpha &=&(b-1)(b-\sigma)\tilde{\alpha}=\frac{\Mc}{k},\label{eq:bfr-msr-DCBD-alp}\\
d\beta &=&\frac{\tilde{d}(b-\sigma)}{b-\sigma-1}(b-\sigma-1)(b-1)\tilde{\beta} =\frac{\Mc d (b-\sigma-1)}{k(d(b-\sigma-1)-\frac{k(b-1)(b-\sigma)}{b}+b-\sigma)}.
\label{eq:bfr-msr-DCBD-bet}
\end{eqnarray}

From \eqref{eq:case2_MSR}, we obtain that Construction III results in optimal BFR-MSR codes when $k=\frac{b}{\sigma}$ (i.e.,  $\tilde{k}=\frac{b-1}{\sigma}$).

\subsubsection{BFR-MBR}

To construct a BFR-MBR code, we set each sub-code $\tilde{\Cc}$  in Construction III as a product matrix MBR code \cite{Rashmi:Optimal11}, which has $\tilde{\alpha}=\tilde{d}\tilde{\beta}= \frac{2\tilde{\Mc}\tilde{d}}{\tilde{k}(2\tilde{d}-\tilde{k}+1)}$.
This, together with \eqref{eq:assignment-dcbd} results in the following parameters of our BFR-MBR construction
\begin{equation}
\alpha=d\beta=\frac{2 \Mc d (b-\sigma-1)}{k(2d(b-\sigma-1)-\frac{k(b-1)(b-\sigma)}{b}+b-\sigma)}.
\label{eq:bfr-mbr-DCBD}
\end{equation}

From \eqref{eq:case2_MBR}, we obtain that Construction III results in optimal BFR-MSR codes when $k=\frac{b^2}{b+\sigma^2-1}$ (i.e., $\tilde{k}=\frac{b(b-1)}{b+\sigma^2-1}$).

\section{Locally Repairable BFR Codes}
\label{sec:BFR-LRC}

\subsection{Locality in BFR}

In this section, we focus on BFR model with repair locality constraints, i.e., only a local set of blocks are available to regenerate the content of a given node. This model is suitable for both disk storage and distributed (cloud/P2P) storage systems. For example, a typical data center architecture includes multiple servers, which form racks which further form clusters, see Fig.~\ref{fig:local}. We can think of each cluster as a local group of racks where each rack contains multiple servers. Hence, in Fig.~\ref{fig:local}, we can model the data center as having $b$ blocks (racks) where $b_L$ blocks form a local group (cluster) and there are $c$ nodes (servers) in each block. 

In this section, we extend our study of data recovery with the block failure model to such scenarios with locality constraints. We assume that DSS maintains the data of size $\Mc$ with at most $\rho$ blocks being unavailable. Hence, from any $b-\rho$ blocks, data collection can be performed by contacting any $k_c$ nodes from each of such blocks. In other words, DC can contact some set of local groups, $\Bc^*$,  to retrieve the file stored in the DSS with $|\Bc^*|=b-\rho$. Let $\cv_{i,j}$ denote the part of the codewords associated with $i^{\rm{th}}$ local group's $j^{\rm{th}}$ block accessed by DC, which consists of $k_c$ nodes. Note that, we consider $k_c=c$ for full access and $k_c< c$ for partial access as before. Then, we can denote by $\cv_i$ the codeword seen by DC corresponding to local group $\Bc_i$, which has a size $j' \leq b_L$. Therefore, $\cv_{\Bc^*}=\left\{(\cv_{i_1},\cdots,\cv_{i_{|\Bc^*|}})\right\}$ denotes the codeword corresponding to the one seen by DC, when accessing $k_c(b-\rho)=k$ nodes.

%Toy example of a data center architectures, e.g., multiple servers form racks which further form clusters, is given in Fig.~\ref{fig:local}. We can think of each cluster as a local group of $b_L$ racks (blocks) where each rack contains $c$ servers, which are modeled as nodes in the blocks in the following discussion. Denote the total number of blocks as $b$, then we have $\frac{b}{b_L}$ local groups. The use of a local group is that node failures within the group can be recovered via other blocks inside the local group. Hence, each local group behaves as BFR. We say that DSS has resilience of $\rho$, if it can serve without any problem with at most $\rho$ unavailable blocks.

\begin{figure}
 \centering
 \includegraphics[width=0.5\columnwidth]{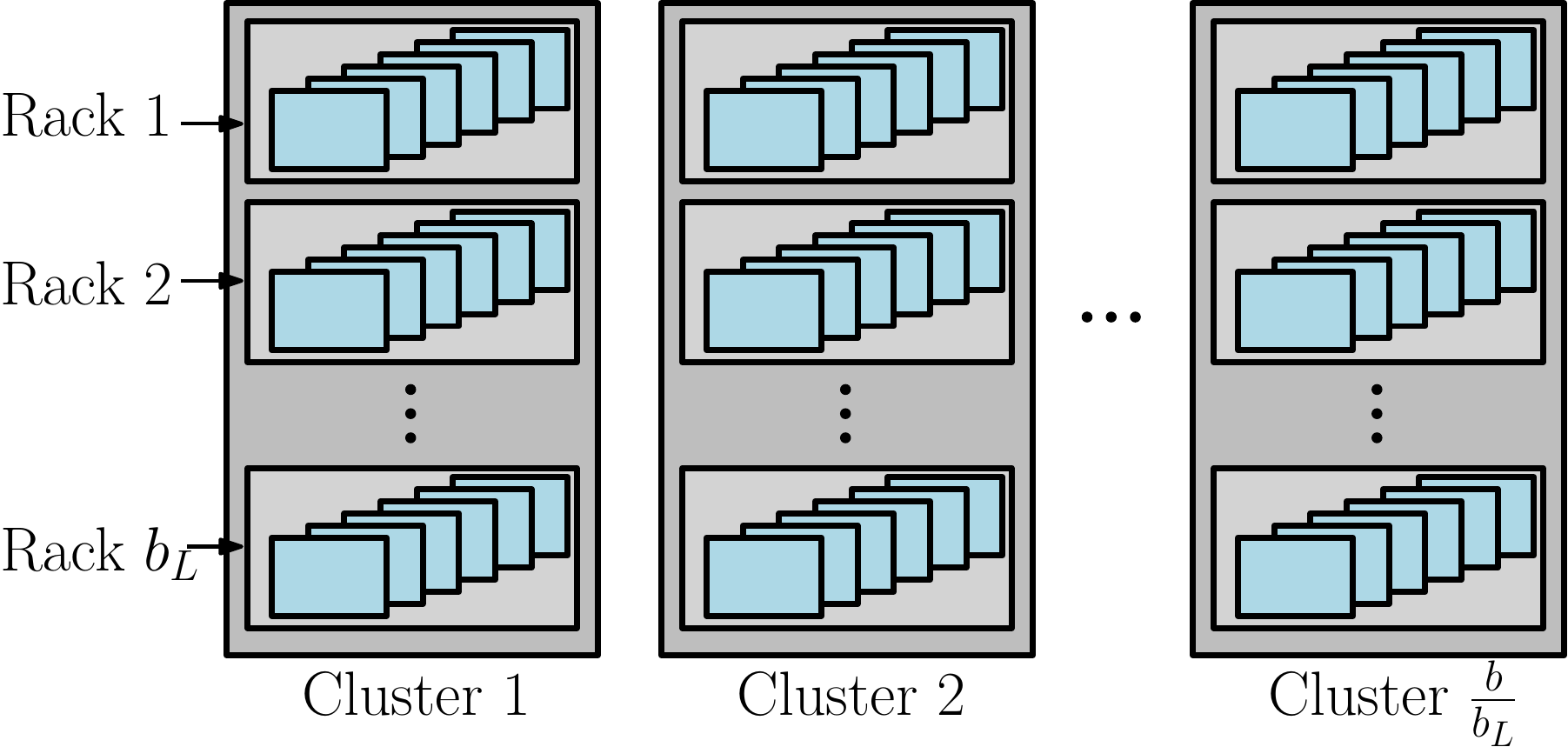}
 \caption{Data center architecture. Each rack resembles a block and each cluster forms a local group for node repair operations in the BFR model.}
\label{fig:local}
\vspace{-0.3in}
\end{figure}

%Let $\Cc$ be an $(n,b,\Mc,k,\rho,\alpha)$ BFR code over a field $\FF_q$, having a $(\Mc\times n\alpha)$ generator matrix $\Gm$. Similar to minimum distance of a code studied in \cite{Papailiopoulos:Locally12}, we formally define the resilience of an $(n,b,\Mc,k,\rho,\alpha)$ BFR code as follows:

\begin{definition} 
Let $\cv$ be a codeword in $\Cc$ selected uniformly, the resilience of $\Cc$ is defined as
\begin{equation}
\rho=b-\max\limits_{\Bc^* \subseteq[b]:H(\cv(\Bc^*)) < \Mc}|\Bc^*| - 1.
\label{def:resilience}
\end{equation}
%where $\Bc=\left\{i_1,...,i_{|\Bc|}\right\}\subseteq [b]$, $c$ is the number of codewords in a block. We denote the codeword of a node by $c_i$ where $i \in [bc]$.
%\begin{itemize}
%\item $\Bc^*=\left\{\Bc_1,\cdots,\Bc_{|\Bc^*|}\right\}$ is the set of local groups,
%\item Each local group contains $b_L$ blocks, each of which has $c$ nodes,
%\item $c_{i,j,m}$ is the codeword of $i^{th}$ local group's $j^{th}$ block's $m^{th}$ node where $i \in [1, \frac{b}{b_L}]$, $j \in [1, b_L]$ and $m \in [1, c]$,
%\item $\cv_{i,j}$ denotes the $i^{th}$ local group's $j^{th}$ block's accessed codewords by DC, which consists of $m'$ nodes. Furthermore, we have $m'=c$ for full access and $m' < c$ for partial access.
%\item $\cv_i$ denotes the codeword corresponding to local group $\Bc_i$, which has a size $j' \leq b_L$,
%\item $\cv_{\Bc^*}=\left\{\cv_1,\cdots,\cv_{|\Bc^*|})\right\}$ denotes the codeword corresponding to the one seen by DC.
%\item $H(\cdot)$ denotes entropy.
%\end{itemize}
\end{definition}

%The local code $\Cc_i=\Cc|_{\Bc_i}$ is a $(cb_L ,b_L, K_L,k_L=(b_L-\rho_L)k_c, \rho_L, \alpha)$ BFR code, see Definition \ref{def:BFR}. 

We remark that the resilience $\rho$ of a locally repairable BFR code dictates the number of block failures that the system can tolerate (analogous to the minimum distance providing the maximum number of node failures). In addition, DC can do data access by contacting any $b-\rho$ blocks. 

\begin{definition}
Code $\Cc$ is said to have $(r_L,\rho_L)$ locality, if for any node $j$ in a block $i$, there exists a set of blocks $\Bc_i$ such that
\begin{itemize}
\item $i \in \Bc_i\subset \Bc=\{1,\cdots,b\}$,
\item $|\Bc_i|\leq b_L := r_L + \rho_L$,
\item $\Cc|_{\Bc_i}$ is an $(n_L,b_L,K_L,k_L,\rho_L,\alpha)$ BFR code. (See Definition \ref{def:BFR}.)
\end{itemize}  
\end{definition}

Note that $K_L$ is the size of the local data for corresponding BFR local code (i.e., $\Mc$ in Definition~\ref{def:BFR}). Such codes are denoted as BFR-LRC in the following. %Differentiating between the two important points, we denote BFR-LRC codes with regenerating code properties which operates at minimum per-node storage point as BFR-LRC-MSR. Similarly, the codes operating at minimum repair bandwidth point are called BFR-LRC-MBR. 

%When implemented into DSS, these constructions capture the realistic data center architectures, e.g., multiple servers form racks which further form clusters. We can think of each cluster as a local group of $b_L$ racks (blocks). Thus, a BFR-LRC code can be utilized within such a system. See Fig.~\ref{fig:local}.

\subsubsection{Upper bound on the resilience of BFR-LRC}

We provide the following bound on the resilience of BFR-LRC.

\begin{theorem}
The resilience of BFR-LRC can be bounded as follows.
\begin{equation}
\rho \leq b - \ceilb{\frac{\Mc(b_L-\rho_L)}{K_L}} - \left(\ceilb{\frac{\Mc}{K_L}}-1\right)\rho_L
\end{equation}
\label{thm:BFR-LRC}
\end{theorem}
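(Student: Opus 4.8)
The plan is to mirror the information-theoretic converse of \cite{Rawat:Optimal14}, translated from the ``symbol/coordinate'' level to the ``block'' level. By the resilience definition \eqref{def:resilience}, proving the bound amounts to exhibiting a set of blocks $\Bc^*\subseteq[b]$ whose restricted codeword has entropy strictly below $\Mc$ yet which is as large as possible: if I can produce $\Bc^*$ with $H(\cv(\Bc^*))<\Mc$ and $|\Bc^*|\ge \ceilb{\Mc(b_L-\rho_L)/K_L}+(\ceilb{\Mc/K_L}-1)\rho_L-1$, then $\rho=b-\max|\Bc^*|-1$ is at most the claimed quantity. (When the right-hand side of the stated inequality is negative the statement is vacuous, so I may assume $b$ is large enough that at least $\ceilb{\Mc/K_L}$ ``fresh'' local groups are available.)

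First I would set $s:=\ceilb{\Mc/K_L}$ and record two facts about any local group $\Bc_v$: since $\Cc|_{\Bc_v}$ is an $(n_L,b_L,K_L,k_L,\rho_L,\alpha)$ BFR code it has $q^{K_L}$ codewords, so $H(\cv(\Bc_v))\le K_L$; and by the BFR data-collection property, any $b_L-\rho_L$ of its blocks already determine $\cv(\Bc_v)$. More refined, I will need that any $j\le b_L-\rho_L$ of its blocks carry at most $\tfrac{j}{b_L-\rho_L}K_L$ symbols of information. Then I would build $\Bc^*$ greedily: starting from the empty set, repeatedly append an entire local group $\Bc_v$ whenever $\Bc_v\not\subseteq\Bc^*$ and doing so keeps $H(\cv(\Bc^*))<\Mc$; once no such group exists, append single blocks of a new local group as long as the entropy stays below $\Mc$; stop otherwise. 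By construction $H(\cv(\Bc^*))<\Mc$ throughout, so only the size estimate is at stake.

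The counting then goes as follows. Appending $t$ complete local groups raises the entropy by at most $tK_L$, which is $<\Mc$ for all $t\le s-1$, so at least $s-1$ complete groups get added, contributing $(s-1)b_L$ blocks at entropy cost at most $(s-1)K_L$. Writing $B:=\Mc-(s-1)K_L\in(0,K_L]$, in the second phase a fresh local group can contribute $j$ blocks at entropy cost at most $\tfrac{j}{b_L-\rho_L}K_L$, so the loop can append its blocks until $j$ reaches $\ceilb{B(b_L-\rho_L)/K_L}-1=\ceilb{\Mc(b_L-\rho_L)/K_L}-(s-1)(b_L-\rho_L)-1$ (using that $(s-1)(b_L-\rho_L)$ is an integer), which is at most $(b_L-\rho_L)-1<b_L$ and hence fits inside that group. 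Adding the two phases gives $|\Bc^*|\ge(s-1)b_L+\ceilb{\Mc(b_L-\rho_L)/K_L}-(s-1)(b_L-\rho_L)-1=\ceilb{\Mc(b_L-\rho_L)/K_L}+(s-1)\rho_L-1$, exactly the required size; plugging into \eqref{def:resilience} finishes the argument.

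The hard part will be the refined structural claim that any $j\le b_L-\rho_L$ blocks of a local BFR code carry at most $\tfrac{j}{b_L-\rho_L}K_L$ symbols of entropy. For the full-access case ($k_L=n_L$) this is immediate, since the local code is then MDS over the alphabet $\FF_q^{c\alpha}$ of per-block super-symbols. For the partial-access case ($k_{L,c}<c$) a block carries more symbols than are needed for local decoding, so I would have to invoke the explicit structure of the local code --- the layered Gabidulin$+$MDS construction of Appendix~\ref{app:gab_mds} --- to rule out that a few blocks already expose nearly $K_L$ independent symbols. A secondary (but standard, cf.\ \cite{Rawat:Optimal14}) subtlety is to make the greedy step robust when local groups overlap, so that ``appending a complete local group'' still adds enough previously-uncounted blocks relative to the entropy it introduces.
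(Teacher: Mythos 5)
Your proposal follows essentially the same route as the paper's proof (Appendix~\ref{app:res_theo}): an iterative greedy construction of a set $\Bc^*$ with $H(\cv(\Bc^*))<\Mc$, built from $\ceilb{\Mc/K_L}-1$ full local groups plus a maximal partial group, with the identical counting $|\Bc^*|\ge\ceilb{\frac{\Mc(b_L-\rho_L)}{K_L}}-1+\left(\ceilb{\frac{\Mc}{K_L}}-1\right)\rho_L$ and the same final appeal to the resilience definition \eqref{def:resilience}. The ``hard part'' you flag --- that $j$ blocks of a local group contribute at most $\frac{j}{b_L-\rho_L}K_L$ conditional entropy --- is precisely the increment bound the paper simply asserts (as $h_i\le(b_i-\rho_L)\frac{K_L}{b_L-\rho_L}$, following the approach of \cite{Gopalan:Locality12,Papailiopoulos:Locally12,Rawat:Optimal14}) without the finer justification you contemplate, so your argument is in essence the paper's.
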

\begin{proof}
The proof follows from the algorithmic approach as considered in \cite{Gopalan:Locality12}, \cite{Papailiopoulos:Locally12} and \cite{Rawat:Optimal14}, and detailed in the Appendix \ref{app:res_theo}. 
\end{proof} 

\subsection{Code Constructions for Resilience Optimal BFR-LRC}

In this section we propose two code constructions, which yield optimal codes in terms of resilience. Both of the constructions utilize Gabidulin coding as well as MDS codes but only the first construction uses projective plane geometry. 
%At first, we focus on constructions for BFR-LRCs. %\htext{don't mind the construction numbering yet}

\subsubsection{Basic Construction with Projective Planes}
~\\
\noindent \textbf{\textit{Construction IV:}} Consider a file $\fv$ of size $\Mc$ and projective plane of order $p$, $(v,\kappa,\lambda=1)-BIBD$.
\begin{itemize}
\item First, encode $\fv$ using $[N=\frac{K_Lb}{b_L},K=\Mc,D=N-\Mc+1]$ Gabidulin code, $\Cc^{\rm{Gab}}$. The resulting codeword $\cv \in \Cc^{\rm{Gab}}$ is divided into $\frac{b}{b_L}$ disjoint sets of symbols of size $K_L$.
\item For each disjoint set, divide it into $v$ partitions of equal size. For each partition, use $[\tilde{n},\tilde{k}=\frac{k_L(p+1-\rho_L)}{b_L-\rho_L}]$ MDS code, where $n_L=\tilde{b}v$.
%\htext{??You need to introduce all these parameters??} \textcolor{blue}{$n_L=\tilde{n}v$ and $k_L$ is defined from Definition 31. $p$ is for projective plane of order $p$.}
\item Place the resulting encoded symbols using \emph{Construction II} (with projective plane of order p). That is, local code in the local group will have BFR properties and constructed as in \emph{Construction II}. 
\end{itemize}

\begin{remark}
\textit{Construction IV} works only if $\rho_L \leq p$, since otherwise we would have non-positive value for $\tilde{k}$. Furthermore, we need $(b_L-\rho_L) \mid k_L(p+1-\rho_L)$ to have an integer $\tilde{k}$.
\end{remark}

\begin{corollary}
\textit{Construction IV} provides resilience-optimal codes $\Cc^{\rm{BFR-LRC}}$, when $p^2+p+1 \mid K_L \mid \Mc$ and $b_L \mid b$.
\label{cor:cons1}
\end{corollary}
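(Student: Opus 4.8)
\textit{Proof idea.} The plan is to prove ``resilience-optimal'' in two halves: first that Construction~IV really does produce an $(r_L,\rho_L)$-local BFR code, and second that its resilience attains the upper bound of Theorem~\ref{thm:BFR-LRC}, so that the two bounds coincide. Write $g:=\Mc/K_L$, which is an integer by hypothesis. Using $p^2+p+1\mid K_L\mid\Mc$ and $b_L\mid b$, I first record the bookkeeping: the local group has $b_L=v=p^2+p+1$ blocks; the integrality conditions inside Construction~IV (namely $v\mid K_L$ and, by the accompanying Remark, $\rho_L\le p$ together with $(b_L-\rho_L)\mid k_L(p+1-\rho_L)$) are met; the $b$ blocks split into $b/b_L$ disjoint local groups and the length-$N=K_Lb/b_L$ Gabidulin codeword into $b/b_L$ disjoint chunks of $K_L$ symbols, one per group; and, since $K_L\mid\Mc$, the ceilings in Theorem~\ref{thm:BFR-LRC} collapse, so the target resilience is $\rho_{\max}=b-g(b_L-\rho_L)-(g-1)\rho_L=b-gb_L+\rho_L$, i.e.\ we must show that \emph{any} $gb_L-\rho_L$ blocks permit file recovery.

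For the locality claim, the code restricted to a local group $\Bc_i$ is exactly Construction~II driven by a projective plane of order $p$ with MDS sub-codes. By the plane's incidence structure each of the $v$ partitions of that chunk is placed in exactly $p+1$ of the $b_L$ blocks, so deleting any $\rho_L$ blocks still leaves at least $p+1-\rho_L$ blocks carrying every partition; since each surviving such block supplies $k_{c,L}$ of that partition's MDS symbols and $\tilde k=k_L(p+1-\rho_L)/(b_L-\rho_L)=k_{c,L}(p+1-\rho_L)$, every partition is decodable from its MDS restriction. Hence any $b_L-\rho_L$ blocks of a group recover the whole $K_L$-symbol chunk: $\Cc|_{\Bc_i}$ is an $(n_L,b_L,K_L,k_L,\rho_L,\alpha)$ BFR code, and $\Cc^{\rm BFR-LRC}$ has $(r_L,\rho_L)$ locality.

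For the resilience claim I would show $H(\cv(\Bc^*))\ge\Mc$ for every $\Bc^*$ with $|\Bc^*|=gb_L-\rho_L$, which by \eqref{def:resilience} gives $\rho\ge\rho_{\max}$ and hence, with Theorem~\ref{thm:BFR-LRC}, equality. The reduction is the standard Gabidulin--MDS one: by the MRD property (the Remarks following the Gabidulin-code definition), $\fv$ is recoverable once one obtains at least $\Mc$ codeword symbols of $\Cc^{\rm Gab}$ that are $\FF_q$-linearly independent, and the evaluation points of $\Cc^{\rm Gab}$ together with the local MDS generator matrices are chosen as in the concatenation of Appendix~\ref{app:gab_mds} and \cite{Rawat:Optimal14} so that symbols pulled from distinct chunks are independent and any $\Mc$ of the (possibly linearly combined) symbols arising from the admissible access patterns remain $\FF_q$-independent. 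It then remains to count: writing $t_i$ for the number of blocks of $\Bc^*$ in group $i$ (so $\sum_i t_i=gb_L-\rho_L$ and $0\le t_i\le b_L$), one uses from the locality step that a group with $t_i\ge b_L-\rho_L$ surrenders all $K_L$ of its chunk, and — via the $\lambda=1$ unique-intersection property of the plane, which bounds how many partitions a set of $b_L-t_i$ missing blocks can kill — that a partially accessed group still surrenders a controlled number of chunk symbols; summing these per-group lower bounds and minimizing over admissible $(t_i)$, the divisibility $K_L\mid\Mc$ forces the minimum to be exactly $\Mc$.

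The main obstacle I anticipate is precisely this worst-case count over $(t_i)$: one must verify that no adversarial spreading of the $gb_L-\rho_L$ accessed blocks across the $b/b_L$ groups leaves the harvested symbol total below $\Mc$, and it is here that $K_L\mid\Mc$, $b_L\mid b$ and $p^2+p+1\mid K_L$ are needed to make the per-group bounds add up \emph{exactly} to the claimed $\rho_{\max}$ rather than something weaker. A secondary, routine point is choosing the extension degree $m$ large enough that the $\Mc$ harvested symbols are genuinely $\FF_q$-independent so the MRD decoder applies, exactly as in Appendix~\ref{app:gab_mds}.
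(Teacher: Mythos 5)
Your overall architecture is the same as the paper's: reduce resilience-optimality to showing that any $E = b - \ceilb{\tfrac{\Mc(b_L-\rho_L)}{K_L}} - (\ceilb{\tfrac{\Mc}{K_L}}-1)\rho_L$ block erasures are correctable, use the fact that each local group is a BFR code with resilience $\rho_L$ (so $\leq \rho_L$ erased blocks in a group cost nothing), and let the outer Gabidulin code absorb the rank erasures coming from the groups that are hit harder. Your first half, verifying that the restriction to a local group is indeed an $(n_L,b_L,K_L,k_L,\rho_L,\alpha)$ BFR code via the projective-plane incidence count $\tilde{k}=k_{c,L}(p+1-\rho_L)$, is correct and is in fact spelled out in more detail than the paper bothers to (the paper simply invokes ``resilience of the local group is $\rho_L$'').

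The genuine gap is in your second half, and you flag it yourself: the minimization over the distribution $(t_i)$ of available blocks across local groups is never carried out, and the per-group bound you would need for it --- that a group missing $b_L - t_i > \rho_L$ blocks still ``surrenders a controlled number of chunk symbols,'' controlled via the $\lambda=1$ intersection property --- is never stated, let alone proved. Without that per-group rank-loss profile, the claim that the harvested independent Gabidulin symbols always total at least $\Mc$ is unsupported, so the lower bound $\rho \geq b - \ceilb{\tfrac{\Mc}{K_L}}\,b_L + \rho_L$ is not established. The paper closes exactly this step by a different (and much shorter) route: since $K_L \mid \Mc$, write $\Mc=\alpha_1 K_L$, so $E = b_L(\tfrac{N}{K_L}-\alpha_1)+\rho_L$; it then asserts that the worst-case pattern concentrates the erasures --- $\tfrac{N}{K_L}-\alpha_1$ groups fully erased plus $\rho_L$ erasures in one further group --- whereupon the last group contributes no rank erasures (local resilience) and the fully erased groups contribute $K_L$ each, for a total of $N-\alpha_1 K_L = N-\Mc \leq D-1$ rank erasures, which the Gabidulin code corrects. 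In other words, the paper never needs the fine-grained count of how many chunk symbols a partially erased group retains, because it only checks the (asserted) dominating pattern of whole-group losses; your plan commits you to proving the stronger per-group statement for every intermediate $t_i$, which is precisely the piece that is absent. To repair your argument, either prove that intermediate patterns are dominated by the concentrated one (e.g., by showing the rank loss of a group with $e$ erased blocks is at most $\tfrac{K_L}{b_L-\rho_L}\max\{0,e-\rho_L\}$ for this projective-plane local code, and summing), or simply adopt the paper's worst-case-pattern argument.
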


\begin{proof}
In order to prove that $\Cc^{\rm{BFR-LRC}}$ attains the bound in \eqref{thm:BFR-LRC}, it is sufficient to demonstrate that any pattern of $E = b - \ceilb{\frac{\Mc(b_L-\rho_L)}{K_L}} - \left(\ceilb{\frac{\Mc}{K_L}}-1\right)\rho_L$ number of block erasures can be corrected. For $b=\frac{Nb_L}{K_L}$, we then have $E=\frac{Nb_L}{K_L}-\ceilb{\frac{\Mc(b_L-\rho_L)}{K_L}} - \left(\ceilb{\frac{\Mc}{K_L}}-1\right)\rho_L$. Also note that the worst case erasure pattern is the one when erasures happen in the smallest possible number of local groups and the number of block erasures inside a local group is the maximum possible.

Let $\Mc=\alpha_1K_L$, then $E=b_L(\frac{N}{K_L}-\alpha_1)+\rho_L$, which corresponds to $\frac{N}{K_L}-\alpha_1$ local groups with $b_L$ erasures and $\rho_L$ erasures from one additional group (which results in no rank erasures for the underlying Gabidulin codewords since resilience of local group is $\rho_L$). Then, total rank erasures is $K_L(\frac{N}{K_L}-\alpha_1)=N-K_L\alpha_1$ which can be corrected by Gabidulin code since its minimum distance is $D=N-\Mc+1=N-\alpha_1K_L+1$. 
\end{proof}

\begin{remark}
Although above construction yields optimal codes in terms of resilience, when $\rho_L \not = 0$, they are not rate optimal. The optimal codes should store $k_L \alpha = k_Lv \tilde{\alpha}=k_L (p+1) \tilde{\alpha}$ symbols (in each local group), since each node stores $\alpha=(p+1)\tilde{\alpha}$ where $\tilde{\alpha}$ is the number of symbols from a partition and each node stores symbols from $p+1$ partitions. On the other hand, with the above construction, one can store $\tilde{k} v \tilde{\alpha} = \frac{k_L(p+1-\rho_L)}{p^2+p+1-\rho_L}(p^2+p+1)\tilde{\alpha}$ symbols (in each local group). Note that for $\rho_L \not = 0$, $\frac{k_L(p+1-\rho_L)}{p^2+p+1-\rho_L}(p^2+p+1)\tilde{\alpha} \leq k_L (p+1) \tilde{\alpha}$, which implies that the code construction is not optimal in terms rate (i.e., file size for a fixed $n$).
\end{remark}

\subsubsection{Construction for BFR-LRC with Improved Rate}

In the previous section, we utilize projective planes in code construction, however resulting codes are not optimal in terms of the rate. We now propose another construction for BFR-LRC to achieve higher rate.

\noindent \textbf{\textit{Construction V:}} Consider a file $\fv$ of size $\Mc$. 
\begin{itemize}
\item First, encode $\fv$ using $[N,K=\Mc,D=N-\Mc+1]$ Gabidulin code, $\Cc^{\rm{Gab}}$. The resulting codeword $\cv \in \Cc^{\rm{Gab}}$ is divided into $\frac{b}{b_L}$ disjoint groups of size $K_L = \frac{Nb_L}{b}$.
\item Apply MDS codes of $[\tilde{n}=b_Lc,\tilde{k}=\frac{Nb_L}{b}]$ to each disjoint group. Resulting $b_Lc$ symbols are placed to each block equally ($c$ symbols per block).
\end{itemize}

\begin{corollary}
\textit{Construction V} yields an optimal code, $\Cc^{\rm{BFR-LRC}}$, when $b \mid Nb_L$.
\label{cor:cons2}
\end{corollary}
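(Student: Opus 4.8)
The plan is to follow the template of the proof of Corollary~\ref{cor:cons1}: since Theorem~\ref{thm:BFR-LRC} upper bounds the resilience of any BFR-LRC, it suffices to show that the code $\Cc^{\rm{BFR-LRC}}$ produced by \emph{Construction V} tolerates every pattern of $E := b - \ceilb{\frac{\Mc(b_L-\rho_L)}{K_L}} - \left(\ceilb{\frac{\Mc}{K_L}}-1\right)\rho_L$ block erasures, i.e., that $\fv$ is recoverable from any $b-E$ surviving blocks. The hypothesis $b \mid Nb_L$ ensures $K_L = \tfrac{Nb_L}{b}$ is a positive integer, so $\Cc^{\rm{Gab}}$ splits into $b/b_L$ coordinate groups of size $K_L$, each re-encoded by an $[\tilde n = b_L c,\tilde k = K_L]$ MDS code and spread over one local group of $b_L$ blocks ($c$ symbols per block); I also use that the parameters are chosen so that this local MDS code forms a BFR code of resilience exactly $\rho_L$, i.e., $(b_L-\rho_L-1)c < K_L \le (b_L-\rho_L)c$.

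First I would reduce to a rank-erasure count on the outer codeword. If local group $i$ retains $s_i$ of its $b_L$ blocks it retains $s_i c$ of its MDS symbols, hence by the MDS property the decoder recovers $\min\{K_L, s_i c\}$ of that group's $K_L$ Gabidulin coordinates; by the standard Gabidulin--MDS concatenation argument (see Appendix~\ref{app:gab_mds}) the outer code then faces a total rank erasure of $L(s) := \sum_i \max\{0,\, K_L - s_i c\}$, and $\fv$ is recovered whenever $L(s) \le D-1 = N-\Mc$. It therefore remains to maximize $L(s)$ over all profiles $(s_i)$ with $\sum_i(b_L-s_i)=E$ and $0\le s_i\le b_L$.

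Next I would show the maximizer is the extremal profile. Since $x\mapsto \max\{0,\,K_L-(b_L-x)c\}$ is convex in the number of erased blocks $x$ and the total erasure budget is fixed, $L(s)$ is maximized by fully erasing $\lfloor E/b_L\rfloor$ local groups and placing the remaining $E \bmod b_L$ erasures in one more group (feasible since $E < b$). Because $K_L \le (b_L-\rho_L)c$, any group with at most $\rho_L$ erased blocks contributes nothing to $L$; a direct computation with the ceilings then shows that for $E$ as above one has $L(s)\le N-\Mc = D-1$, with equality when $K_L\mid\Mc$. (The only nontrivial point when $K_L\nmid\Mc$ is the inequality $\mu \le c\,\ceilb{\mu(b_L-\rho_L)/K_L}$ for $\mu := \Mc - (\ceilb{\Mc/K_L}-1)K_L$, which follows from $(b_L-\rho_L)c\ge K_L$.) As $\Cc^{\rm{Gab}}$ is MRD with $D=N-\Mc+1$, it corrects any $D-1$ rank erasures, so $\fv$ is recovered and the bound of Theorem~\ref{thm:BFR-LRC} is attained.

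The main obstacle is this optimization step: making rigorous that spreading erasures so that several groups each lose between $\rho_L+1$ and $b_L-1$ blocks cannot beat erasing whole groups, and carrying the ceiling bookkeeping through for $K_L\nmid\Mc$ (this is precisely where the hypothesis $b\mid Nb_L$, weaker than the $K_L\mid\Mc$ used in Corollary~\ref{cor:cons1}, must be made to work). A secondary check is that the chosen $[\tilde n,\tilde k]$ MDS parameters really do realize a local BFR code of resilience $\rho_L$, which legitimizes the ``up to $\rho_L$ erasures per group are free'' claim used above.
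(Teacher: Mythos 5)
Your proposal is correct and follows essentially the same route as the paper's proof: reduce tolerating $E$ block erasures to counting rank erasures on the outer Gabidulin codeword, argue that the worst case concentrates erasures in whole local groups plus one partially erased group, and verify that the resulting rank loss is at most $D-1=N-\Mc$ so the MRD code recovers $\fv$, hence the bound of Theorem~\ref{thm:BFR-LRC} is attained. The only differences are in execution: you justify the worst-case pattern via convexity of $x\mapsto\max\{0,\,K_L-(b_L-x)c\}$ (the paper simply asserts it), and you compress the paper's three-case $\alpha_1,\beta_1,\gamma_1$ bookkeeping into the single inequality $\mu\le c\,\ceilb{\mu(b_L-\rho_L)/K_L}$, doing the per-block accounting with $c$ instead of $K_L/(b_L-\rho_L)$, which also dispenses with the paper's implicit divisibility assumption $(b_L-\rho_L)\mid K_L$.
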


\begin{proof}
Let $\alpha_1$, $\beta_1$ and $\gamma_1$ be integers such that $\Mc = \frac{K_L}{b_L-\rho_L}(\alpha_1(b_L-\rho_L)+\beta_1)+\gamma_1$, where $1 \leq \alpha_1 \leq \frac{b}{b_L}$; $0 \leq \beta_1 \leq b_L-\rho_L-1$; and $0 \leq \gamma_1 \leq \frac{K_L}{b_L-\rho_L}-1$. $E = b - \ceilb{\frac{\Mc(b_L-\rho_L)}{K_L}} - \left(\ceilb{\frac{\Mc}{K_L}}-1\right)\rho_L$ is the number of block erasures that can be tolerated.

\begin{itemize}
\item If $\gamma_1=\beta_1=0$, then $\Mc=K_L\alpha_1$. Then, we have $E=b_L(\frac{N}{K_L}-\alpha_1)+\rho_L$ number of block erasures to be tolerated, similar to Corollary~\ref{cor:cons1}. Thus, the worst case happens when $\frac{N}{K_L}-\alpha_1$ local groups with all of their blocks erased  and one additional local group with $\rho_L$ blocks erased. The latter does not correspond to any rank erasures since the resilience of local group is $\rho_L$, therefore the worst case scenario results in $K_L$ rank erasures in each of $(\frac{N}{K_L}-\alpha_1)$ local groups. Since $D-1=N-\Mc=N-K_L\alpha_1$, these worst case scenarios can be corrected by the Gabidulin code.

\item If $\gamma_1=0$ and $\beta_1 > 0$, then $\Mc = \frac{K_L}{b_L-\rho_L}(\alpha_1(b_L-\rho_L)+\beta_1)$. Hence, we have $E=b_L(\frac{N}{K_L}-\alpha_1-1)+b_L-\beta_1$. Thus, the worst case happens when $\frac{N}{K_L}-\alpha_1-1$ local groups with all of their blocks erased and one additional local group with $b_L-\beta_1$ blocks erased. Then, that scenario corresponds to $(\frac{N}{K_L}-\alpha_1)K_L-\beta_1\frac{K_L}{b_L-\rho_L}$ rank erasures, which can be corrected by Gabidulin code since $D-1=N-\Mc=N-\alpha_1K_L-\beta_1\frac{K_L}{b_L-\rho_L}$.

\item If $\gamma_1 > 0$, then $\Mc = \frac{K_L}{b_L-\rho_L}(\alpha_1(b_L-\rho_L)+\beta_1)+\gamma_1$. Hence, we have $E=b_L(\frac{N}{K_L}-\alpha_1-1)+b_L-\beta_1-1$. Thus, the worst case happens when $\frac{N}{K_L}-\alpha_1-1$ local groups with all of their blocks erased and one additional local group with $b_L-\beta_1-1$ blocks erased. Then, that scenario corresponds to $(\frac{N}{K_L}-\alpha_1)K_L-(\beta_1+1)\frac{K_L}{b_L-\rho_L}$ rank erasures, which can be corrected by Gabidulin code since $D-1=N-\Mc=N-\alpha_1K_L-\beta_1\frac{K_L}{b_L-\rho_L}-\gamma_1$ and $\gamma_1 < \frac{K_L}{b_L-\rho_L}$.
\end{itemize}
\end{proof}

\begin{remark}
We note that the above construction is similar to that of \cite{Rawat:Optimal14}, here modified for the block failure model to achieve resilience optimal construction with the improved rate. 
\end{remark}

\begin{comment}
\begin{example}
Consider DSS with $(b,b_L,\Mc,\rho_L,k)=(39,13,156,2,66)$. First, the file $\Mc=156$ is encoded with $[N=234,K=\Mc=156,D=79]$ Gabidulin code. The resulting code is divided into $\frac{b}{b_L}=3$ local groups, each of size $\frac{Nb_L}{b}=78$. Then, each local group is further partitioned into $v=13$ and each partition is encoded with $[\tilde{n}=12,\tilde{k}=6]$ MDS code. The resulting symbols are placed accordingly. 
\end{example}  

Observe that since each local group is BFR with $\rho_L=2$, any 2 block failures within local group is tolerable. Furthermore, from Theorem \ref{thm:BFR-LRC}, we see that $\rho \leq 39-22-2=15$ for the example above. We can observe that even if 15 blocks are lost i.e. a full group plus 2 blocks in some other group, DC is still able to recover the original file. Although \textit{Construction I} results in non-optimal codes in terms of file size, they are still optimal in terms of resilience. In order to get better codes in terms of file size, we propose another construction.
\end{comment}

\subsection{Local Regeneration for BFR Codes}

In the previous section, MDS codes are utilized in local groups. These codewords, however, are not repair bandwidth optimal. As an alternative, regenerating codes can be used in local groups to construct codes which have better trade-off in terms of repair bandwidth. Differentiating between the two important points, we denote BFR-LRC codes with regenerating code properties which operate at minimum per-node storage point as BFR-MSR-LRC. Similarly, the codes operating at minimum repair bandwidth point are called BFR-MBR-LRC.

Let $\Gc_1,\dots,\Gc_{\frac{b}{b_L}}$ represent the disjoint set of indices of blocks where $\Gc_i$ represents local group $i$, which has $b_L$ blocks. A failed node in one of the blocks in $\Gc_i$ is repaired by contacting any $b_L-\sigma_L$ blocks within the group. A newcomer downloads $\beta$  symbols from $\frac{d_L}{b_L-\sigma_L}$ nodes from each of $b_L-\sigma_L$ blocks. That is, the local group has the properties of repair bandwidth efficient BFR codes as studied in Section~\ref{sec:FileSize}.

%\begin{definition}[Uniform rank accumulation (URA) codes]
%Let $G$ be a generator matrix for a BFR code $\Cc$, and $S_i$ be an arbitrary subset of $i$ thickest columns \htext{??thick column, need to define what is meant here??} \textcolor{blue}{A thick column represents $\alpha$ columns, so a node. Then, thickest column is for $c\alpha$ columns, which represents a block.} of $G$, which represents block $i$ for some $i \in 1,\dots,b$. \htext{??not clear. $S_i$ nodes by S. $S_i$ is a set what are $S_1,\dots,S_b$ then??} \textcolor{blue}{$S_i$ are blocks.} Then, $\Cc$ is an URA code, if the restriction $G|_{S_i}$ of $G$ to $S_i$, has rank $\rho_i$ that is independent of specific subset $S_i$ of $i$ indices chosen.  \htext{Revise definition.}
%\end{definition}

\begin{definition}[Uniform rank accumulation (URA) codes]
Let $G$ be a generator matrix for a BFR code $\Cc$. Columns of $G$ produce the codewords, henceforth we can think of each $\alpha$ columns (also referred to as \emph{thick} column) of $G$ as a representation of a node storing $\alpha$ symbols. Then, a block can be represented by $c$ such \emph{thick} columns. Let $S_i$ be an arbitrary subset of $i$ such blocks. $\Cc$ is an URA code, if the restriction $G|_{S_i}$ of $G$ to $S_i$, has rank $\rho_i$ that is independent of specific subset $S_i$ and it only depends on $|S_i|$.
\end{definition}

\begin{remark}
\looseness=-1 Generally, URA codes are associated with rank accumulation profile to calculate the rank of any subset $S_i$. However, rank accumulation profile is not required but rather rank accumulation is enough for a code to be considered as URA code. We note that for BFR-MSR/MBR codes, we do not have a specific rank accumulation profile but that does not rule out BFR-MSR/MBR codes being URA since they still obey the rank accumulation property. Specifically, we note that $H(b_{\Sc_i})=f(|\Sc_i|)$ for both BFR-MSR/MBR, which makes them URA codes.  
\end{remark}

Following similar steps as introduced in \cite{Kamath:Explicit13}, resilience upper bound can be derived when local codes are URA codes. Consider the finite length vector $(b_1,\dots,b_{b_L})$ and its extension with $b_L$ period as $b_{i+jb_L}=b_i$, $1\leq i \leq b_L$, $j\geq 1$. Let $H(b_\Sc)$ denote the entropy of set of blocks $\Sc$,
\begin{equation}
H(b_\Sc)=\sum_{i=1}^{|\Sc|}H(b_{j_i}|b_{j_1},\cdots,b_{j_{i-1}}), \:\:\:\:\:\: |\Sc| \geq 1, \quad \Sc = \left\{ j_1,\dots,j_{|S|}\right\} \subseteq [b].
\end{equation}
%\htext{??not clear why $b_1$ to $b_{|\Sc|-1}$ should be in $\Sc$. you need to use proper enumaration??}
$H(\cdot)$ function in the above only depends on the structure of $\Sc$, which contains $\mu$ number of local groups each with $b_L$ blocks and additional set of $\phi$ blocks. We denote the entropy of this set by referring to this structure and use $H(\cdot)$ (with some abuse of notation) function with an argument of size of $\Sc$. More specifically, for integers $\mu \geq 0$ and $1 \leq \phi \leq b_L$, let $H(\mu b_L+\phi)=\mu K_L+H(\phi)$. (Note that, due to URA property, entropy is only a function of number of blocks here.) For the inverse function $H^{(\textrm{inv})}$, we set $H^{(\textrm{inv})}(\varphi)$ for $\varphi \geq 1$, to be largest integer $\Sc$ such that $H(b_\Sc)\geq \varphi$. Then, we have for $\tilde{\mu}\geq 0$ and $1 \leq \tilde{\phi} \leq K_L$, $H^{(\textrm{inv})}(\tilde{\mu}K_L+\tilde{\phi})=\tilde{\mu}b_L+H^{(\textrm{inv})}(\tilde{\phi}),$ where $H^{(\textrm{inv})}(\tilde{\phi}) \leq \min\left\{ b-\rho_L, b-\sigma_L \right\}$.

\begin{theorem}
The resilience of BFR-LRC is upper bounded by $\rho \leq b-H^{(\textrm{inv})}(\Mc)$.
\end{theorem}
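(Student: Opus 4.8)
The plan is to produce a set of blocks $\Bc^{*}\subseteq[b]$ with $|\Bc^{*}|=H^{(\textrm{inv})}(\Mc)-1$ and $H(\cv(\Bc^{*}))<\Mc$; then the definition of resilience in \eqref{def:resilience} gives directly $\rho\le b-|\Bc^{*}|-1=b-H^{(\textrm{inv})}(\Mc)$. This is the rank-accumulation technique of \cite{Kamath:Explicit13} carried over to the block-failure model, and the only structural input it needs is that each local code is a URA BFR code on $b_L$ blocks storing $K_L$ symbols, so that the $q$-entropy of a local codeword restricted to any $i$ of its $b_L$ blocks is a function $H(i)$ of $i$ alone, with $H(b_L)=K_L$.

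First I would unpack $H^{(\textrm{inv})}$ at $\Mc$. Write $\Mc=\tilde\mu K_L+\tilde\phi$ with $\tilde\mu\ge 0$ and $1\le\tilde\phi\le K_L$. Reading $H^{(\textrm{inv})}(\varphi)$ as the least number of blocks whose accumulated entropy is at least $\varphi$, the relations $H(\mu b_L+\phi)=\mu K_L+H(\phi)$ and $H^{(\textrm{inv})}(\tilde\mu K_L+\tilde\phi)=\tilde\mu b_L+H^{(\textrm{inv})}(\tilde\phi)$ give $H^{(\textrm{inv})}(\Mc)=\tilde\mu b_L+H^{(\textrm{inv})}(\tilde\phi)$, and by minimality $H\!\left(H^{(\textrm{inv})}(\tilde\phi)-1\right)<\tilde\phi$. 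Since contacting all $b$ blocks recovers $\fv$ we have $H^{(\textrm{inv})}(\Mc)\le b$, which together with $b_L\mid b$ guarantees that at least $\tilde\mu+1$ local groups are available.

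Next I would take $\Bc^{*}$ to be the union of $\tilde\mu$ complete local groups $\Gc_{i_1},\dots,\Gc_{i_{\tilde\mu}}$ and any $s:=H^{(\textrm{inv})}(\tilde\phi)-1$ blocks of one further local group, so that $|\Bc^{*}|=\tilde\mu b_L+s=H^{(\textrm{inv})}(\Mc)-1$. By subadditivity of $q$-entropy across the groups,
\[
H(\cv(\Bc^{*}))\ \le\ \sum_{j=1}^{\tilde\mu}H\!\left(\cv(\Gc_{i_j})\right)+H(s)\ \le\ \tilde\mu K_L+H(s)\ <\ \tilde\mu K_L+\tilde\phi\ =\ \Mc ,
\]
where $H(\cv(\Gc_{i_j}))\le K_L$ because the restriction of the global codeword to a local group is a deterministic image of the $K_L$ local message symbols of that $(n_L,b_L,K_L,k_L,\rho_L,\alpha)$ BFR code, and $H(s)$ is the within-group entropy of $s$ blocks (well defined by the URA property), bounded by the minimality inequality just obtained. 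This $\Bc^{*}$ is exactly the set required, which finishes the argument.

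The step demanding the most care is the identity ``($q$-entropy of any $i$ blocks of a local group) $=H(i)$'': one must check that a URA local code — concretely a BFR-MSR or BFR-MBR code used locally, for which the Remark preceding the theorem records $H(b_{\Sc_i})=f(|\Sc_i|)$ — has the $q$-entropy of its codeword restricted to any $i$ of its $b_L$ (thick-column) blocks depending only on $i$, and that the $b_L$-periodic extension of $(b_1,\dots,b_{b_L})$ is the correct bookkeeping device so that ``filling one local group caps its contribution at $K_L=H(b_L)$''. A secondary, purely combinatorial point is the boundary behaviour of $H^{(\textrm{inv})}(\tilde\phi)$ when it equals $b_L$ (so the ``partial'' group is in fact full) or when $\tilde\phi=K_L$, which is absorbed into the choice of $\tilde\mu$ and $\tilde\phi$; with this understood the construction above goes through.
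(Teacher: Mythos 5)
Your proof is correct and takes essentially the approach the paper intends: the paper states this theorem without an explicit proof, merely pointing to the entropy/rank-accumulation technique of \cite{Kamath:Explicit13} and the resilience definition in \eqref{def:resilience}, and your construction of $\tilde{\mu}$ full local groups plus a maximal partial group of $H^{(\textrm{inv})}(\tilde{\phi})-1$ blocks, bounded via subadditivity, $H(\cv(\Gc_{i_j}))\leq K_L$, and the URA profile, is exactly that argument with the omitted details filled in. The one caveat is your (sensible) reading of $H^{(\textrm{inv})}(\varphi)$ as the smallest number of blocks whose accumulated entropy is at least $\varphi$: the paper's literal wording says ``largest,'' which must be a typo, since only the smallest-reading is consistent with the recursion $H^{(\textrm{inv})}(\tilde{\mu}K_L+\tilde{\phi})=\tilde{\mu}b_L+H^{(\textrm{inv})}(\tilde{\phi})$ and with the subsequent file-size statement $\Mc\leq H(b-\rho)$, and only the upper bound $H(\cdot)\leq$ profile value is actually needed where you invoke the URA property.
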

When URA codes are used as local codes, we have the file size bound for resilience optimal codes as $\Mc \leq H(b-\rho) = \mu K_L + H(\phi)$, where $\mu=\floorb{\frac{b-\rho}{b_L}}$ and $\phi=b-\rho-\mu b_L$. Note that if $\phi \geq \min\left\{b_L-\rho_L, b_L-\sigma_L\right\}$, then $H(\sum_{i=1}^{\phi}b_i) = K_L$ since from any such $\phi$ blocks, one can regenerate all symbols or retrieve the content stored in the corresponding local group. Therefore, the case of having $\phi \geq \min\{b_L-\rho_L,$  $b_L-\sigma_L\}$ results in $\Mc=(\mu+1)K_L$ and we will mainly focus on the otherwise in the following.

\subsubsection{Local Regeneration with BFR-MSR Codes}

At first, we analyze the case where BFR-MSR codes are used inside local groups to have better trade-off in terms of repair bandwidth. When BFR-MSR codes are used, dimension of local code is given by
\begin{equation}
K_L = \sum_{i=1}^{b_L}H(b_i|b_1,\dots,b_{i-1})=k_L\alpha.
\end{equation} 

Since BFR-MSR is a class of URA codes, we can upper bound the resilience of BFR-MSR-LRC as 
$\rho \leq b - H^{(\textrm{inv})}(\Mc)$, where for BFR-MSR codes we have
\begin{equation}
H^{(\textrm{inv})}(\mu K_L+\phi)=\mu b_L+\varphi,
\end{equation}
for some $\mu \geq 0$ and $1 \leq \phi \leq K_L$ and $\varphi$ is determined from $\frac{K_L(\varphi-1)}{b_L-\rho_L} < \phi \leq \frac{K_L \varphi}{b_L-\rho_L}$. Then, we can derive the file size bound for optimal BFR-MSR-LRC as $\Mc \leq H(b-\rho)$, i.e.;

\begin{equation}
\Mc \leq \mu K_L + \frac{K_L \varphi}{b_L-\rho_L},
\end{equation} 
where $\mu=\floorb{\frac{b-\rho}{b_L}}$ and $\varphi=b-\rho-\mu b_L < \min\left\{b_L-\rho_L, b_L-\sigma_L\right\}$. If $\varphi \geq \min\left\{b_L-\rho_L, b_L-\sigma_L\right\}$, then $\phi=K_L$ and $\Mc \leq K_L(\mu+1)$.

\subsubsection{Local Regeneration with BFR-MBR Codes}

In the following, we focus on the case where the local groups form BFR-MBR codes. Then, the dimension of the local code is given by

\begin{equation}
\begin{split}
K_L & = \sum_{i=1}^{b_L}H(b_i|b_1,\dots,b_{i-1}) \\ &= 
\begin{cases}
\beta(k_Ld_L - \frac{k_L^2(b_L-\rho_L-1)}{2(b_L-\rho_L)}), & \textrm{if } \frac{d_L}{b_L-\sigma_L} \geq \frac{k_L}{b_L -\rho_L} \textrm{ and } \sigma_L \leq \rho_L \\
\beta(k_Ld_L - \frac{k_L^2(b_L-\sigma_L)(b_L+\sigma_L-2\rho_L-1)}{2(b_L-\rho_L)^2}), & \textrm{if } \frac{d_L}{b_L-\sigma_L} \geq \frac{k_L}{b_L -\rho_L} \textrm{ and } \sigma_L > \rho_L \\
\beta(\frac{k_Ld_L(\rho_L-\sigma_L+1)}{b_L-\sigma_L} + \frac{d_L^2(b_L-\rho_L)(b_L-\rho_L-1)}{2(b_L-\sigma_L)^2}), & \textrm{if } \frac{d_L}{b_L-\sigma_L} < \frac{k_L}{b_L -\rho_L} \textrm{ and } \sigma_L < \rho_L \\
\end{cases}
\end{split}
\end{equation}
Using the fact that BFR-MBR is URA code, the upper bound on the resilience of BFR-MBR-LRC is given by $\rho \leq b - H^{(\textrm{inv})}(\Mc)$, where for BFR-MBR codes we have
\begin{equation}
H^{(\textrm{inv})}(\mu K_L+\phi)=\mu b_L+\varphi,
\end{equation}
for some $\mu \geq 0$ and $1 \leq \phi \leq K_L$ and $\varphi$ is determined from
\begin{comment}
\begin{equation}
\varphi =
\begin{cases}
\beta(\frac{k_L d_L(\varphi-1)}{b_L-\rho_L} - \frac{k_L^2(\varphi-2)(\varphi-1)}{2(b_L-\rho_L)^2}) < \phi \leq \beta(\frac{k_L d_L \varphi}{b_L-\rho_L} - \frac{k_L^2\varphi(\varphi-1)}{2(b_L-\rho_L)^2}), &  \textrm{if } \frac{d_L}{b_L-\sigma_L} \geq \frac{k_L}{b_L -\rho_L} \\
\beta(\frac{k_Ld_L(\varphi-1)}{b_L-\rho_L}(\frac{b_L-\sigma_L-\varphi+2}{b_L-\sigma_L})-\frac{d_L^2(\varphi-2)(\varphi-1)}{2(b_L-\sigma_L)^2}) < \phi \leq \beta(\frac{ k_Ld_L\varphi}{b_L-\rho_L}(\frac{b_L-\sigma_L-\varphi+1}{b_L-\sigma_L})+\frac{d_L^2\varphi (\varphi-1)}{2(b_L-\sigma_L)^2}), & o.w
\end{cases}
\end{equation}
\end{comment}
\begin{equation}
\varphi =
\begin{cases}
\beta(\frac{k_L d_L(\varphi-1)}{b_L-\rho_L} - \frac{k_L^2(\varphi-2)(\varphi-1)}{2(b_L-\rho_L)^2}) < \phi \leq \beta(\frac{k_L d_L \varphi}{b_L-\rho_L} - \frac{k_L^2\varphi(\varphi-1)}{2(b_L-\rho_L)^2}), \qquad \qquad \qquad  \textrm{if } \frac{d_L}{b_L-\sigma_L} \geq \frac{k_L}{b_L -\rho_L} \\
\frac{d_L(\varphi-1)\beta}{b_L-\sigma_L}(\frac{d_L}{2}+\frac{(2k_c-d_r)(b_L-\sigma_L-\varphi+2)}{2})) < \phi \leq \frac{d_L\varphi\beta}{b_L-\sigma_L}(\frac{d_L}{2}+\frac{(2k_c-d_r)(b_L-\sigma_L-\varphi+1)}{2})),  \quad \textrm{o.w}
\end{cases}
\end{equation}
%\htext{??Double check this and re-write to fit to page.??}
where $k_c=\frac{k_L}{b-L-\rho_L}$ and $d_r=\frac{d_L}{b_L-\sigma_L}$. Now, the file size bound for an optimal BFR-MBR-LRC is given by
\begin{equation}
\Mc \leq \mu K_L + \Delta,
\end{equation}
where $\mu=\floorb{\frac{b-\rho}{b_L}}$, $\varphi=b-\rho-\mu b_L < \min\left\{b_L-\rho_L, b_L-\sigma_L\right\}$ and
\begin{equation}
\Delta = 
\begin{cases}
\beta(\frac{k_L d_L \varphi}{b_L-\rho_L} - \frac{k_L^2\varphi(\varphi-1)}{2(b_L-\rho_L)^2}) , &  \textrm{if } \frac{d_L}{b_L-\sigma_L} \geq \frac{k_L}{b_L -\rho_L} \\
\beta(\frac{k_Ld_L \varphi}{b_L-\rho_L}(\frac{b_L-\sigma_L-\varphi+1}{b_L-\sigma_L})+\frac{d_L^2\varphi (\varphi-1)}{2(b_L-\sigma_L)^2}), & o.w
\end{cases}
\end{equation}

If $\varphi \geq \min\left\{b_L-\rho_L, b_L-\sigma_L\right\}$, then $\Mc \leq K_L(\mu+1)$.

\subsubsection{Construction of BFR-MSR/MBR-LRC}

%We can change the second step of encoding process for BFR-LRC as follows to obtain BFR-MSR/MBR-LRC. 

%\textit{Construction II:} One can follow the same steps as in \textit{Construction I}  except that in step 2, instead of encoding with MDS code, BFR-MSR/MBR code is applied for each local group, i.e. $[\tilde{n},\tilde{k}=\frac{Nb_L}{bv},\tilde{d}]_q$ BFR-MSR/MBR.

%\begin{corollary}
%Construction II yields an optimal code, $\Cc^{loc}$, when $\tilde{n}\frac{b-\rho_L}{b} \geq \tilde{k}$, $\tilde{n}\frac{b-\sigma_L}{b} \geq \tilde{d}$, $b | Nb_L$ and $bv | Nb_L$.
%\end{corollary}

In order to construct codes for BFR-MSR/MBR-LRC, we utilize our earlier construction that is based on Duplicated Combination Block Designs.

\noindent \textbf{\textit{Construction VI:}} Consider a file $\fv$ of size $\Mc$.
\begin{itemize}
\item First, encode $\fv$ using $[N, K=\Mc, D=N-\Mc+1]$ Gabidulin code, $\Cc^{\rm{Gab}}$. The resulting codeword $\cv \in \Cc^{\rm{Gab}}$ is divided into $\frac{b}{b_L}$ disjoint local groups of size $\frac{Nb_L}{b}$.
\item Apply \textit{Construction III} to each local group. 
\end{itemize}    

\begin{remark}
\textit{Construction VI} requires $\frac{N}{b(b_L-\sigma_L)}$ to be integer since in each local group we utilize \textit{Construction III}, where a sub-code (regenerating code) is used with $\tilde{k}=\tilde{\Mc}$ and $\tilde{\Mc}$ is obtained by partitioning local file into $(b_L-\sigma_L)b_L$ parts. Furthermore, due to use of \textit{Construction III}, Construction VI results in codes with $\rho_L=0$. (This implies that code can tolerate any $\rho$ block erasures, but the content of the local group will be reduced even after a single block erasure in the local group.)
\end{remark}

\begin{corollary}
Construction VI yields an optimal code, with respect to resilience, when $b(b_L-\sigma_L) \mid N$ and $K_L \mid \Mc$.
\end{corollary}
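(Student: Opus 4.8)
The plan is to mirror the proofs of Corollaries~\ref{cor:cons1} and~\ref{cor:cons2}, exploiting the single structural fact that Construction~VI places, inside each of the $b/b_L$ local groups, a Construction~III code whose resilience parameter is $\rho_L=0$ (as noted in the remark following Construction~VI). First I would pin down the target resilience. Since $\rho_L=0$ and $K_L\mid\Mc$, write $\Mc=\alpha_1K_L$ with $\alpha_1$ a positive integer; then Theorem~\ref{thm:BFR-LRC} collapses to $\rho\le b-\ceilb{\Mc b_L/K_L}-0=b-\alpha_1 b_L$. (One checks that the URA bound $\rho\le b-H^{(\textrm{inv})}(\Mc)$ gives the same value, since $\rho_L=0$ forces $H^{(\textrm{inv})}(K_L)=b_L$.) It therefore suffices to exhibit a decoder recovering $\fv$ from \emph{any} pattern of $E=b-\alpha_1 b_L$ block erasures. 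Using $K_L=Nb_L/b$ and $b_L\mid b$ (the local groups partition the blocks), rewrite $E=b_L\,(b/b_L-\alpha_1)=b_L\,(N/K_L-\alpha_1)$; recall also that the hypothesis $b(b_L-\sigma_L)\mid N$ is exactly what makes each local application of Construction~III well defined (it forces the induced sub-code parameter $\tilde k=\tilde{\mathcal M}$ to be an integer in each local group).

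Next I would carry out the worst-case analysis over block-erasure patterns at the level of the underlying Gabidulin codeword $\cv\in\Cc^{\rm{Gab}}$. A local group that loses $j$ of its $b_L$ blocks still lets the local (Construction~III) decoder return the Gabidulin symbols reconstructible from the surviving $b_L-j$ blocks; hence the number of the group's $K_L$ Gabidulin symbols that are \emph{effectively erased} equals $\max\{0,\,K_L-r(b_L-j)\}$, where $r(\cdot)$ records how many message symbols a sub-collection of blocks of a Construction~III codeword recovers. Because the local code expends its internal redundancy $n_L-K_L$ before any message symbol becomes irrecoverable, this quantity is a convex, nondecreasing function of $j$ that vanishes at $j=0$ and equals $K_L$ at $j=b_L$. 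Convexity implies that, for a fixed total number $E$ of erased blocks, the total number of induced Gabidulin rank-erasures is maximized by concentrating the erasures into as few local groups as possible, i.e.\ into $N/K_L-\alpha_1$ fully erased groups. That worst case induces exactly $(N/K_L-\alpha_1)K_L=N-\alpha_1K_L=N-\Mc=D-1$ rank-erasures, which the $[N,\Mc,D=N-\Mc+1]$ Gabidulin code corrects; thus $\fv$ is recovered from every pattern of $E$ block erasures, so $\rho\ge E=b-\alpha_1 b_L$, matching the upper bound and proving resilience-optimality.

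I expect the main obstacle to be the convexity/worst-case step: one must justify that when a Construction~III local group loses $j<b_L$ blocks the count of irrecoverable Gabidulin symbols is precisely $\max\{0,K_L-r(b_L-j)\}$ for a concave, nondecreasing $r(\cdot)$ with $r(b_L)=K_L$ — i.e.\ that the local code's redundancy is "used up" gradually as blocks disappear, and that partial local decoding hands back a clean set of Gabidulin evaluation symbols (the standard alignment between the inner MDS/regenerating layer and the outer Gabidulin layer used in \cite{Rawat:Optimal14}, which applies here because each local group stores exactly the $K_L=Nb_L/b$ symbols forming one disjoint segment of $\cv$). Once that degradation profile is established, the remainder is the same counting as in the first bullet of the proof of Corollary~\ref{cor:cons2}, specialized to the divisible case $\gamma_1=\beta_1=0$, and the divisibility hypotheses $b(b_L-\sigma_L)\mid N$ and $K_L\mid\Mc$ are precisely what keep all the indices integral.
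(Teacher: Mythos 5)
Your proposal follows essentially the same route as the paper's own proof: using $\rho_L=0$ and $\Mc=\alpha_1 K_L$ you reduce the claim to tolerating $E=b-\alpha_1 b_L$ arbitrary block erasures, take the worst case to be $\frac{N}{K_L}-\alpha_1$ fully erased local groups, count $K_L(\frac{N}{K_L}-\alpha_1)=N-\alpha_1 K_L=D-1$ rank erasures, and invoke the minimum rank distance of the outer Gabidulin code. The only difference is that you try to justify the concentration-is-worst step through a convexity argument on the per-group rank-loss profile — a step the paper simply asserts — and you correctly flag that justification as the remaining work; this does not alter the approach or the conclusion.
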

\begin{proof}
We need to show that $E=b - \ceilb{\frac{M b_L}{K_L}}$ number of block erasures are tolerated since $\rho_L=0$. If $K_L \mid \Mc$, then $\Mc = \alpha_1 K_L$ and $E=b-\alpha_1 b_L$, which implies that the worst case erasure scenario is having $\frac{N}{K_L} - \alpha_1$ local groups with $b_L$ erasures. Then, total rank erasure is $K_L(\frac{N}{K_L}-\alpha_1)=N-K_L\alpha_1$, which can be corrected by Gabidulin code since $D-1=N-\Mc=N-\alpha_1 K_L$. 
\end{proof}

We will utilize \emph{Construction VI} to obtain optimal BFR-MSR-LRC. \emph{Construction VI} can be used to construct BFR-MBR-LRC as well, if BFR-MBR codes are used inside local groups.

\begin{corollary}
\textit{Construction VI} yields an optimal LRC with respect to file size bounds when $\frac{N}{b} \mid \Mc$ for BFR-MSR-LRC.
\label{cor:cons3}
\end{corollary}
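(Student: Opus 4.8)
The plan is to show that, when $\Mc b/N$ is an integer, Construction VI already stores a file of the maximal size permitted by the BFR-MSR-LRC file-size bound $\Mc\le\mu K_L+K_L\varphi/(b_L-\rho_L)$ derived above, i.e.\ that this bound is met with equality for the resilience $\rho$ of the constructed code. Set $m:=\Mc b/N$, a positive integer by hypothesis. Since the outer code is an $[N,K=\Mc,D=N-\Mc+1]$ Gabidulin code and each of the $b/b_L$ disjoint local groups is assigned $K_L=Nb_L/b=b_L\cdot(N/b)$ of its symbols, we have $\Mc=m\cdot(N/b)$ and $\Mc b_L/K_L=m$. The two steps are: (i) pin down $\rho$ exactly, and (ii) substitute it into the bound.

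For (i), I would run the same rank-erasure accounting as in the proof of the preceding (resilience-optimality) corollary. The local ingredient is the behaviour of one local group, which by Construction III is a BFR-MSR code with $\rho_L=0$ and some $\sigma_L<b_L-1$: if $e\le\sigma_L$ of its $b_L$ blocks are erased, the $b_L-e\ge b_L-\sigma_L$ survivors form a legitimate helper set, every erased node is regenerated, and no Gabidulin rank is lost; if $e>\sigma_L$, the survivors carry entropy $(b_L-e)K_L/b_L$ by the BFR-MSR property, so exactly $e\cdot(N/b)$ rank is erased. One then checks that concentrating erasures into as few local groups as possible is optimal for the adversary, so any pattern of $E$ block erasures destroys at most $E\cdot(N/b)$ Gabidulin rank, with equality attainable whenever $E>\sigma_L$. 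Taking $E=b-m$ the loss is $\le(b-m)(N/b)=N-\Mc=D-1$, which Gabidulin corrects; taking $E=b-m+1$ the attainable loss $(b-m+1)(N/b)>D-1$ exceeds the Gabidulin erasure radius. Hence $\rho=b-m$.

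For (ii), I substitute $b-\rho=m$ into $\Mc\le\mu K_L+H(\varphi)$ with $\mu=\floorb{m/b_L}$, $\varphi=m-\mu b_L$; using that an MSR local code with $\rho_L=0$ has $H(\varphi)=\varphi\cdot K_L/b_L=\varphi\cdot(N/b)$, the right-hand side collapses to $(\mu b_L+\varphi)(N/b)=m\cdot(N/b)=\Mc$, so the bound is tight and the code is file-size optimal. The step I expect to be the real obstacle is delimiting the validity of the identity $H(\varphi)=\varphi\cdot(N/b)$: when $\varphi=m\bmod b_L$ falls at or above $b_L-\sigma_L$, regeneration inside the last local group saturates its entropy at $K_L$ from fewer than $b_L$ blocks, the bound degenerates to $\Mc\le K_L(\mu+1)$, and equality with $m\cdot(N/b)$ fails; similarly, when $b-m\le\sigma_L$ the resilience computation in (i) must be revisited. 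One therefore has to argue that the stated hypothesis (together with $b(b_L-\sigma_L)\mid N$, needed for Construction VI to exist) places $\varphi$ in the linear regime — equivalently, that $H(b-\rho)=\Mc$ for the resilience produced in (i) — and to verify the ``$e>\sigma_L\Rightarrow$ loss $e\cdot(N/b)$'' local claim from the explicit DCBD placement of Construction III rather than merely from the abstract MSR property; the remainder is bookkeeping with $K_L=b_L\cdot(N/b)$.
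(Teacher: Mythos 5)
Your proposal is correct, and its second half is exactly the paper's proof: the paper also writes $\Mc=\frac{K_L}{b_L}(\alpha_1 b_L+\beta_1)$ (your $\mu$ and $\varphi$ are its $\alpha_1$ and $\beta_1$), takes $b-\rho=\ceilb{\Mc b_L/K_L}$, and verifies $\mu K_L+\frac{K_L\varphi}{b_L}=\Mc$, i.e.\ your step (ii) verbatim. Where you differ is step (i): the paper does not re-derive the resilience inside this proof but simply inherits $b-\rho=\ceilb{\Mc b_L/K_L}$ from the preceding resilience-optimality corollary (which is stated under the stronger hypothesis $K_L\mid\Mc$ and only treats whole-group erasures), whereas you redo the rank-erasure accounting under the weaker hypothesis $\frac{N}{b}\mid\Mc$, including partially erased groups. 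That extra work is legitimate and in fact closes a small hand-wave in the paper; note, though, that you do not need the adversarial converse pattern at all — achievability ($\rho\geq b-m$) together with the resilience/file-size bound itself forces $\rho=b-m$, which sidesteps your worry about the case $b-m\leq\sigma_L$. Your remaining caveat about the saturated regime is also genuine: when $\varphi=m\bmod b_L$ reaches $b_L-\sigma_L$ the bound degenerates to $(\mu+1)K_L$ and equality fails, and the paper avoids this only by silently restricting $0\leq\beta_1\leq b_L-\sigma_L$ in its decomposition (really one needs $\beta_1<b_L-\sigma_L$, or a separate argument at the boundary); so your delimitation of the linear regime is the honest version of what the paper's proof implicitly assumes. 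In short: same core argument, with your version making explicit the resilience derivation and the parameter restriction that the paper leaves tacit.
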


\begin{proof}
If $\frac{N}{b} \mid \Mc$, then $\frac{K_L}{b_L-\rho_L} \mid \Mc$ (since $K_L = \frac{N b_L}{b}$) for $\rho_L=0$, which is the case. Therefore, we can write $\Mc = \frac{K_L}{b_L}(\alpha_1b_L + \beta_1)$ for $0 \leq \alpha_1 \leq \frac{b}{b_L}$ and $0 \leq \beta_1 \leq b_L-\sigma_L$. %\htext{??Why this is the case. What happens if $\beta_1 > b_L-\sigma_L$??}
\begin{itemize}
\item If $\beta_1=0$, then $b-\rho=\ceilb{\frac{\Mc b_L}{K_L}}=\alpha_1 b_L$. Therefore $\mu=\floorb{\frac{b-\rho}{b_L}}=\alpha_1$ and $\varphi = b-\rho-\mu b_L=0$, which implies that $\mu K_L + \frac{K_L \varphi}{b_L} = \alpha_1 K_L = \Mc$.
\item If $\beta_1 \not = 0$, then $b-\rho=\ceilb{\frac{\Mc b_L}{K_L}}=\alpha_1 b_L + \beta_1$. Therefore $\mu=\floorb{\frac{b-\rho}{b_L}}=\alpha_1$ and $\varphi = b-\rho-\mu b_L = \beta_1$, which means $\mu K_L + \frac{K_L \varphi}{b_L} = \alpha_1 K_L + \frac{K_L \beta_1}{b_L}= \Mc$. 
\end{itemize}
\end{proof}

%%%%%%%%%%%%%%%%%%%%%%%%%%%%%%%%%%%%%%%%%%%%%%%%%%%%%%%%%%%%%%%%%%%%%%%%%%%%%%
%%%%%%%%%%%%%%%%%%%%%%%%%%%%%%%%%%%%%%%%%%%%%%%%%%%%%%%%%%%%%%%%%%%%%%%%%%%%%%

\section{Discussion}
\label{sec:Disc}

\subsection{Repair Delay}

In this section, we analyze the repair delay by considering the available bandwidth between the blocks. (In DSS architectures racks are connected through switches, referred to as top-of-rack (TOR) switches, and the model here corresponds to the communication delay between these switches.) Since in schemes that use BFR, a failed node requires $d_r\beta_{\textrm{BFR}}$ amount of data from each of $b-\sigma$ blocks, the repair delay (normalized with file size $\Mc$) can be calculated as follows.
\begin{equation}
RT_{\textrm{BFR}}=\max_{i \in \Bc_h} \frac{d_r\beta_{\textrm{BFR}}}{\Mc BW_i}, \quad \Bc_h = \left\{1,\dots,b-\sigma \right\},
\end{equation}
where $BW_i$ is the bandwidth for block $i$ and $\Bc_h$ is the set of blocks that help in the repairs. We assume the repairs through blocks are performed in parallel. Throughout this section, we assume that all bandwidths are identical ($BW_i=BW, \forall i$), hence the repair time of a failed node is given by $\frac{d_r\beta_{\textrm{BFR}}}{\Mc BW}$. 

In an identical setting, we can also analyze the repair delay of regenerating codes. Note that regenerating codes do not require any symmetric distribution of helper nodes among blocks. Hence repair delay of regenerating codes is,   
%\begin{equation}
%RT_{\textrm{RC}} = \max_{\substack{i \in \Bc \\ \sum_{i} d_i=d}} \frac{d_i\beta_{\textrm{RC}}}{\Mc BW}, \quad \Bc = \left\{1, \dots, b\right\},           
%\end{equation}
\begin{equation}
RT_{\textrm{RC}}(s) = \max_{i} \frac{d_i\beta_{\textrm{RC}}}{\Mc BW}, \quad s \in \Sc = \left\{ \left\{ d_1, \dots, d_{|\Bc|} \right\} \textrm{ s.t. } \sum_{i} d_i=d, \Bc = \left\{1,\dots,b \right\} \right\}
\end{equation}
%where $d_i$ is the number of helpers nodes in block $i$ such that the sum of helper nodes is equal to $d$. In other words, repair delay will be affected by the block with the most helper node.  \htext{??Isn't the solution $d_i=c$ then? (for $d \geq c$) Note that corresponds to the worst case scenario. We can also compare the average? ??} \textcolor{blue}{Yes that is the worst case but helpers and in the comparison I look at all possible cases of helper distribution so it's the average.} We may encounter some $d$ values, which are not attainable in the BFR model since for BFR codes we require $d \leq n-c$ because of the assumption that a failed node does not connect any nodes in the same block. Furthermore, to compare the codes in a fair manner, we assume that regenerating codes connect any $d$ from $b-\sigma$ blocks. Henceforth, in our comparisons, we instead calculate repair delay for regenerating codes as 
where a system $s$ refers to a selection of $d_i$ (the number of helper nodes in block $i$) such that the sum of helper nodes is equal to $d$. In other words, repair delay will be affected by the block with the most helper node. Furthermore, the average repair delay can be calculated as
\begin{equation}
\textrm{E}[RT_{\textrm{RC}}(s)] = \frac{1}{|\Sc|} \sum_{s \in \Sc} RT_{\textrm{RC}.}(s)
\label{eq:avg_rep_time}
\end{equation} 
%\htext{??Isn't the solution $d_i=c$ then? (for $d \geq c$) Note that corresponds to the worst case scenario. We can also compare the average? ??} \textcolor{blue}{Yes that is the worst case but helpers and in the comparison I look at all possible cases of helper distribution so it's the average.} 

We may encounter some $d$ values, which are not attainable in the BFR model since for BFR codes we require $d \leq n-c$ because of the assumption that a failed node does not connect to any nodes in the same block. Furthermore, to compare the codes in a fair manner, we assume that regenerating codes connect to any $d$ nodes from $b-\sigma$ blocks. Henceforth, in our comparisons, we instead calculate repair delay for regenerating codes as 
%\begin{equation}
%RT_{\textrm{RC}} = \max_{\substack{i \in \Bc \\ \sum_{i}d_i=d}} \frac{d_i\beta_{\textrm{RC}}}{\Mc BW}, \quad \Bc_h = \left\{1, \dots, b-\sigma \right\},
%\end{equation}
\begin{equation}
RT_{\textrm{RC}}(s) = \max_{i} \frac{d_i\beta_{\textrm{RC}}}{\Mc BW}, \quad s \in \Sc = \left\{ \left\{ d_1, \dots, d_{|\Bc_h|} \right\} \textrm{ s.t. } \sum_{i} d_i=d, \Bc_h = \left\{1,\dots,b-\sigma \right\} \right\}
\end{equation}
%\htext{??again $d_i=c$ for $d \geq c$.??}
where the helper nodes are chosen from $b-\sigma$ blocks but the number of helpers in each block is not necessarily the same, only requirement is to have the sum of the helpers equal to $d$. Average repair time can be calculated similar to \eqref{eq:avg_rep_time} where the difference being the system $s$ satisfies the condition that helpers are chosen from $b-\sigma$ blocks.

One can allow symmetric distribution of regenerating codes among $d-\sigma$ blocks as well, similar to BFR codes. We'll denote these codes by MSR-SYM or MBR-SYM in the following. 

We examine the case where $b=7$ and $n=21$, which means each block contains $c=3$ nodes. We also set $\sigma=3$ so that $b-\sigma=4$. Note that since we are comparing relative values, the values we assign to BW and $\Mc$ does not change the result as long as they are same across all comparisons. In other words, one can think our results as normalized repair delays. At first, we find out all possible $\rho, d_r, k_c, d$ and $k$ values accordingly from Section~\ref{sec:FileSize}. After identifying all possible parameter sets, we calculate repair times for both BFR-MSR and BFR-MBR codes. For regenerating codes, we choose all possible parameter sets as long as $d \leq n-\sigma c$ since we impose regenerating codes to perform repairs from $b-\sigma$ blocks. For each parameter set, we calculate repair times of all possible helper node allocations and then calculate the mean of those. The mean is reported for each data point in Fig.~\ref{fig:repair-delay} for one set of parameters for both MSR and MBR codes. Finally, we allow symmetric distribution of helper nodes among $b-\sigma$ blocks, and report MSR-SYM and MBR-SYM. 

\begin{figure}
\centering
\begin{subfigure}{.5\textwidth}
  \centering
  \includegraphics[width=0.8\linewidth]{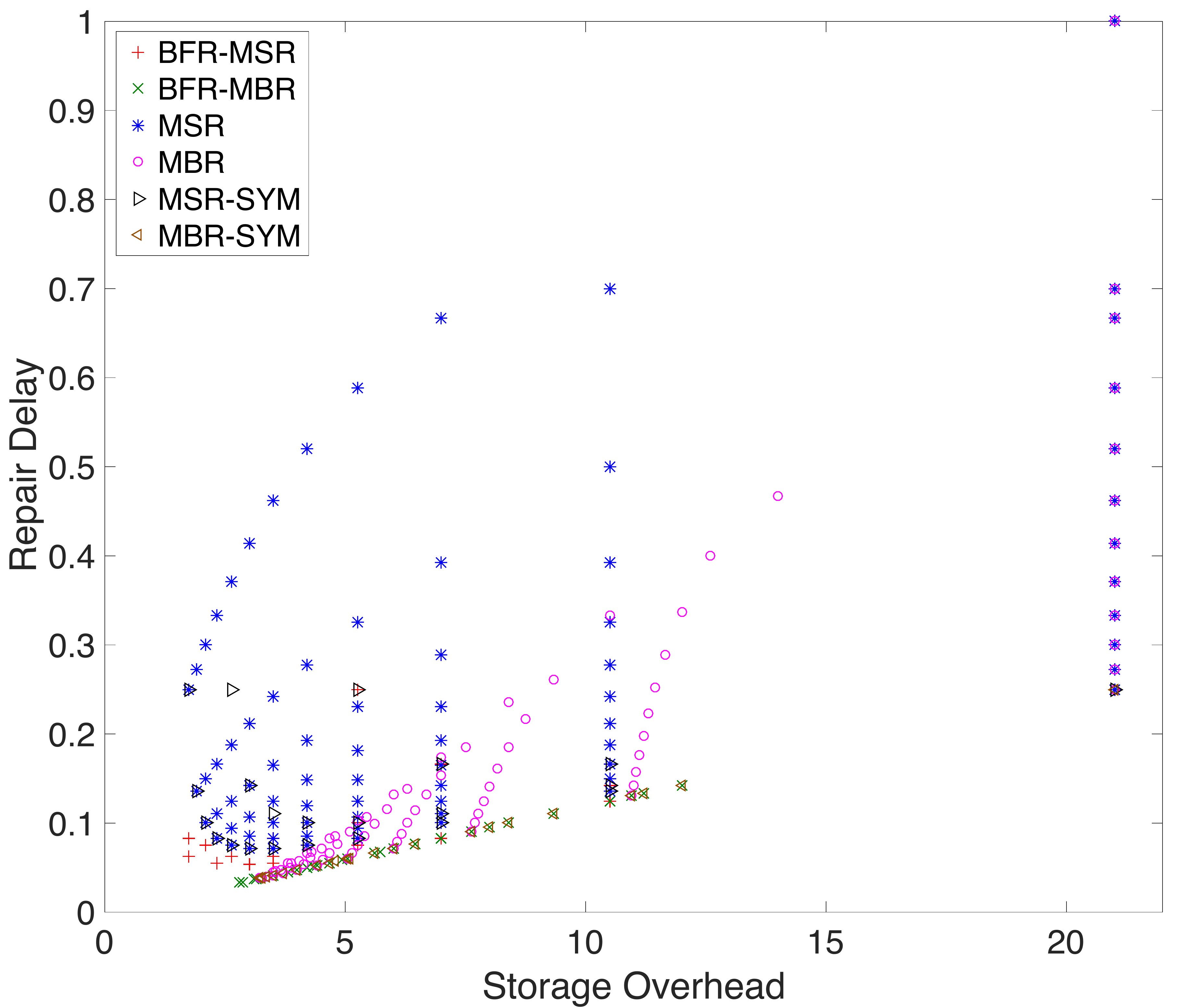}
  \caption{}
  \label{fig:repair-SO}
\end{subfigure}%
\begin{subfigure}{.5\textwidth}
  \centering
  \includegraphics[width=0.8\linewidth]{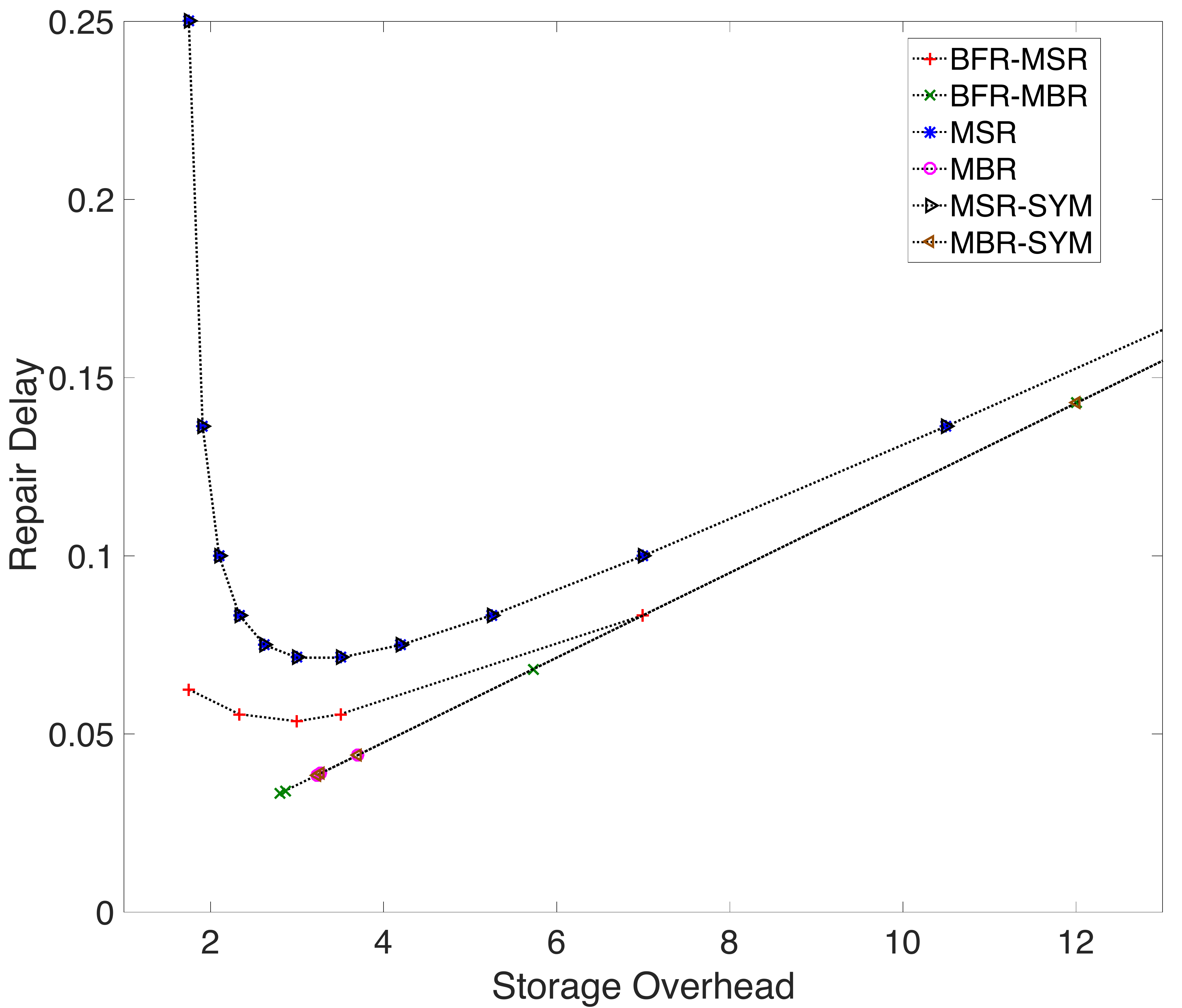}
  \caption{}
    \label{fig:envelope_repair-SO}
\end{subfigure}
\caption{Repair delay vs. storage overhead comparisons for $b=7$, $n=21$ and $\sigma=3$. (a) Data points for possible cases of each node. (b) Lower envelope of Fig.~\ref{fig:repair-SO} when zoomed in.}
\label{fig:repair-delay}
\vspace{-0.3in}
\end{figure}

Repair delay vs. storage overhead results are depicted on Fig.~\ref{fig:repair-delay}. In Fig.~\ref{fig:repair-SO} we indicate all data points, whereas in Fig.~\ref{fig:envelope_repair-SO}, lower envelope of Fig.~\ref{fig:repair-SO} for storage overheads less than 13. As expected, one can observe that there are more data points for regenerating codes than BFR, since BFR requires $b-\sigma \mid d$, which limits the number of possible sets of parameters for BFR. First, lower envelopes of MSR and MSR-SYM are the best repair times for MSR which are achieved when $d$ gets its highest possible value, $(b-\sigma)c$. In other words, all nodes are utilized hence there is no different possible connection schemes for MSR. Interestingly, given storage overhead, we observe that in some cases MSR codes perform better than MBR codes eventhough MBR codes minimizes repair bandwidth. On the other hand, when distributed symmetrically across blocks, MBR-SYM outperforms both MBR and MSR-SYM in all cases. When we compare BFR-MSR and BFR-MBR, we can observe that BFR-MBR has lower repair delay for all cases but still they perform the same when storage overhead gets larger. Furthermore, we observe that unlike MBR-MSR comparison, BFR performs more regularly, meaning BFR-MBR is better than BFR-MSR always. Next, if we compare all schemes, we observe that convex hulls of BFR-MBR, MBR and MBR-SYM follows the same line. Note that repair delay of BFR-MSR is below MSR and it performs the same as storage overhead increases. Also, BFR-MBR outperforms MSR-SYM and performs identical to MBR-SYM. For lower storage overheads, we observe that BFR codes operate well (both BFR-MSR and BFR-MBR) whereas existing RC or RC-SYM codes (MBR and MBR-SYM codes do not even exist for overhead below 3.23) performs worse than BFR. %Also note that, similar comparisons can be carried out for MDS codes but since they require $k\alpha$ amount of data, they result in higher repair delays. Hence, they are not included in the figures. it can be observed that lower envelope of BFR-MBR is always below that of RC, as well as, both MSR and MBR, which means one can reduce repair delay by using BFR-MBR codes instead of RC.

Finally, note that within BFR schemes, we may encounter different $\alpha$ and $\beta$ values depending on the parameters. Differentiating between cases, let BFR1 denote the schemes with $d_r \geq k_c$ and $\rho \geq \sigma$, BFR2 denote $d_r \geq k_c$ and $\rho < \sigma$, and BFR3 denote $d_r < k_c$ and $\rho \geq \sigma$. The resulting $\beta$ values for these schemes may not be the same for same $k$ and $d$. In Fig.~\ref{fig:BFR-cases}, we examine these BFR schemes (for storage overhead less than $10$). We observe that the convex hull for BFR-MBR is a line and all BFR-MBR schemes operate on that line. Furthermore,different MSR schemes can perform better depending on the storage overhead. % and it can be observed that envelope of BFR-MBR always includes all data points across such BFR schemes. However, envelope of BFR-MSR does not include all data points. Furthermore, it can be observed that, for a given storage overhead, there may be different BFR-MSR schemes that result in minimum repair time. %\htext{??not clear, please re-write this paragraph more clearly.??}

%Finally, it can be observed that, depending on parameters, lower envelope values may be from different BFR schemes. Differentiating between cases, let BFR1 denote the schemes with $d_r \geq k_c$ and $\rho \geq \sigma$, BFR2 denote $d_r \geq k_c$ and $\rho < \sigma$, and BFR3 denote $d_r < k_c$ and $\rho \geq \sigma$. It can be observed in Fig.~\ref{fig:BFR-cases} that while envelope of BFR-MBR always cross all data points across all such BFR schemes, it's not the different BFR-MSR schemes may be optimal for a given storage overhead. As a result of that, lower envelope for BFR-MSR follows data points from different BFR-MSR schemes, i.e., in Fig~\ref{fig:BFR-cases}, lower envelope for BFR-MSR has data points from first BFR2 then BFR1 and so on.  some of the BFR-MSR parameters results in non-optimal repair times for  the same storage overhead and envelope does not cross them. \htext{??not clear, please re-write this paragraph more clearly.??}

\begin{figure}
\centering
\begin{subfigure}{.5\textwidth}
  \centering
  \includegraphics[width=0.8\linewidth]{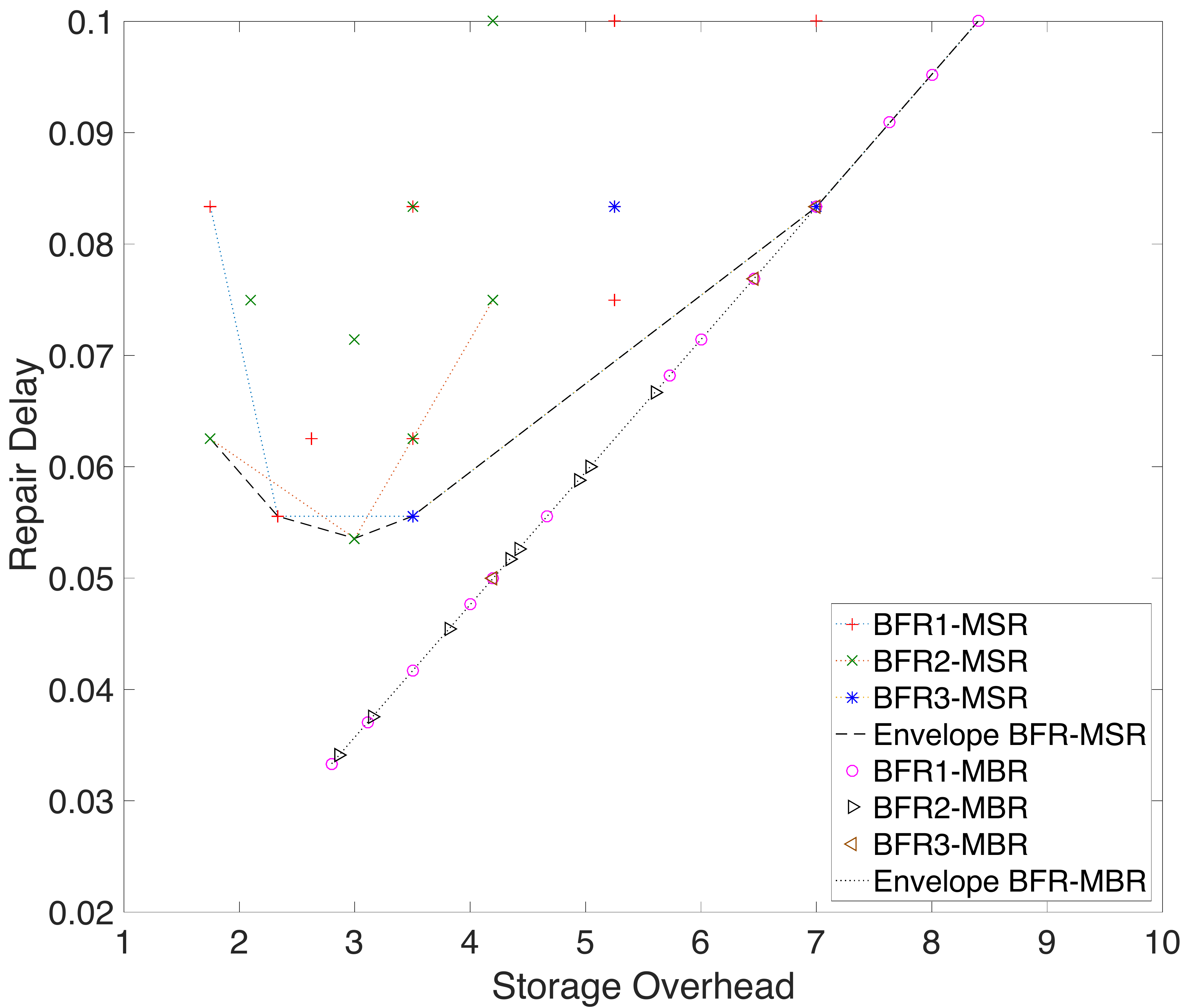}
  \caption{}
\end{subfigure}%
\begin{subfigure}{.5\textwidth}
  \centering
  \includegraphics[width=0.8\linewidth]{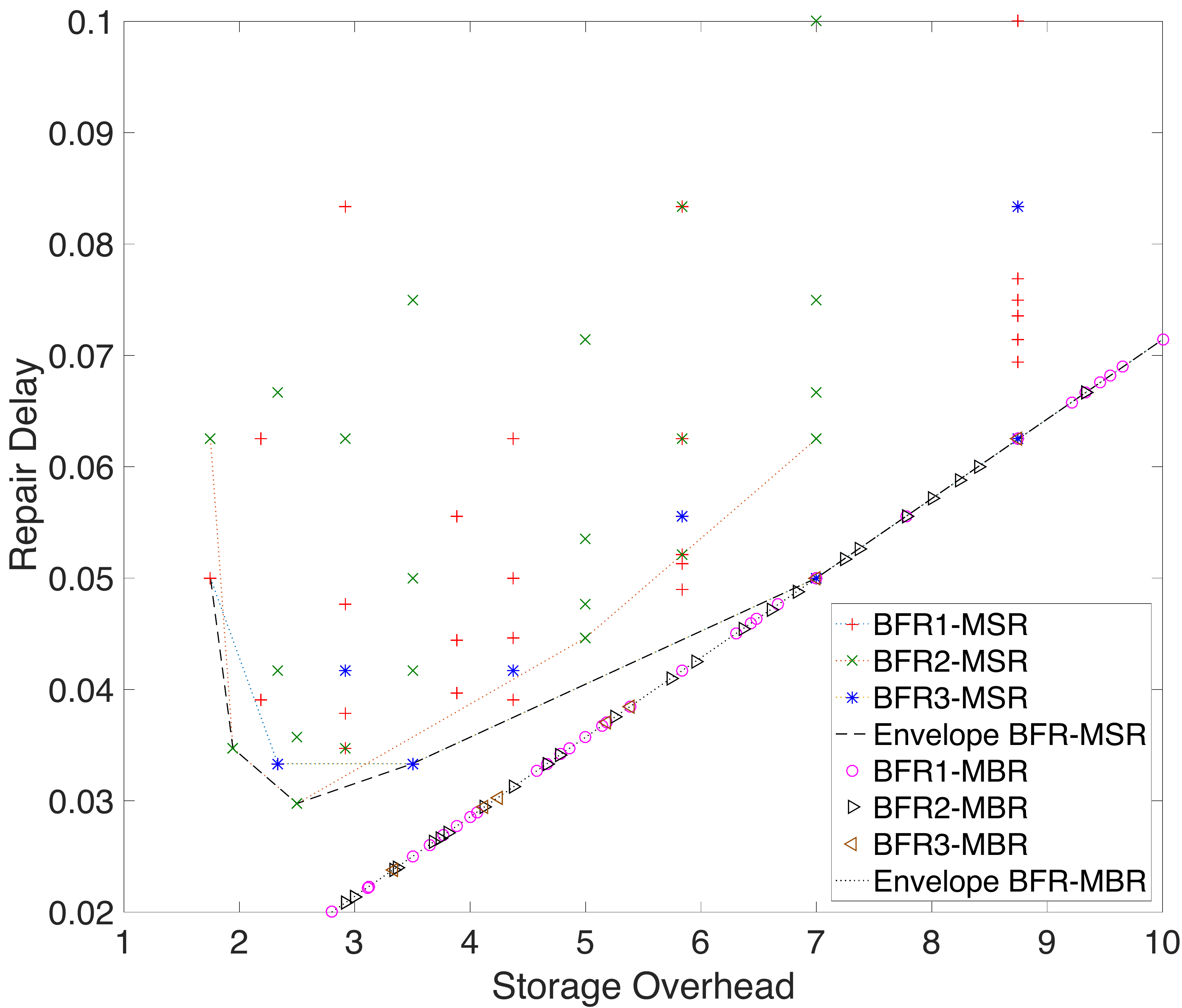}
  \caption{}
\end{subfigure}
\caption{Repair delay vs. storage overhead comparisons for $b=7$ and $\sigma=3$, (a) when $n=21$, (b) when $n=35$.}
\vspace{-0.3in}
\label{fig:BFR-cases}
\end{figure}

\subsection{Relaxed BFR}

It's challenging to find general constructions that allow for repair and data collection schemes to operate with \textit{any} $\rho$ and $\sigma$ blocks with \textit{any} subset of $d_r$ or $k_c$. Hence, in this section, we relax ``\textit{any}" requirement of BFR to table-based recovery schemes that guarantee the existence of helper/data recovery nodes for every possible repair/data collection scenarios. By altering this, we are able to use another combinatorial design and obtain relaxed BFR codes (R-BFR) for wider set of parameters. \footnote{We note that under this setting the fundamental limits will be different than BFR codes. The focus here is more on operating at BFR performance under a relaxation of ``any" requirement.}

\subsubsection{Resolvable Balanced Incomplete Block Design (RBIBD)}

\begin{definition}
A parallel class is the set of blocks that partition the point set. A resolvable balanced incomplete block design is a $(v,\kappa,\lambda)-BIBD$ whose blocks can be partitioned into parallel classes. 
\end{definition}

An example of $(9,3,1)-RBIBD$ is given below, where each column (consisting of 9 points in 3 blocks) forms a parallel class.
\begin{equation}
\begin{split}
\left\{1,2,3\right\} \hspace{0.5 cm} \left\{1,4,7\right\} \hspace{0.5 cm} \left\{1,5,9\right\} \hspace{0.5 cm} \left\{1,6,8\right\} \\
\left\{4,5,6\right\} \hspace{0.5 cm} \left\{2,5,8\right\} \hspace{0.5 cm} \left\{2,6,7\right\} \hspace{0.5 cm} \left\{2,4,9\right\} \\
\left\{7,8,9\right\} \hspace{0.5 cm} \left\{3,6,9\right\} \hspace{0.5 cm} \left\{3,4,8\right\} \hspace{0.5 cm} \left\{3,5,7\right\} \\
\end{split}
\label{eq:assignment-RBIBD}
\end{equation}

%Each block contains same of number nodes that store different partitions, i.e. a block may contains 6 nodes, where 2 nodes store symbols from $\left\{1,2,3\right\}$, 2 nodes store symbols from $\left\{4,5,6\right\}$ and the other 2 nodes store symbols from $\left\{7,8,9\right\}$.

\subsubsection{R-BFR-RC with RBIBD}

In this section, we show that RBIBDs can be used to construct R-BFR codes for any $\rho \geq 0$ and $\sigma \geq 1$. Considering RBIBDs (with $\lambda=1$) as defined above, we construct blocks each containing the same number of nodes that store symbols belonging to different partitions. For instance, utilizing \eqref{eq:assignment-RBIBD}, a block can be formed to contain 12 nodes, where 4 nodes store symbols in the form of $\left\{1,2,3\right\}$, (referred to as ``type" below), 4 nodes store symbols of the type $\left\{4,5,6\right\}$ and the other 4 nodes store symbols of the type $\left\{7,8,9\right\}$. We refer to blocks of the same type as sub-block. Assume that one of the nodes that store the symbols of the type $\left\{1,2,3\right\}$ is failed. A newcomer may download symbols from any other subset of blocks but instead of connecting to any $d_r$ nodes from each block, we consider connecting to any $\frac{d_r}{3}$ nodes of type $\left\{1,4,7\right\}$, any $\frac{d_r}{3}$ nodes of type $\left\{2,5,8\right\}$ and any $\frac{d_r}{3}$ nodes of type $\left\{3,6,9\right\}$ in the second block and so on. \footnote{This construction can be viewed as a generalization of table-based repair for regenerating codes, see e.g., fractional regenerating codes proposed in \cite{Rouayheb:Fractional10}.} Similarly, DC can connect to any $\frac{k_c}{3}$ from each sub-blocks. Therefore, the requirement of \textit{any} set of nodes from a block is changed to \textit{any} subset of nodes from a \emph{sub-block}. Note that, RBIBD still preserves any $\rho$ and $\sigma$ properties.

In the general case of any $\rho$ and $\sigma$, we still have the same relationship as before, c.f., \eqref{eq:assignment-RBIBD_2}, since repair property with any $b-\sigma$ blocks or DC property with any $b-\rho$ blocks does not change this relationship but instead it only alters which $d_r$ and $k_c$ nodes that are connected in a block, where  $d_r=\frac{d}{b-\sigma}$, $k_c=\frac{k}{b-\rho}$. Also, to ensure the repair and DC properties, $c =\frac{n}{b}\geq \max\{k_c, d_r\}$ must be satisfied.
\begin{equation}
M=v\tilde{M}, k=\frac{v}{\kappa}\tilde{k}, d=\kappa\tilde{d}, \alpha=\kappa\tilde{\alpha}, \beta=\tilde{\beta}. 
\label{eq:assignment-RBIBD_2}
\end{equation}  

With this construction, the same steps provided for BFR-MSR and BFR-MBR cases in the previous section can be followed. In the general case, we have three cases depending on the values of $d_r$, $k_c$, $\sigma$ and $\rho$ and the corresponding cut values can be found using Theorems~\ref{thm:case1}, \ref{thm:case3}, and Lemma \ref{thm:case2}. Solving for these minimum storage and bandwidth points, optimal $\tilde{k}$ values can be found. We provide these result in the following subsections.

\subsubsection{R-BFR-MSR}

To construct a R-BFR-MSR code, we set each sub-code $\tilde{\Cc}$ as an MSR code which has 
$\tilde{\alpha}=\frac{\tilde{\Mc}}{\tilde{k}}$ and  $\tilde{d}\tilde{\beta}=\frac{\tilde{\Mc}\tilde{d}}{\tilde{k}(\tilde{d}-\tilde{k}+1)}$.
This together with \eqref{eq:assignment-RBIBD_2} results in the following parameters for R-BFR-MSR construction
\begin{equation}
\alpha=\tilde{\alpha}\kappa=\frac{\Mc}{k} \quad d\beta=\tilde{d}\kappa\tilde{\beta}=\frac{\Mc d}{k(d-\frac{k\kappa^2}{v}+\kappa)}.
\label{eq:rbibd-bfr-sr}
\end{equation}

Using the relationships above together with \eqref{eq:case1_MSR}, \eqref{eq:case2_MSR} and \eqref{eq:case3_MSR} we obtain the optimal value as follows: (note that $b=\frac{v-1}{\kappa-1}$)
\begin{equation}
\tilde{k} = 
\begin{cases} 
\frac{\kappa^2(b-\rho)}{(\kappa^2-v)(b-\rho) + v}, & d_r \geq k_c \textrm{ and } \sigma \leq \rho \\
\frac{\kappa^2(b-\rho)}{\kappa^2(b-\rho) - v(b-\sigma)}, & d_r \geq k_c \textrm{ and } \sigma < \rho \\
\frac{d(b-\rho-1) + \kappa(b-\sigma)}{\kappa(b-\sigma)}, & d_r < k_c \textrm{ and } \sigma \leq \rho.
\end{cases}
\end{equation}
%where $b=\frac{v-1}{\kappa-1}$.
%\htext{??Double check $\tilde{k}$??}

\subsubsection{R-BFR-MBR}
To construct R-BFR-MBR code, we set each sub-code $\tilde{\Cc}$ as an MBR code which has 
$\tilde{\alpha}=\tilde{d}\tilde{\beta}=\frac{2\tilde{\Mc}\tilde{d}}{\tilde{k}(2\tilde{d}-\tilde{k}+1)}$. This together with \eqref{eq:assignment-RBIBD_2} results in the following parameters for our R-BFR-MBR construction
\begin{equation}
\alpha=d\beta=\frac{2\Mc d}{k(2d-\frac{k\kappa^2}{v}+\kappa)}.
\label{eq:rbibd-bfr-br}
\end{equation}

We can solve for optimal value in MBR case using the equations above together with \eqref{eq:case1_MBR}, \eqref{eq:case2_MBR} and \eqref{eq:case3_MBR}. Resulting optimal values are as follows (note that $b=\frac{v-1}{\kappa-1}$)
\begin{equation}
\tilde{k} = 
\begin{cases} 
\frac{\kappa^2(b-\rho)}{(\kappa^2-v)(b-\rho) + v}, & d_r \geq k_c \textrm{ and } \sigma \leq \rho \\
\frac{\kappa^2(b-\rho)^2}{\kappa^2(b-\rho)^2 - v(b-\sigma)(b+\sigma-2\rho-1)}, & d_r \geq k_c \textrm{ and } \sigma < \rho \\
\frac{\frac{2d(b-\rho-1)}{\kappa}-b+\sigma\pm\sqrt{(\frac{2d(b-\rho+1)}{\kappa}-b+\sigma)^2-\frac{4d^2(b-\rho)(b-\rho-1)}{v}}}{2(b-\sigma)}, & d_r < k_c \textrm{ and } \sigma \leq \rho.
\end{cases}
\end{equation}
%where $b=\frac{v-1}{\kappa-1}$.
%\htext{??Double check $\tilde{k}$??}

%\begin{remark}
%??????????
%\end{remark}

%%%%%%%%%%%%%%%%%%%%%%%%%%%%%%%%%%%%%%%%%%%%%%%%%%%%%%%%%%%%%%%%%%%%%%%%%%%%%%
%%%%%%%%%%%%%%%%%%%%%%%%%%%%%%%%%%%%%%%%%%%%%%%%%%%%%%%%%%%%%%%%%%%%%%%%%%%%%%

\section{Conclusion}
\label{sec:Conclusion}
We introduced the framework of block failure resilient (BFR) codes that can recover data stored in the system from a subset of available blocks with a load balancing property. Repairability is then studied, file size bounds are derived, BFR-MSR and BFR-MBR points are characterized, explicit code constructions for a wide parameter settings are provided for limited range of $\sigma$ and $\rho$. We then analyzed BFR for a broader range of parameters and characterized the file size bounds in these settings and also proposed code constructions achieving the critical points on the trade-off curve. Locally repairable BFR codes are studied where the upper bound on the resilience of DSS is characterized and two-step encoding process is proposed to achieve optimality. We finally analyzed the repair delay of BFR codes and compare those with regenerating codes, and provide constructions with table-based repair and data recovery properties.

Constructions reported here are based on combinatorial designs, which necessitate certain parameter sets. As a future work, optimal code constructions for the BFR model can be studied further (especially for $\rho>0$ case). Also, the file size bound for the case of having $d_r \geq k_c$ and $\sigma > \rho$ is established here, and we conjecture that this expression corresponds to the min-cut. The proof for this conjecture resisted our efforts thus far. Furthermore, repair delay with a uniform BW assumption is studied here. Different bandwidth and more realistic communication schemes (e.g., queuing models) can be be studied. Further, system implementations can be performed and more realistic analysis can be made for disk storage and distributed  (cloud/P2P) storage scenarios.

%%%%%%%%%%%%%%%%%%%%%%%%%%%%%%%%%%%%%%%%%%%%%%%%%%%%%%%%%%%%%%%%%%%%%%%%%%%%%%
%%%%%%%%%%%%%%%%%%%%%%%%%%%%%%%%%%%%%%%%%%%%%%%%%%%%%%%%%%%%%%%%%%%%%%%%%%%%%%%

%% REFERENCES

\bibliographystyle{IEEEtran} % use IEEEtran.bst style
\bibliography{IEEEabrv,./myref}
%\listofchanges
%%%%%%%%%%%%%%%%%%%%%%%%%%%%%%%%%%%%%%%%%%%%%%%%%%%%%%%%%%%%%%%%%%%%%%%%%%%%%%
%%%%%%%%%%%%%%%%%%%%%%%%%%%%%%%%%%%%%%%%%%%%%%%%%%%%%%%%%%%%%%%%%%%%%%%%%%%%%%%

\begin{appendices}
%\begin{appendix}

\begin{comment}
\section{Proof of Lemma~\ref{lma:secrecy-lem}}
\label{App:secrecy-lem}

\begin{proof}
Following the steps as in \cite{Shah:Information11}, we have 

\begin{equation}
\begin{split}
I(\fv^s;\ev) & = H(\ev) - H(\ev|\fv^s)\\
& \leq H(\ev) - H(\ev|\fv^s) + H(\ev|\fv^s,\rv) \\
& \leq H(\rv) - I(\ev;r|\fv^s) \\
& = H(\rv|\fv^s,\ev) \\
& = 0
\end{split}
\end{equation}

\end{proof}
\end{comment}

\section{Concatenated Gabidulin and MDS coding}
\label{app:gab_mds}

Set $K=(b-\rho)k_c$, and consider data symbols $\{u_0,\dots,u_{K-1}\}$.
\begin{itemize}
	\item Use $[N=K+\rho k_c,K,D]$ Gabidulin code to encode $\{u_0,\dots,u_{K-1}\}$ to length-$N$ codeword $(x_1,\dots,x_N)$. That is $(x_1,\dots,x_N)=(f(g_1),\dots,f(g_N))$, where the linearized polynomial $f(g)=u_0 g^{[0]}+\dots+u_{K-1}g^{[K-1]}$
	is constructed with $N$ linearly independent, over
	$\FF_q$, generator elements $\{g_1,\dots,g_N\}$ each in $\FF_{q^M}$; 
	and its coefficients are selected by the length-$K$ input vector. We represent this operation by writing 
	$\xv=\uv\Gm_{\textrm{MRD}}$.
	\item Split resulting $N$ symbols $\{x_1,\dots,x_N\}$ into $b$ blocks each with $k_c$ symbols. We represent this operation by double indexing the codeword symbols, i.e., $x_{i,j}$ is the symbol at block $i$ and $j$ for $i=1,\dots,b$, $j=1,\dots,k_c$. We also denote the resulting sets with the vector notation, $\xv_{i,1:k_c}=(x_{i,1},x_{i,2},\dots,x_{i,k_c})$ for block $i$.
	\item Use an $[n=c,k=k_c,d]$ MDS array code for each block to construct additional parities. Representing the output symbols as $\yv_{i,1:c}$ we have $\yv_{i,1:c}=\xv_{i,1:k_c}G_{\textrm{MDS}}$ for each block $i$, where $G_{\textrm{MDS}}$ is the encoding matrix of the MDS code over $\FF_q$.
	For instance, if a systematic code is used, $\xv_{i,1:k_c}$ is encoded into the vector $\yv_{i,1:n}=(x_{i,1},\dots,x_{i,k_C},$ $p_{i,1},\dots,p_{i,c-k_c})$ for each block $i=1,\dots,b$.
\end{itemize}

In the resulting code above if one erases $\rho$ blocks and any $c-k_c$ symbols from the remaining blocks, the remaining $(b-\rho)k_c$ symbols form linearly independent evaluations of the underlying linearized polynomial which can be decoded due to to the Gabidulin code from which the data symbols can be recovered and hence, by re-encoding, the pre-erasure of the version of the system can be recovered.

\section{Proof of Theorem \ref{thm:BFR-LRC}}
\label{app:res_theo}
\begin{proof}

In order to get an upper bound on the resilience of an BFR-LRC, the definition of resilience given in \eqref{def:resilience} is utilized similar to proof in \cite{Gopalan:Locality12,Papailiopoulos:Locally12}.  We iteratively construct a set $\Bc^* \subset \Bc$ so that $H(\cv(\Bc^*))< \Mc$. The algorithm is presented in Fig.~\ref{alg:resilience}. Let $b_i$ and $h_i$ represent the number of blocks and entropy included at the end of the $i$-th iteration. We define the following:
\begin{figure}
  \begin{algorithmic}[1]
    %\Statex \Comment { \%comment: servers[] contains the index of servers whose         data rate are sorted in descending order\%}
    \algrule
    \State Set $\Bc^*_0=\emptyset$ and $i=1$
    \While  {$H(\cv(\Bc^*_{i-1}))<\Mc$} 
   		\State Pick a coded local group $\cv_{i} \notin \Bc^*_{i-1}$ s.t. $|\Bc_i \setminus \Bc^*_{i-1}| \geq \rho_L$
   		\If {$H(\cv(\Bc^*_{i-1}),\cv(\Bc_i))<\Mc$}
   			\State set $\Bc^*_i=\Bc^*_{i-1} \cup \Bc_i$
   		\ElsIf{$H(\cv(\Bc^*_{i-1}),\cv(\Bc_i)) \geq \Mc$ and $\exists \Bc'_i$ s.t. $\Bc'_i= \argmax\limits_{\Bc'_i \subset \Bc_i}H(\cv(\Bc^*_{i-1}),\cv(\Bc'_i))<\Mc$}
   			\State set $\Bc^*_i=\Bc^*_{i-1} \cup \Bc'_i$
		\EndIf
		\State $i=i+1$
   	\EndWhile
   	\State Output: $\Bc^*=\Bc^*_{i-1}$
   	\algrule
  \end{algorithmic}
  \caption{Construction of a set $\Bc^*$ with $H(\cv(\Bc^*))<\Mc$ for BFR-LRC}
  \label{alg:resilience}
\end{figure}
\begin{eqnarray}
\label{eq:block_diff}
b_i & = &|\Bc^*_{i}|-|\Bc^*_{i-1}| \leq b_L,  \\
h_i & = & H(\cv(\Bc^*_{i}))-H(\cv(\Bc^*_{i-1})) \leq K_L.
\label{eq:entropy_diff}
\end{eqnarray}

Assume that the algorithm outputs at $(l+1)^{th}$ iteration then it follows from \eqref{eq:block_diff} and \eqref{eq:entropy_diff} that 
\begin{equation}
|\Bc^*|=|\Bc^*_l|=\sum_{i=1}^{l} b_i \quad\quad\quad H(\cv(\Bc^*))=H(\cv(\Bc^*_l))=\sum_{i=1}^{l} h_i.
\label{eq:l_block}
\end{equation}
The analysis of algorithm is divided into two cases as follows.
\begin{itemize}
\item Case 1: [Assume that the algorithm exits without ever entering line 7.] We have \begin{equation} 
\begin{split}
h_i & = H(\cv(\Bc^*_{i}))-H(\cv(\Bc^*_{i-1}))= H(\cv(\Bc^*_i \setminus \Bc^*_{i-1} )| \cv(B^*_{i-1})) \leq (b_i-\rho_L)\frac{K_L}{b_L-\rho_L}. \end{split}
\end{equation}
From which we can obtain $b_i \geq \frac{h_i(b_L-\rho_L)}{K_L}+\rho_L$. Then, we derive the following:
\begin{equation}
\begin{split}
|\Bc^*|=|\Bc^*_l| & = \sum_{i=1}^{l} b_i \geq \sum_{i=1}^{l} \left( \frac{h_i(b_L-\rho_L)}{K_L}+\rho_L \right) = \frac{b_L-\rho_L}{K_L} \sum_{i=1}^{l} h_i + \rho_L l. 
\end{split}
\label{eq:B^*}
\end{equation}

Similar to proof in \cite{Papailiopoulos:Locally12}, we have

\begin{equation}
\sum_{i=1}^{l} h_i = \ceilb{\frac{\Mc}{K_L/(b_L-\rho_L)}}\frac{K_L}{b_L-\rho_L}-\frac{K_L}{b_L-\rho_L}
\quad\quad
l = \ceilb{\frac{\Mc}{K_L}}-1.
\label{eq:l}
\end{equation}

By combining \eqref{eq:B^*} with the above, we obtain 

\begin{equation}
|\Bc^*|\geq \ceilb{\frac{\Mc(b_L-\rho_L)}{K_L}} -1 + \rho_L \left(\ceilb{\frac{\Mc}{K_L}}-1\right).
\label{eq:case1}
\end{equation}

\item Case 2: [The algorithm exits after entering line 7.] For this case, we have 
\begin{equation}
H(\cv(\Bc^*_{i-1}),\cv(\Bc_i)) \geq \Mc.
\end{equation} 
 
In each iteration step we add $K_L$ entropy, hence we have 
\begin{equation}
l \geq \ceilb{\frac{\Mc}{K_L}}.
\label{eq:l-2}
\end{equation}

For $i \leq l-1$, same as before, we have $b_i \geq \frac{h_i(b_L-\rho_L)}{K_L}+\rho_L$. For $i=l$, we have $b_l \geq \frac{h_l}{K_L/(b_L-\rho_L)}$. Hence, it follows from \eqref{eq:l_block} and \eqref{eq:l-2} that 
\begin{equation}
\begin{split}
|\Bc^*| & = \sum_{i=1}^{l} b_i \geq \sum_{i=1}^{l-1} \left(\frac{h_i(b_L-\rho_L)}{K_L}+\rho_L\right) + \frac{h_l}{K_L/(b_L-\rho_L)} = \frac{b_L-\rho_L}{K_L}\sum_{i=1}^{l}h_i + (l-1)\rho_L \\
& \geq \frac{b_L-\rho_L}{K_L} \left(\ceilb{\frac{\Mc(b_L-\rho_L)}{K_L}}\frac{K_L}{b_L-\rho_L} -\frac{K_L}{b_L-\rho_L}\right) + \left(\ceilb{\frac{\Mc}{K_L}}-1\right)\rho_L \\
& =\ceilb{\frac{\Mc(b_L-\rho_L)}{K_L}} -1 + \left(\ceilb{\frac{\Mc}{K_L}}-1\right)\rho_L.
\end{split}
\label{eq:case2}
\end{equation}

Therefore, by combining \eqref{def:resilience}, \eqref{eq:case1} and \eqref{eq:case2} we have 

\begin{equation}
\rho \leq b - \ceilb{\frac{\Mc(b_L-\rho_L)}{K_L}} - \left(\ceilb{\frac{\Mc}{K_L}}-1\right)\rho_L.
\end{equation}
\end{itemize}
\end{proof}

%\end{appendix}
\end{appendices}

%%%%%%%%%%%%%%%%%%%%%%%%%%%%%%%%%%%%%%%%%%%%%%%%%%%%%%%%%%%%%%%%%%%%%%%%%%%%%%
%%%%%%%%%%%%%%%%%%%%%%%%%%%%%%%%%%%%%%%%%%%%%%%%%%%%%%%%%%%%%%%%%%%%%%%%%%%%%%%

\end{document}